\newtheorem{theorem*}{Theorem}[section]
\newtheorem{ass*}[theorem*]{Assumption}
\newtheorem{note*}[theorem*]{Note}
\newtheorem{lemma*}[theorem*]{Lemma}
\newtheorem{definition*}[theorem*]{Definition}
\newtheorem{proposition*}[theorem*]{Proposition}
\newtheorem{corollary*}[theorem*]{Corollary}
\newtheorem{remark*}[theorem*]{Remark}
\newtheorem{example*}[theorem*]{Example}
\numberwithin{equation}{section}
\def\bd{\begin{description}}
\def\ed{\end{description}}
\def\D2{\bbD_{2,\infty-}}
\def\D{{\bf D}}
\def\F{{\bf F}}
\def\cala{{\cal A}}
\def\calb{{\cal B}}
\def\cald{{\cal D}}
\def\cale{{\cal E}}
\def\calf{{\cal F}}
\def\calh{{\cal H}}
\def\calm{{\cal M}}
\def\caln{{\cal N}}
\def\calt{{\cal T}}
\def\calu{{\cal U}}
\def\cadlag{c\`adl\`ag\ }
\def\caglad{c\`agl\`ad\ }
\def\be{\begin{equation}}
\def\ee{\end{equation}}
\def\bea{\begin{eqnarray}}
\def\eea{\end{eqnarray}}
\def\beas{\begin{eqnarray*}}
\def\eeas{\end{eqnarray*}}
\def\bi{\begin{itemize}}
\def\ei{\end{itemize}}
\def\bd{\begin{description}}
\def\ed{\end{description}}
\def\l{\left}
\def\r{\right}
\newcommand{\bbD}{{\mathbb D}}
\newcommand{\reels}{\mathbb{R}}
\newcommand{\naturels}{\mathbb{N}}
\newcommand{\esp}{\mathbb{E}}
\newcommand{\proba}{\mathbb{P}}
\newcommand{\ito}{It\^{o}}
\begin{document}
 \title{Cointegration in high frequency data\footnote{We would like to thank an anonymous referee for helpful comments and advice. The research of Yoann Potiron is supported by Japanese Society for the Promotion of Science Grant-in-Aid for Young Scientists (B) No. 60781119 and a special grant from Keio University. The research of Simon Clinet is supported by Japanese Society for the Promotion of Science Grant-in-Aid for Young Scientists No. 19K13671.}}
\author{Simon Clinet\footnote{Faculty of Economics, Keio University. 2-15-45 Mita, Minato-ku, Tokyo, 108-8345, Japan. Phone:  +81-3-5427-1506. E-mail: clinet@keio.jp website: http://user.keio.ac.jp/\char`\~clinet/}   and Yoann Potiron\footnote{Faculty of Business and Commerce, Keio University. 2-15-45 Mita, Minato-ku, Tokyo, 108-8345, Japan. Phone:  +81-3-5418-6571. E-mail: potiron@fbc.keio.ac.jp website: http://www.fbc.keio.ac.jp/\char`\~ potiron}}
\date{This version: \today}

\maketitle

\begin{abstract}
In this paper, we consider a framework adapting the notion of cointegration when two asset prices are generated by a driftless It\^{o}-semimartingale featuring jumps with infinite activity, observed regularly and synchronously at high frequency. We develop a regression based estimation of the cointegrated relations method and show the related consistency and central limit theory when there is cointegration within that framework. We also provide a Dickey-Fuller type residual based test for the null of no cointegration against the alternative of cointegration, along with its limit theory. Under no cointegration, the asymptotic limit is the same as that of the original Dickey-Fuller residual based test, so that critical values can be easily tabulated in the same way. Finite sample indicates adequate size and good power properties in a variety of realistic configurations, outperforming original Dickey-Fuller and Phillips-Perron type residual based tests, whose sizes are distorted by non ergodic time-varying variance and power is altered by price jumps. Two empirical examples consolidate the Monte-Carlo evidence that the adapted tests can be rejected while the original tests are not, and vice versa.

\end{abstract}
\textbf{Keywords}: cointegration; deflation; high frequency data; It\^{o}-semimartingale; residual based test; truncation; unit root test
%\\ \textbf{JEL codes}: C13; C14; C32; C58
\\

\section{Introduction}
It is often the case that time series cruise componentwise, but that a linear combination of the components does not drift apart. Since the seminal papers of \cite{granger1981some} and \cite{engle1987co}, cointegration has spread across and way beyond the field of econometrics. The authors bring forward a residual based two-step strategy to test for the presence of cointegration. The first step corresponds to estimation of cointegrating relations via regression. The second step is closely related to test of unit roots. More specifically, residual based tests are designed to test the null of no cointegration by relying on a unit root test on the residuals. If the null of unit root test is not rejected, then the null of no cointegration is also not rejected. Among the class of unit root tests, standard Dickey-Fuller (DF) tests originally from \cite{dickey1981likelihood} and augmented procedure from \cite{said1984testing}, and Phillips-Perron tests from \cite{phillips1988testing}, are among the most popular. \cite{phillips1987time} shows that this type of unit root tests is typically robust to many weakly dependent and heterogeneously distributed time
 series. As for cointegration tests, a large sample limit theory for the residual based tests is investigated in \cite{phillips1990asymptotic}. In particular, the critical values tabulated for the pure unit root tests are altered due to the error made in the first step. In the cited paper, and to the best of our knowledge in the rest of the literature on cointegration, the asymptotics is low frequency in the sense that the time gap between two observations $\Delta $ is fixed while the horizon time $T \rightarrow \infty$.  
 
 \smallskip
There is a solid body of empirical work employing cointegration in  financial economics. In that field, the most common specification of cointegration definition is that all the components of the observed vector $x_t$ are unit roots (e.g. random walks) and that there exists a vector $\alpha$ so that $\alpha' x_t$ is stationary. Examples of application include and are not limited to nominal dollar spot exchange rates, e.g. \cite{baillie1989common} and \cite{diebold1994cointegration}, price discovery, e.g. \cite{hasbrouck1995one}, and pairs trading, e.g. \cite{caldeira2013selection} and the review of \cite{krauss2017statistical}. Although the major part of earlier empirical studies was confined to data observed on a daily basis, nowadays they frequently incorporate data observed on the intraday basis, see e.g. \cite{elyasiani2001interdependence}, \cite{hasbrouck2003intraday}, \cite{pati2011intraday} or \cite{yang2012intraday} among others. This increasing use of high frequency data is perfectly natural, and should even be encouraged given the basic statistical principle that one shall not throw away data. As a matter of fact, it echoes similar changes in other areas of the literature, the best example being the diversification and sophistication of efficient variance estimation measures over the past two decades.

\smallskip
Our concern in this paper is to back up the empirical use of high frequency data by theoretically validated high frequency robust estimation of cointegrating relations and tests. More specifically, the difference between the classical \emph{time series} framework and the \emph{high frequency} framework we consider is that in the latter $\Delta  \rightarrow 0$ while keeping $T \rightarrow \infty$, and the observed time series $x_t$ will be generated by a driftless It\^{o}-semimartingale. This environment, quite standard in high frequency financial econometrics, typically accommodates for a variety of stylized facts, such as time-varying variance featuring jumps, leverage effect and price jumps, all of which are salient in high frequency data. Price jumps can happen at deterministic times (e.g. macroeconomic news announcements) or not, are usually of random size, and quite frequent, at least ten a year on average according to Table 8 (p. 484) in \cite{huang2005relative}. For the purpose of simplicity, we restrict ourselves to the two dimensional case.

\smallskip
The question of testing for no cointegration with high frequency data is of practical relevance since some of its stylized facts can lead to significant distortions of the power of the usual tests. In an extensive Monte Carlo experiment, \cite{krauss2017power} implement ten leading cointegration tests in a variety of set-up tailored with high-frequency features. They use an AR(1) with normal innovations as benchmark, but also look at non-normality effects employing t-distributed innovations, GARCH effects, nonlinearities and price jumps. They typically find that among a cohort of high frequency stylized facts, price jumps deteriorate the power the most.

\smallskip
In fact, the current theoretical framework to test for no cointegration is quite restricted and only accommodates for ergodic time-varying variance. Spurious cointegration can occur in the presence of a mere jump in long term equilibrium variance, as documented by \cite{noh2003behaviour}. In other words, this means that non ergodic time-varying variance can distort the size of the tests. Obviously, time-varying variance is not solely a high frequency stylized fact, as \cite{sensier2004testing} report that most of the real and price variables in the dataset from \cite{stock1999comparison} reject the null hypothesis of constant variance against one regime shift alternative. As far as we know, the only attempt to accommodate for time-varying variance in a more flexible cointegration model is \cite{kim2010cointegrating}, but unfortunately in that paper the authors test the reverse null of cointegration.
 
 \smallskip
Our aim is to adapt the residual based DF procedure developed in \cite{phillips1990asymptotic} to test for the null hypothesis of no cointegration in high frequency data. More specifically, the residual based tests of the cited paper are not theoretically robust to the two aforementioned high frequency stylized facts pointed by the literature as distorting size and power -time-varying variance and price jumps-. Accordingly, we develop two separate adaptations, one for each feature, that we eventually combine with each other, as the two problems are not of the same nature and cannot simply be tackled with a unique straightforward adaptation. As far as we know, no cointegration test has been tuned to those two high frequency stylized facts, yet there are accordingly two very related fields from the literature. 
 
 \smallskip
The first active field is about testing for the presence of a unit root incorporating time-varying variance. \cite{cavaliere2008time} apply a time transformation to the data prior to use the standard unit root tests and retrieve the same asymptotics. \cite{cavaliere2007testing} employ the time deformation to simulate valid critical values for standard unit root tests. \cite{cavaliere2008bootstrap} and \cite{cavaliere2009heteroskedastic} consider a related method involving the wild bootstrap. Wild bootstrap tests formed from a feasible generalized least squares are developed in \cite{boswijk2018adaptive}. Finally, \cite{beare2018unit}  preestimate volatility and exploit it to deflate the returns prior the use of the standard tests. Unfortunately, we were not able to find track of the use of those nice methods in the context of residual based cointegration test. In addition, the theoretical framework of all those papers is restricted to deterministic volatility and no price jumps, except for \cite{cavaliere2009heteroskedastic} who allow for random variance.
 
 \smallskip
Another very dynamic field is about residual based cointegration tests with regime changes, which is typically related to price jumps in the It\^{o}-semimartingale setup. In a univariate context, unit root tests were designed in \cite{perron1989great}, \cite{perron1990testing}, \cite{banerjee1992recursive}, \cite{perron1992nonstationarity} and \cite{zivot1992further}. \cite{gregory1996residual} extend the tests in a cointegration framework. In all those papers, the number of shifts is at most equal to unity. Moreover, \cite{hatemi2008tests} allows for two possible breaks. Finally, \cite{maki2012tests} proposes tests which permits an arbitrary number of breaks in principle, although the method is seldom used with more than five breaks in practice. The strong limitation of the regime change approach is that jumps are required to happen at deterministic times, with deterministic size, known number of jumps, and not very often, all of which are not consistent with the aforementioned high-frequency stylized facts.
 
 \smallskip
 To obtain the robustness of the residual based test to time-varying variance, the most natural candidate consists in a residual based test based on the aforementioned time-varying variance robust unit root tests. Unfortunately, a transformation of data solely at the unit root test step entails hybrid-behaved tests to the extent that the limit distribution is unidentifiable, at least to us. To circumvent this difficulty, we deflate the returns prior to the two steps. As for that to price jumps, we simply consider a truncation method as in \cite{mancini2009non}, which is commonly used in the literature on high frequency econometrics when dealing with jumps with random size and times.
 
 \smallskip
Our theoretical contribution is divided into two parts. First, we provide a regression procedure based on deflated and truncated asset price returns in order to estimate the cointegrated system and establish the related consistency and central limit theory when there is cointegration. This is related to Theorem \ref{thmH1} in Section \ref{sectionLimitTheoryOLS}. Second, we develop a DF type residual-based cointegration test, also based on deflated and truncated observations, along with its central limit theory in the absence of cointegration and under local alternatives. We also provide with an equivalent of the expression in the presence of cointegration (see Theorem \ref{thmH0DF} and Proposition \ref{propH1}). In particular we show that under the null hypothesis of no cointegration the limit distribution of the statistic is that of a classical DF test, and that no local power is lost due to the truncation and the deflation.

 \smallskip
 The finite sample from this paper corroborates that from cited papers and our theoretical findings. On the one hand, a typical environment of cointegrated assets observed at high frequency distorts badly size and power of the DF or Phillips-Perron type residual based unit root tests, and as expected we can isolate the effect of time-varying variance as impacting the former, and that of price jumps as altering the latter. On the other hand, the proposed test size is appropriate and its power is good in the simultaneous presence of both stylized facts. Two empirical examples illustrate the fact that the proposed test can be in disagreement with the original tests, indicating the practical relevance of asset prices truncation and deflation. 
 
 \smallskip
One big limitation of our technique is that we have no drift. Such a simple assumption is quite restrictive since the concept of cointegration is mainly used to analyze the long run comovements of multiple time series and the role of stochastic drift can be very important in the long run relationship. Correspondingly, we examine the case of \emph{linear} drifting \ito-semimartingales and show that, even though theoretical derivations are possible, the limit distribution of the statistic is quite complex, indicating that combining drift and deflation is a difficult task. We are well aware that linear drift is rather a strong assumption, but overall there is a lack of literature on drift. Some exceptions include \cite{perron2016residuals}, which deals with optimal methods for unit-root and cointegration tests when a trend is present (but no time-varying volatility), and \cite{beare2018unit} (Section 4.3), which gives a general detrending method for time-linear trends in the presence of time-varying variance for the unit-root test. Theorem 4.3 from the latter work gives the related theoretical asymptotic results. In particular, the author already noticed that the modified limit distribution under the null strongly depends on the shape of volatility and so has to be computed numerically every time the method is used. Our own method detailed in Section 4.2 uses a similar detrending device (adapted to our own high-frequency scheme). Our Theorem 4.5 essentially generalizes the result of \cite{beare2018unit} to the case of cointegration and presents the same feature, i.e that the detrended data yields limit distribution under the null which explicitly depends on the volatility curve. To our knowledge, there is no method (even in unit root literature) which deals with time linear trends in the presence of time-varying volatility and which yields a limit distribution independent of the volatility process. In addition, finite sample sheds light on the fact that drift does not seem to affect both size and power of all the considered residual based tests, at least in a range of realistic configurations.

\smallskip
 Another obvious limitation of our approach is that the framework considered rules out market microstructure noise. This is mitigated by the fact that the vast majority of empirical studies related to cointegration operating with financial time series does not sample with a frequency higher than five minutes so that if the asset is liquid enough the data are reasonably free of market microstructure noise (see, e.g., \cite{ait2019hausman}). To temper as much as possible the effect of microstructure noise, we do not sample faster than every ten minutes in both our numerical and empirical studies.
 
 \smallskip
 The remaining of this paper is structured as follows. The framework which is a natural adaptation to high frequency data is given in Section 2. Estimation of cointegrated relations, based on regression on the truncated and deflated price, and its related limit theory under cointegration is provided in Section 3. A DF-type test of the null hypothesis that there is no cointegration against the alternative that there is cointegration, together with its limit theory, is developed in Section 4. In particular, this test's robustness to time-varying variance and price jumps, which is based respectively on deflation and truncation, is established. Section 5 is devoted to a Monte Carlo experiment, to shed light on the size and power twist of the original DF and Phillips-Perron based cointegration tests and the good behavior of the adapted DF test in a variety of realistic configurations. In Section 6, a brief empirical study, which corroborates the fact that the proposed test is not always in accordance with the original tests, is conducted. We conclude in Section 7. Proofs and part of the numerical results have been relegated to the Appendix for the sake of clarity.

 \section{The framework: a natural adaptation to high frequency data} \label{sectionCointModel}
In this section, we introduce our general framework, which in particular accommodates the definition of \emph{no cointegration} (\cite{engle1987co}) when two driftless It\^{o}-processes including jumps with infinite activity are observed synchronously and regularly. Our introduced framework also specifies the notion of \emph{cointegration} and \emph{weak cointegration}.

\smallskip
We introduce a few key concepts from the aforementioned paper and the existing literature on cointegration for time series that will help motivate the framework of our own study. For the sake of clarity, we restrict ourselves to the case of a pair of processes, but all the definitions can be extended to the multivariate case with no major difficulty. Let us consider two unit root time series $x_t$ and $y_t$ (i.e. with explosive variance and with stationary increment processes $\Delta x_t$ and $\Delta y_t$). As pointed out in \cite{engle1987co}, in general, any linear combination $y_t - \alpha x_t$ will be again a unit root process drifting away from zero. In that case, $x_t$ and $y_t$ are said not to be cointegrated. However, it may also happen that for some $\alpha$, the time series $y_t- \alpha x_t$ does not wander far from zero, and not only its increments but also the series itself is stationary. The couple
$(x_t,y_t)$ is then said to be cointegrated with cointegration vector $(1,-\alpha)^T$. When conducting tests, and for the sake of tractability, both notions are often naturally embedded (see e.g. model (4.7) of \cite{engle1987co}) in an AR(1) model specification as follows. We assume that $x_t$ is a unit root process, and that
\bea \label{cointTimeSeries} 
y_t = \alpha x_t + \epsilon_t, \textnormal{ with } \epsilon_t = \rho \epsilon_{t-1} + u_t,
\eea 
where $u_t$ is a non-trivial stationary process possibly correlated with $\Delta x_t$. Then, $0 < \rho < 1$ yields a cointegrated system, whereas $\rho = 1 $ implies that $\epsilon_t$ is a unit root process, hence no cointegration. Moreover, the regime $0 < \rho < 1 $ with $\rho \to 1$, yields a process $\epsilon_t$ which is a nearly integrated process (as first introduced in \cite{nabeya1994local}), and accordingly the system (\ref{cointTimeSeries}) can be said weakly cointegrated. Residual based tests for the null of no cointegration usually pre-estimate $\alpha$ by, for instance, an ordinary least squares (OLS) regression, and then run a unit root test (i.e $\rho=1$ versus $\rho<1$) on the estimated residual $\widehat{\epsilon}_t = y_t - \widehat{\alpha} x_t$. As detailed in \cite{phillips1990asymptotic}, it is of importance to mention that such a two-step procedure affects the limit distribution of the test statistic so that testing for cointegration does not amount to directly testing for a unit root in $\widehat{\epsilon}_t$. In particular and of practical relevance, the critical values are altered. This deviation from the unit root case stems from the inconsistency of $\widehat{\alpha}$ (hence $\widehat{\epsilon}_t$) under the null of no cointegration.

\smallskip
In view of the above discussion, our first goal consists in introducing a model similar to (\ref{cointTimeSeries}) where now $\Delta x_t$ and $u_t$ are replaced by increments of driftless \ito-semimartingales. We return to the case of drifting processes in a detailed discussion at the end of Section \ref{SectionDrift}. We assume that we observe regularly (i.e. at times $t_0 := 0, t_1 := \Delta, \cdots, t_n := n \Delta$ with $\Delta := T/n$) between 0 and the horizon time $T$ (depending on $n$) two \cadlag (right continuous with left limits) processes $X$ and $Y$.\footnote{A full specification of the model actually involves the stochastic basis $\calb=(\Omega,\proba,\calf,\F)$, where $\F=(\calf_t)_{t\geq0}$ is a filtration satisfying the usual conditions and $\calf = \vee_{t \geq0} \calf_t$. We assume that all the processes are $\F$-adapted. Also, when referring to It\^{o}-semimartingale, we automatically mean that the statement is relative to $\F$.} For any process $A$, we use the following conventions for $i\in\{0,\cdots,n\}$:
\bea
A_i & := & A_{t_i},\label{notation1}\\ 
\Delta A_i & := & A_i - A_{i-1} \text{ for } i \neq 0,\label{notation2}\\ 
\Delta A_0 & := & 0.\label{notation3}
\eea 
In what follows, we assume that $X$ and $Y$ may be further decomposed as
\bea \label{defXYBasic}
X_t = X_t^c + J_t^X \textnormal{ and } Y_t = Y_t^c + J_t^Y, \textnormal{ }t\in[0,T]
\eea 
where $X^c$ and $Y^c$ are the continuous parts of $X$ and $Y$, and $J^X$ and $J^Y$ are pure jump processes such that, for $U \in \{X,Y\}$, $t \in[0,T]$,
$$ J_t^U = \int_{[0,t] \times E} \delta^U (s,z) \mu^U(ds, dz),$$
where $\mu^U$ is a Poisson random measure on $\reels_+ \times E$ for $E$ some auxiliary Polish space, $\nu^U$ is the compensator of $\mu^U$ of the form $\nu^U(ds,dz)=ds \otimes \lambda^U(dz) $ where $\lambda^U$ is a $\sigma$-finite measure, and where $\delta^U$ is a predictable function on $\Omega \times \reels_+ \times E$. Moreover, we assume that there exists $r \in [0,1)$ such that 
\bea \label{assJumps}
\sup_{t \in \reels_+} \esp\int_E (|\delta^U(t,z)|^r \vee |\delta^U(t,z)|^8) \lambda^U(dz) < +\infty.  
\eea 
In particular, although the jump processes may feature an infinite number of jumps on bounded time intervals, (\ref{assJumps}) ensures that the jumps are summable on $[0,T]$ and of order at most $T$, since it implies that
\bea \label{momentJumps}
\esp \sum_{0 <s \leq T} |\Delta J_s^U| \leq  \esp \sum_{0 <s \leq T} |\Delta J_s^U|^r \vee |\Delta J_s^U|^8 \leq KT
\eea 
for some constant $K \geq 0$. The summability property of the jumps is used extensively in our proofs (see e.g Lemma \ref{lemmaJumps} in the Appendix) in order to control most deviations involving jumps. Unfortunately, results such as those in Lemma \ref{lemmaJumps} may not hold when the jump processes are not of finite variation. As far as we know, it is not clear whether relaxing (\ref{assJumps}) to the case of non-summable jumps (for instance assuming $r \in [0,2)$ only) is possible in the present framework, which is why we leave it for future research. In particular, note that it is well-known that when the jumps are not summable, slower rates of convergence for even simple quantities such as threshold realized volatility should be expected (\cite{jacod2014remark}), indicating that this case is non-standard.   

\smallskip
We now assume that $X$ and $Y$ satisfy a relation of the same nature as (\ref{cointTimeSeries}). Assuming first that $J^X = J^Y = 0$, we naturally adapt (\ref{cointTimeSeries}) as follows. We assume that  there exist $c_0, \alpha_0 \in \reels$ such that for any $i \in \{1,\cdots,n\}$, we have

\bea \label{cointContinuous}
Y_i^c = c_0 + \alpha_0 X_i^c +  \epsilon_i, \textnormal{ with } \epsilon_i = \rho \epsilon_{i-1} + \Delta Z_i, \textnormal{ }\epsilon_0 = Z_0 = 0
\eea 
where $\rho \in [0,1]$, and may depend on $n$, and where $X^c$ and $Z$ are two continuous It\^{o}-martingales of the form 
\begin{eqnarray}
\label{XtcZt}
X_t^c & = & X_0 +   \int_0^t \sigma_{s/T}^M\sigma_s^X dW_s^X  \textnormal{ and } Z_t  =    \int_0^t\sigma_{s/T}^M \sigma_s^Z dW_s^Z, t \in [0,T],
\end{eqnarray}
where $\sigma^X$ and $\sigma^Z$ are \cadlag adapted processes, and $W^X$ and $W^Z$ are Brownian motions featuring possibly non-trivial high frequency correlation $d\langle W^X, W^Z \rangle_t = r_t dt$. Therefore, at time $t \in [0,T]$, and up to the multiplicative term $(\sigma_{t/T}^M)^2$, the two dimensional process $(X,Z)$ features a squared volatility equal to
$$ \Sigma_t = \left(\begin{matrix} (\sigma_t^X)^2 & r_t \sigma_t^X \sigma_t^Z\\ r_t \sigma_t^X \sigma_t^Z &  (\sigma_t^Z)^2\end{matrix}\right).$$
Finally, $\sigma^M$, which is the common deterministic volatility component, is a \caglad (left-continuous with right limits) function from $[0,1]$ to $\reels_+ - \{0\}$.  The component $\sigma^M$ can be interpreted as the market volatility (i.e. common to all the stocks), whereas $\sigma^X$ and $\sigma^Z$ correspond to the idiosyncratic component of volatility. For example, $\sigma^M$ can be a linear trend, while $\sigma^X$ and $\sigma^Z$ can both be a product of daily U-shape and random stochastic component such as Heston model. Further examples are considered in our finite sample analysis.

\begin{remark*}
The components $\sigma^X$ and $\sigma^Z$ may differ, will be assumed ergodic and typically account for the long time regularities (e.g. seasonality) of $X$ and $Z$. Having ergodic returns is in line with most of the literature on cointegration, see for instance \cite{phillips1990asymptotic}, or the more recent work of \cite{perron2016residuals}. On the other hand, $\sigma^M$ encompasses possibly non ergodic trends in volatility and is assumed to be a common factor in $X$ and $Y$. As far as we know, and as detailed in Section \ref{sectionTest}, if the non ergodic components differ in $X$ and $Y$, then constructing a test statistic which is numerically reliable and whose distribution is identifiable under the null of no cointegration remains an open and difficult question that we set aside in this paper. Note that adding such a non ergodic component scaled in time from 0 to $T$ is common practice in the literature on tests for unit root processes, as in \cite{cavaliere2005unit}, \cite{cavaliere2007testing} and more recently \cite{beare2018unit} among others. In our case, however, $\sigma^M$ is taken \cadlag whereas earlier works on unit root processes assumed the function to satisfy a Lipschitz condition except for, at most, a finite number of points of discontinuity. Note also that, as mentioned in the aforementioned papers, the fact that $\sigma^M$ is assumed deterministic can be easily relaxed to random and independent of the main filtration $\F$. Then all the convergences can be taken conditionally to $\sigma^M$. Finally, to our knowledge, there is no existing literature on a test of no cointegration which is robust to the presence of a common non ergodic volatility component. \\
\end{remark*}
Just as $\Delta X_i^c$ is the continuous time counterpart of $\Delta x_t$, $\Delta Z_i$ now plays the role of $u_t$ in (\ref{cointTimeSeries}). Note also that the presence of an intercept $c_0$ in the regression is just a convenient way to center the residual process $\epsilon$ without loss of generality. Moreover, as $\rho$ controls how close the residual is from a unit root process in (\ref{cointTimeSeries}), $\rho$ now controls how close $\epsilon$ is from an \ito-martingale in (\ref{cointContinuous}), with the two extreme cases being $\rho = 1$ where $\epsilon_i = Z_i$, and $\rho = 0$ where $\epsilon_i = \Delta Z_i$ for $i \in \{1,\ldots,n\}$. When $0 < \rho < 1$, $\epsilon_i$ lies somewhere between  an It\^{o}-martingale and an increment of an It\^{o}-martingale. 

\begin{remark*}
(limitation in terms of economic/statistical modeling) Note that as the model was designed with the (necessary and non straightforward) development of asymptotic theories in mind, there is (at least) an important side-effect feature in terms of economic/statistical modeling. If $0 \leq \rho < 1$, $\epsilon_i$ and $Y^c_i$ are $\Delta$ dependent. 
Highly related literature working under the cointegration error, $\epsilon_i$, being
independent of $\Delta$ includes \cite{bandi2014nonparametric}, \cite{kanaya2011nonparametric} and \cite{kim2018unit}.
\end{remark*}

In general, when $J^X \neq 0$ or $J^Y \neq 0$, it would be natural to simply replace $X^c$ and $Y^c$ by $X$ and $Y$ in (\ref{cointContinuous}), but it turns out that imposing such a constraint on the jump processes would imply that $\Delta J^Y = \alpha_0 \Delta J^X$. If the jumps are seen as structural breaks in the processes $X$ and $Y$, it is a very strong assumption which would require substantial support from empirical data. More importantly, letting $J^X$ and $J^Y$ free of constraint does not affect our strategy (and the related limit theory) for analyzing $X$ and $Y$, which will consist in first getting rid of the jump components using the truncation approach of \cite{mancini2009non} and then directly work with the estimated continuous components. Accordingly, for the sake of generality, we keep (\ref{cointContinuous}) even in the presence of jumps. The cointegration relation between $X$ and $Y$ yields for $i\in \{1,\ldots,n\}$ that
\bea \label{cointJumpContinuous}
Y_i = c_0 + J_i^Y - \alpha_0J^X_i + \alpha_0 X_i + \epsilon_i,
\eea 
which can be seen as cointegration (if $\rho <1$) with multiple \textit{level shifts}. Cointegration with breaks has been studied in \cite{gregory1996residual} (see Model 2) in the case of a single shift, and extended to the case of an arbitrary large (but known) number of deterministic breaks in \cite{maki2012tests}. In Equation (\ref{cointJumpContinuous}), the shifts are $\Delta J_s^Y - \alpha_0 \Delta J_s^X$ for $s \in [0,T]$, which may be in infinite number, are of random sizes, and can feature endogeneity.   

\medskip 
We now adapt the notion of no cointegration introduced in \cite{engle1987co} and discussed above to the case of driftless \ito-semimartingales.
\begin{definition*} \label{defNoCoint}(no cointegration)
Two \cadlag processes $X$ and $Y$ are said not cointegrated if any linear combination $Y - \alpha X$, $\alpha \in \reels$, is a driftless It\^{o}-semimartingale whose volatility component $\sigma$ is such that $\proba-\liminf_{T \to +\infty} T^{-1}\int_0^T \sigma_s^2ds > 0$.    
\end{definition*}

Definition \ref{defNoCoint} is thus a straightforward adaptation where time series have been replaced by \cadlag processes and unit root processes have been replaced by driftless \ito-semimartingales with non-trivial volatility components so that they are indeed explosive as $T \to +\infty$. Let us now get back to the model (\ref{cointContinuous}) and turn our attention to the description of different settings of $\rho$ and their impact on the relationship between $X$ and $Y$. Following the time series framework, $Y$ and $X$ are not cointegrated if $\rho = 1$ and, of course, if for any $x \in \reels^2 - \{0\}$, $\proba-\liminf_{T \to +\infty} T^{-1}x^T\int_0^T \Sigma_s^2dsx > 0$, since in that case, (\ref{cointContinuous}) reads for any $i \in \{1,\ldots,n\}$
\beas 
Y_i^c = c_0 + \alpha_0 X_i^c + Z_i.
\eeas 
We now turn our attention to the notion of cointegration. We follow again the time series case and say that $X$ and $Y$ satisfying (\ref{cointContinuous}) are cointegrated if $\rho \in [0,1)$ and does not depend on $n$. In that case, note that this implies the existence of $c_0, \alpha_0 \in \reels$ such that 

\beas 
Y_i^c = c_0 + \alpha_0 X_i^c + \epsilon_i,  
\eeas
where $\epsilon_i$ is not the value of an \ito-martingale (and is of one order of magnitude smaller). Finally, we will also consider the intermediary situation where $X$ and $Y$ are weakly cointegrated, which corresponds to the case where $\rho = 1 - \beta/n$ for some $\beta > 0$. Henceforth we will accordingly always assume that $X$ and $Y$ are generated according to one of the following setting:     
\begin{enumerate}[{(}i{)}]
    \item \textbf{Cointegration} $0 \leq \rho < 1 $ (independent of $n$).  
    \item \textbf{Weak cointegration} $\rho = 1- \beta/n$, with $\beta > 0$. 
    \item \textbf{No cointegration} $\rho = 1$. 
\end{enumerate}

We end this section with a brief remark on the connection between the above definition of cointegration and continuous-time mean reverting residuals. It sounds indeed reasonable to expect that if $X^c$ and $Y^c$ are such that for any $t \in [0,T]$
$$ Y_t^c = c_0 + \alpha_0 X_t^c + \cale_t $$
where $\cale$ is mean-reverting around $0$, i.e if for instance
$$ d\cale_t = - \theta \cale_t dt + \sigma^\cale(t)dW_t$$
for some $\theta >0$ and where $W$ is a standard Brownian motion, then $\cale$ remains of order $O_\proba(1)$ for any $t \in [0,T]$ and $Y^c$ and $X^c$ should be cointegrated to a certain degree. The next remark shows that this mean-reverting residual framework precisely corresponds asymptotically to the case of weak cointegration, where $X^c$ and $Y^c$ satisfy (\ref{cointContinuous}) for $\rho = 1-\beta/n$ with $\beta = \theta T$, and where the observation frequency $n \to +\infty$. 
\begin{remark*} \label{rmkOU} 
Assume that $X$ and $Y$ satisfy (\ref{cointContinuous}) with cointegration parameter $\rho = 1 - \theta T/n$ with $\theta >0$. Then, the interpolating residual 
\beas 
\epsilon_t = \epsilon_{i} \textnormal{ for } t \in [t_i, t_{i+1}) 
\eeas
is such that when $n \to +\infty$
$$ \epsilon \to^{u.c.p} \cale = \left(\int_0^{t} e^{-\theta (t-s)} dZ_s\right)_{t \in [0,T]} $$
where $\to^{u.c.p}$ stands for the uniform convergence in probability on any compact. Therefore, $\cale$ enjoys the mean-reverting dynamics
$$ d\cale_t = -\theta \cale_tdt + \sigma^Z(t) \sigma^M(t/T) dW_t^Z, t \in [0,T].$$
In particular, when $\sigma^Z(t) = \sigma^Z$ and $\sigma^M=1$, $\cale$ is an Ornstein-Uhlenbeck process.
\end{remark*}
\begin{proof}
We have the representation $\epsilon_t = \sum_{j=1}^n \rho^{\Delta^{-1}(t_i-t_j)} (Z_{t_i \wedge t_j} - Z_{t_i \wedge t_{j-1}} )$ for $t \in [t_i, t_{i+1})$. Therefore, the convergence towards $\cale$ is a direct consequence of the convergence $\rho^{ \Delta^{-1} s} = (1-\theta  T/n)^{ns/T}
\to e^{-\theta s}$  for any $s \in [0,T]$ along with Theorem I.4.31(iii) from \cite{jacod2013limit} (p. 47). 
\end{proof}

\section{Estimation of cointegrated systems}
\subsection{Construction of the estimator}
We now focus on estimating the couple $(\alpha_0, c_0)$ based on the discrete observations of $X$ and $Y$. Naturally, one can expect $(\alpha_0, c_0)$ to be identifiable only when $X$ and $Y$ are cointegrated. Indeed, we will see that when $\rho = 1$ (i.e. non cointegration case), our proposed estimator is inconsistent. Accordingly, any estimation of the cointegration parameters is untrustworthy without performing a test of no cointegration, question that we set aside in this section, and that is treated in Section \ref{sectionTest}. In any case, constructing the estimator does not require any knowledge whatsoever about the cointegration level $\rho \in [0,1]$. 

\smallskip
We first consider the case where $J^Y = J^X = 0$ and $\sigma^M = 1$. We adapt the classical OLS estimator proposed in \cite{engle1987co}, and resulting from (\ref{cointContinuous}) seen as a linear regression where $\epsilon$ is the noise process. Recall that $X$ and $Z$ may be correlated, so that the regression model induced by (\ref{cointContinuous}) features endogeneity. In particular, this rules out the alternative regression based on the high-frequency \textit{returns} of $X$ and $Y$
\bea \label{olsReturn} 
\Delta Y_i^c = c_0 + \alpha_0 \Delta X_i^c +  \Delta \epsilon_i, \textnormal{ with } \Delta \epsilon_i = (\rho-1) \epsilon_{i-1} + \Delta Z_i,
\eea 
because $\Delta Y_i^c$, $\Delta X_i^c$ and $\Delta \epsilon_i$ being of the same order $\sqrt{\Delta}$, the OLS estimator based on (\ref{olsReturn}) would be inconsistent due to non-zero correlation between $\Delta \epsilon_i$ and $\Delta X_i^c$. Conversely, the regression based on (\ref{cointContinuous}) is robust to endogeneity because $X_i$ and $Y_i$ are of order $\sqrt{T}$ whereas when $\rho < 1$, $\epsilon_i$ remains of order $\sqrt{\Delta}$.

\smallskip
In general, $X$ and $Y$ contain jumps and a non-constant common volatility component. Accordingly, we now estimate $(\alpha_0, c_0)$ in a three-step procedure consisting in first getting rid of those two features and then applying the aforementioned OLS estimation. At this point, in view of the representation (\ref{cointJumpContinuous}), it seems natural to adapt the methodology of \cite{gregory1996residual} and derive a modified OLS estimator which estimates and cancels the effect of the jumps seen as level shifts in (\ref{cointJumpContinuous}). However, the time required to run the break-robust OLS estimation greatly increases with the number of breaks. \cite{maki2012tests} proposed an alternative and less time-consuming method, but it also presents several drawbacks. First, the number of breaks (or at least an upper bound $k$) must be known. Second, the limit distribution of the test statistic depends on $k$, meaning that it may be necessary to calculate critical values if $k$ is larger than five, which is the highest value for which they have been reported. Finally, the method is supported by a numerical study only (again, for models with five breaks at most). Since we allow for a potentially high number of jumps in (\ref{cointJumpContinuous}), both approaches are inadapted, which is why we henceforth adopt the truncation method of \cite{mancini2009non}. As explained in the asymptotic theory section, it is independent of the number of shifts and robust to all the aforementioned features of the jumps, both for estimation and test. Accordingly, we remove the increments of $X$ and $Y$ such that \textit{at least} one of them is greater than a given threshold in absolute value. More precisely, for $U \in \{X,Y\}$, we compute the truncated process $\calt(U) = (\calt(U)_0,\ldots, \calt(U)_n)$ such that $\calt(U)_0 = U_0$, and for $i \in \{1,\ldots,n\}$
\beas 
\Delta \calt(U)_i = \Delta U_i \mathbb{1}_{\{ |\Delta X_i| \leq a 
\Delta^{\overline{\omega}}\} \cap \{ |\Delta Y_i| \leq a 
\Delta^{\overline{\omega}}\}},
\eeas 
that is
\beas
\calt(U)_i = U_0 + \sum_{j=1}^i \Delta U_j \mathbb{1}_{\{ |\Delta X_j| \leq a 
\Delta^{\overline{\omega}}\} \cap \{ |\Delta Y_j| \leq a 
\Delta^{\overline{\omega}}\}},
\eeas 
for some constant $a > 0$ and some exponent $\overline{\omega} \in (0,1/2)$ satisfying additional constraints stated in the next section. 
\begin{remark*} The practitioner should keep in mind that the choice of the threshold tuning parameters $(a,\overline{\omega})$ is very important in practice. On the one hand, if the threshold is too loose, then some small jumps will
not be properly removed. On the other hand, when the truncation is tight, some
returns will be mistakenly truncated away. From a theoretical perspective, both constraints are crucial, and the power of the test procedure will be badly affected while the cointegration estimator may become inconsistent if they are not satisfied. From a numerical perspective, however, it seems that not being able to remove jumps distorts the test procedure far more than removing too many returns. Details about the constraints that the truncation exponent $\overline{\omega}$ must satisfy can be found in Assumption $\textnormal{\textbf{[C]}}$. %On the other hand, truncating returns away may cause some small bias and additional standard deviation but a priori is inoffensive.
\end{remark*}

Second, we deflate the returns of both truncated processes $\calt(X)$ and $\calt(Y)$ by a consistent estimator $\sqrt{C_i}$ of $\sigma_{t_i-}^M$ up to some multiplicative constant. The procedure is similar to what was proposed in \cite{beare2018unit} for the case of a unit root process. Hereafter, we take $C_{i}$ as the standard local realized volatility on the truncated returns of $X$ (Using the returns of $Y$ would yield an estimator of $\sigma_{t_i-}^M$ up to a coefficient which depends on $\rho$). First, for two indices $0 \leq l < k < i$, $i\in \{1,\ldots,n\}$, we define 
\bea \label{defRVLocal}
RV_{i,k,l} = \sum_{j=(i-k) \vee 1}^{(i-l-1) \vee 1} \Delta X_j^2 \mathbb{1}_{\{ |\Delta X_j| \leq a \Delta^{\overline{\omega}} \}}, 
\eea 
where $a$ and $\overline{\omega}$ were defined before. Next, we take $k = [T^\gamma \Delta^{-1}]$, $l = [T^{\gamma'} \Delta^{-1}]$ (which both implicitly depend on $n$), where $[x]$ is the floor of $x$, and $0 < \gamma' < \gamma < 1$. The local window considered for (\ref{defRVLocal}) is thus such that the number of observations $k-l \to+\infty$ and at the same time the length of the window $(k-l)\Delta = o(T)$. Moreover, realized volatility is calculated over the interval $[t_{i-k},t_{i-l-1}]$ and not $[t_{i-k},t_{i-1}]$ in order to preserve the martingale structure of some transformations of $\epsilon$ when $\rho < 1$ and thus circumvent some technical difficulties that arise in the proofs. We then define for $i \in  \{1,\ldots,n\}$ 
		\beas
		C_{i} &=& T^{-\gamma}RV_{i,k,l} \textnormal{ if } RV_{i,k,l} > 0 \textnormal{ and } i > 2k , \\
		C_{i} &=&+\infty \textnormal{ otherwise. }   
		\eeas 

 Given that $l$ is negligible with respect to $k$, it does not affect the limit theory of $C_i$. We then compute the deflated version of $\calt(U) \in \{\calt(X),\calt(Y)\}$, $\calt( U)^{def}$ such that $\calt (U)_0^{def} = U_0$, and for $i \in \{1,\ldots,n\}$,
%, we construct $B_n$ blocks of length $k_n$, and
$$ \Delta  \calt (U)_i^{def} = \frac{\Delta \calt (U)_i}{\sqrt{C_{i}}},$$
that is
\beas 
\calt (U)_i^{def} = U_0 + \sum_{j=1}^i \frac{\Delta U_j}{\sqrt{C_{j}}} \mathbb{1}_{\{ |\Delta X_j| \leq a 
\Delta^{\overline{\omega}}\} \cap \{ |\Delta Y_j| \leq a 
\Delta^{\overline{\omega}}\}}.
\eeas
Key to our analysis is that both operations $\calt$ and '$def$' naturally preserve the cointegration relationship, in the sense that for any $i \in \{1,\ldots,n\}$
\bea \label{cointModified}
\calt(Y)_i^{def} = c_0 + \alpha_0 \calt(X)_i^{def} +  \calt(\epsilon)_i^{def}, 
\eea
with the new residual 
\beas 
\calt(\epsilon)_i^{def} =  \sum_{j=1}^i \frac{\Delta \epsilon_{j}}{\sqrt{C_j}} \mathbb{1}_{\{ |\Delta X_j| \leq a 
\Delta^{\overline{\omega}}\} \cap \{ |\Delta Y_j| \leq a 
\Delta^{\overline{\omega}}\}}. 
\eeas 
Finally, for two processes $A,B$, their associated OLS estimator is defined as
$$ \textnormal{OLS}[A,B] = \left(  \frac{\sum_{i=1}^n (B_i - \overline{B})(A_i - \overline{A})}{\sum_{i=1}^n (A_i - \overline{A})^2},\overline{B} -\frac{\sum_{i=1}^n (B_i - \overline{B})(A_i - \overline{A})}{\sum_{i=1}^n (A_i - \overline{A})^2}\overline{A}\right)$$
with $\overline{A} = n^{-1}\sum_{i=1}^n A_i$ and $\overline{B} = n^{-1}\sum_{i=1}^n B_i$. The general cointegration estimator is defined as
\begin{eqnarray}
\label{alphahatexplicitdef}
(\widehat{\alpha},\widehat{c})  = \textnormal{OLS}[\calt (X)^{def}, \calt (Y)^{def}].
\end{eqnarray}

\begin{remark*}
When $J^X = J^Y = 0$ and in the absence of truncation, $(\widehat{\alpha},\widehat{c})$ is the OLS estimator of a the linear transformation of $X$ and $Y$ where their respective returns have been multiplied by the weights $w_i = C_i^{-1/2}$. This is similar (yet not equal) to the GLS of \cite{kim2010cointegrating}, where the authors pre-estimate the time-varying variance of the noise $\epsilon$ by a standard OLS, and then construct the associated GLS cointegration estimator consisting in putting similar weights directly in front of $X_i$ and $Y_i$.   
\end{remark*}

\subsection{Assumptions and high-frequency framework}

We now proceed to give an asymptotic framework along with reasonable conditions under which the OLS estimator introduced in (\ref{alphahatexplicitdef}) is consistent (assuming, of course, $\rho <1$, i.e. cointegration). We also give a stronger setting on the jump processes which ensures a central limit theory for $(\widehat{\alpha},\widehat{c})$.
We will use the same framework when testing for no cointegration in the next section. Our first assumption specifies the high frequency asymptotics that is considered in this paper.\\

\par\noindent\textbf{Assumption }\textbf{[A]:} $n \to + \infty$, $T \to +\infty$, $\Delta = T/n \to 0$. \\

\par\noindent Such a double asymptotic is consistent with the high-frequency context ($\Delta \to 0$) and the fact that cointegration is a long-run phenomenon ($T \to +\infty$). Next, we assume that the volatility matrix $\Sigma$ has bounded moments up to some order $p_0$ and is ergodic.\\

\par\noindent\textbf{Assumption }\textbf{[B]:}
There exists $p_0 \geq 8$ such that $\sup_{t \in \reels_+} \esp \|\Sigma_t \|^{p_0} < +\infty$, where for a matrix $M$, $\|M\| := \sum_{i,j} |m_{i,j}|$. Moreover, there exists a positive definite matrix $\Omega = (\omega_{ij})_{1\leq i,j \leq 2}$ such that 
\bea  \epsilon(T) := \sup_{u\in [0,1]} \esp\left| \frac{1}{T} \int_{uT}^{T} \Sigma_tdt - (1-u)\Omega \right|^2  \to 0, \textnormal{ } T \to +\infty. \label{ergoSigma} \eea 
Moreover, $\sigma^X$ is asymptotically bounded from below with probability $1$, i.e. there exists $\underline{\sigma}^X >0$ such that $\proba - \liminf_{t \to + \infty} \sigma_t^X \geq \underline{\sigma}^X$.
\begin{remark*}
The definition of ergodicity stated in $\textnormal{\textbf{[B]}}$ is quite flexible. For instance, it encompasses most combinations of stationary ergodic processes and periodic processes. The asymptotic boundedness of $\sigma^X$ away from $0$ is assumed to avoid degenerate behaviors of the statistics due to the deflation operation. Similar long-run high-frequency asymptotics and ergodic settings can be found in the recent literature, see e.g. \cite{christensen2018diurnal} and \cite{andersen2019time} where the volatility process is the product of a stationary mixing component and a periodic component. Finally, note that $\Sigma_t$ may be correlated with $(W^X,W^Z)$ so that the Brownian integrals feature leverage effect.
\end{remark*}

\par\noindent Now we turn to our third assumption, which states conditions on the truncation parameters, and an additional condition on the relationship between $T$ and $n$.\\
\par\noindent\textbf{Assumption }\textbf{[C]:} We have $\frac{1}{4-r} < \overline{\omega} < \frac{1}{2} - \frac{3}{2p_0}$. Moreover, let $e_1 = \frac{1}{2\overline{\omega}(1-r)} \vee \frac{1}{\overline{\omega}(4-r)-1}$ if $r > 0$ and $e_1 > \frac{1}{4\overline{\omega}-1}$ if $r = 0$, and $e_2 = \frac{4}{p_0(1/2 -\overline{\omega})} \vee \frac{1}{p_0(1/2-\overline{\omega}) + 2\overline{\omega} - 3/2}$. We have $T^{1+e_1\vee e_2}n^{-1} \to 0$.  

\begin{remark*}
In particular, the second condition in $\textnormal{\textbf{[C]}}$ implies that $T$ must tend to infinity slowly enough compared to $n$. In the case where $\Sigma_t$ admits moments of any order  ($p_0 = +\infty$), we note that Condition $\textnormal{\textbf{[C]}}$ can be simplified as $1/(4-r) < \overline{\omega} < 1/2$, and taking $\overline{\omega}$ arbitrarily close to $1/2$, the condition on $T$ becomes $T^{1+ \eta + \frac{1}{1-r}}n^{-1} \to 0$ for $\eta >0$ arbitrarily small. Finally, note that for jumps of finite activity $r=0$, this can be further simplified as $T^{2+ \eta} n^{-1} \to 0$ which is stronger than the condition $\Delta \to 0$ stated in Assumption $\textnormal{\textbf{[A]}}$. Of course, if $J^X = J^Y = 0$ and the truncation step is ignored in the estimators, all the stated results hold with $e_1 = 0$ in $\textnormal{\textbf{[C]}}$. If moreover $p_0 = +\infty$ then $e_2 = 0$ too and $\textnormal{\textbf{[C]}}$ reduces to $\Delta \to 0$.   
\end{remark*}
Finally, we state an additional more restrictive assumption on the jumps, that we will use only to derive a central limit theorem under cointegration for $(\widehat{\alpha}, \widehat{c})$ (Theorem \ref{thmH1}) and an equivalent of the Dickey-Fuller statistics under the alternative of cointegration (Proposition \ref{propH1}). It plays no role in the derivation of the consistency of the OLS and of the Dickey-Fuller test under any of the hypotheses.  \\
\par\noindent\textbf{Assumption }\textbf{[D]:}  $J^X$ and $J^Y$ are two sequences of jump processes such that for $U \in \{X,Y\}$, $\sup_{j \in \{1,\ldots,n\}, u \in [0,1]} |J_{t_j \wedge uT_n}^U-J_{t_{j-1} \wedge uT_n}^U| = O_\proba(\Delta^{1/2})$, and $T^{-1} \sum_{0<s \leq T} |\Delta J_s^Y - \alpha_0 \Delta J_s^X| = o_\proba(T^{-1/2}n^{-1})$.    \\
\par\noindent The above assumption states that the jump processes are asymptotically negligible in the regression, and satisfy an additional cointegration condition. Hereafter, we will always implicitly assume that \textbf{[A]}-\textbf{[C]} hold. As for \textbf{[D]}, we will explicitly state whether it is assumed or not. 

\subsection{Asymptotic theory of the OLS estimator under cointegration} \label{sectionLimitTheoryOLS}

We now give the asymptotic properties of $C_i$, and of the OLS estimator under the cointegration regime $\rho < 1$. We start with $C_i$. Since $0 < \gamma < 1$, note that the local time window $T^{\gamma} \to +\infty$ so that the ergodic theory for $\Sigma$ kicks in, and at the same time the scaled time window $T^{\gamma-1} \to 0$ so that for $t \in [t_{i-k_n}, t_{i-l_n}]$, $\sigma_{t/T}^M \approx \sigma_{(t_i/T)-}^M$ by left continuity of $\sigma^M$. 

\begin{proposition*} \label{lemmaCiWeak}
%Assume $\textnormal{\textbf{[A]}}$-$\textnormal{\textbf{[C]}}$. 
For any $i \in \{1,\ldots,n\}$, when $n \to +\infty$,
$$\esp |C_i - (\sigma_{(t_{i}/T)-}^M)^2 \omega_{11}|^2 \to 0.$$
\end{proposition*}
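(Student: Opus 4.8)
The plan is to show that $C_i = T^{-\gamma} RV_{i,k,l}$ concentrates in $L^2$ around $(\sigma_{(t_i/T)-}^M)^2 \omega_{11}$, which amounts to proving an $L^2$-ergodic theorem for the (truncated) realized volatility of $X$ over the local window $[t_{i-k}, t_{i-l-1}]$, whose length is $(k-l)\Delta \approx T^\gamma \to +\infty$ while $T^\gamma / T \to 0$. First I would reduce to the continuous part: by Assumption \textbf{[C]} the truncation level $a\Delta^{\overline\omega}$ with $\overline\omega \in (0,1/2)$ is chosen precisely so that, using the jump moment bound \eqref{assJumps}--\eqref{momentJumps} and the standard estimates of \cite{mancini2009non} (cf. the forthcoming Lemma \ref{lemmaJumps}), the contribution of the jump increments of $X$ to $RV_{i,k,l}$, as well as the cross terms between the continuous and jump parts, is $o_\proba$ of the main term and negligible in $L^2$; so it suffices to treat $RV_{i,k,l}^c = \sum_j (\Delta X_j^c)^2 \mathbbm{1}_{\{\cdots\}}$, and then to remove the indicator at essentially no cost for the continuous part since $\proba(|\Delta X_j^c| > a\Delta^{\overline\omega}/2)$ is super-polynomially small. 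Thus I am left with $T^{-\gamma}\sum_{j=(i-k)\vee 1}^{(i-l-1)\vee 1} (\Delta X_j^c)^2$ where $\Delta X_j^c = \int_{t_{j-1}}^{t_j} \sigma_{s/T}^M \sigma_s^X \, dW_s^X$.

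Next I would exploit the left-continuity of $\sigma^M$: since the window $[t_{i-k}, t_{i-l-1}]$ has rescaled length $O(T^{\gamma-1}) \to 0$ and sits just to the left of $t_i$, for $s$ in that window $\sigma_{s/T}^M$ is uniformly close to $\sigma_{(t_i/T)-}^M$; I would make this quantitative by writing $(\sigma_{s/T}^M)^2 = (\sigma_{(t_i/T)-}^M)^2 + \text{error}$ and bounding the error in $L^2$ using only that $\sigma^M$ is a bounded (deterministic) \caglad function on $[0,1]$ — here a uniform-continuity-up-to-jumps argument, or a dominated-convergence argument on the rescaled window, gives the $o(1)$ bound. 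This pulls the deterministic factor out, reducing the claim to
\[
\esp\Big| (\sigma_{(t_i/T)-}^M)^2 \Big( T^{-\gamma}\sum_j \int_{t_{j-1}}^{t_j} (\sigma_s^X)^2 \, ds - \omega_{11}\Big) \Big|^2 \to 0,
\]
where I have also replaced the discretized quadratic variation $(\Delta X_j^c)^2$ by its compensator $\int_{t_{j-1}}^{t_j}(\sigma^M_{s/T})^2(\sigma_s^X)^2 ds$ plus a martingale remainder whose $L^2$ norm is controlled by $\esp\|\Sigma_t\|^{p_0}$ with $p_0 \ge 8$ (Assumption \textbf{[B]}) and vanishes since it is a sum of $k-l \to \infty$ independent-in-expectation squared increments normalized by $T^\gamma = (k-l)\Delta$ — a $\Delta/T^\gamma \to 0$ effect.

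Finally, the remaining term $T^{-\gamma}\int_{t_{(i-k)\vee 1 - 1}}^{t_{(i-l-1)\vee 1}} (\sigma_s^X)^2 \, ds - \omega_{11}$ is exactly the kind of average the ergodicity assumption \eqref{ergoSigma} controls: writing the integral over $[t_{i-k}, t_{i-l-1}]$ as a difference of two integrals of the form $\frac{1}{T}\int_{uT}^T \Sigma_t \, dt$ evaluated at $u = t_{i-k}/T$ and $u = t_{i-l-1}/T$ and rescaling by $T/T^\gamma$, Assumption \textbf{[B]} gives $L^2$-convergence of $\frac1T\int_{uT}^T \Sigma_t dt$ to $(1-u)\Omega$ uniformly in $u$, hence the $(1,1)$ entry of $T^{-\gamma}\int_{\text{window}} \Sigma_t dt$ converges in $L^2$ to $\omega_{11}$ (the factor $T^{1-\gamma}$ times the length-$T^{\gamma}$ difference of the linear profiles cancels). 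I expect the main obstacle to be bookkeeping the interplay of the several small parameters — making sure that the truncation error (governed by $e_1$ in \textbf{[C]}), the discretization/martingale remainder (governed by $p_0$, i.e. $e_2$), and the ergodic remainder $\epsilon(T^\gamma)$ in \eqref{ergoSigma} are all $o(1)$ in $L^2$ \emph{simultaneously and uniformly in} $i \in \{1,\ldots,n\}$, and in particular handling the edge cases $i \le 2k$ / $i - k \le 1$ where the window is truncated at $1$ — but none of these individually is deep; the constraints in \textbf{[C]} were manufactured to make them work.
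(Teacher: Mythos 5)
Your outline follows the same skeleton as the paper's proof (carried out in Lemma \ref{lemmaRVLocal} in the Appendix): remove the jumps and the truncation indicator via the estimates of Lemma \ref{lemmaJumps} under \textbf{[C]}, replace the squared continuous increments by the integrated spot variance up to a martingale remainder controlled by \textbf{[B]}, freeze $\sigma^M$ on the local window by left-continuity, and conclude with the ergodicity condition (\ref{ergoSigma}) applied to the window of length $T^\gamma$. So the route is the right one, and the first, second and fourth steps are essentially as in the paper.

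There is, however, one genuine gap in the way you treat the third step, precisely at the point where you want the bound to hold \emph{uniformly in} $i$. A ``uniform-continuity-up-to-jumps'' argument cannot deliver this: if $\sigma^M$ has a jump at some $u_0\in(0,1)$ and $t_i/T$ lies within a window-width $O(T^{\gamma-1})$ to the right of $u_0$, the window $[t_{i-k},t_{i-l-1}]$ straddles the jump and $\sigma^M_{s/T}$ is \emph{not} close to $\sigma^M_{(t_i/T)-}$ on a non-negligible fraction of it, so $C_i$ converges to a mixture of the two volatility levels rather than to $(\sigma^M_{(t_i/T)-})^2\omega_{11}$. This is exactly why the paper does not claim full uniformity (it explicitly notes, in contrast with Lemma 4.1 of \cite{beare2018unit}, that full uniformity is impossible when $\sigma^M$ jumps) and instead proves a dedicated deterministic lemma (Lemma \ref{lemmaSigmaM}): for a suitable vanishing jump-size threshold one constructs an exceptional set of indices $\cala_n$ with $\#\cala_n=o(n)$ (indices whose window contains a ``large'' jump of $(\sigma^M)^2$), and obtains the $o(1)$ approximation of $\sigma^M$ by its left limit uniformly only over $i\notin\cala_n$; the convergence of $C_i$ is then uniform over $\{1,\ldots,n\}-\cala_n'$ with $\cala_n'=\cala_n\cup\{1,\ldots,2k\}$, which is the form actually used downstream. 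Your plan needs this exceptional-set device (or an equivalent), and as stated it also silently includes the indices $i\le 2k$ where $C_i=+\infty$ by construction. A second, more minor slip: the probability $\proba(|\Delta X_j^c|>a\Delta^{\overline{\omega}})$ is not super-polynomially small here, since $\sigma^X$ is random with only $p_0$ bounded moments under \textbf{[B]}; one only gets a bound of order $\Delta^{p_0(1-2\overline{\omega})}$ by Markov's inequality, which is precisely why the exponent $e_2$ (and the constraint $\overline{\omega}<1/2-3/(2p_0)$) appears in \textbf{[C]}. The conclusion survives with the polynomial bound, but the justification as written overstates the tail decay.
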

As proved in the Appendix (see Lemma \ref{lemmaRVLocal}), it turns out that the above $\mathbb{L}^2$ convergence is even uniform outside a set of indices whose cardinality is negligible with respect to $n$. A full uniformity as in Lemma 4.1 of \cite{beare2018unit} is impossible here because $\sigma^M$ may have jumps (whereas in the aforementioned paper $\sigma^M$ is assumed differentiable). We now focus on the asymptotic properties of $(\widehat{\alpha},\widehat{c})$ when $X$ and $Y$ are cointegrated. In what follows, we let $W = (W^1,W^2)$ be a standard Brownian motion on $[0,1]$. Moreover, defining
 \bea \label{defL} 
 L = \l(\begin{matrix} \sqrt{\omega_{11} } &0\\ \omega_{11}^{-1/2}\omega_{12} & \sqrt{\omega_{22} - \omega_{11}^{-1}\omega_{12}^2}  \end{matrix}\r) = \l(\begin{matrix} \sqrt{\omega_{11} } &0\\ \sqrt{\omega_{22}} r_\infty & \sqrt{\omega_{22}} \sqrt{1-r_\infty^2} \end{matrix} \r),
 \eea 
with $r_\infty = \omega_{12}/\sqrt{\omega_{11} \omega_{22}}$, we also let $B = (B^1,B^2) = L W$ be a two dimensional Brownian motion on $[0,1]$, with covariance matrix $L L^T = \Omega$. Below, for a process $V$ on $[0,1]$, we let $\overline{V} = \int_0^1 V_udu$. 

\begin{theorem*} \label{thmH1}
Assume that $X$ and $Y$ are cointegrated, that is $\rho < 1$ and is fixed. Assume further that $1/2 \leq \gamma < 1$ and $0< \gamma^{'} < \gamma$. Then we have 
$$ \widehat{\alpha} - \alpha_0\to^\proba0  \textnormal{ and } T^{-1/2}(\widehat{c} -c_0) \to^\proba 0.$$
Moreover, under the additional assumption $\textnormal{\textbf{[D]}}$, we have the convergence in distribution
\beas
\l(\begin{matrix} n(\widehat{\alpha} - \alpha_0) \\ nT^{-1/2}(\widehat{c} - c_0) \end{matrix}\r) \to^d \frac{1}{1-\rho}\l(\begin{matrix} \frac{\omega_{12} +\int_0^1(B_s^1-\overline{B}^1) dB_s^2}{\int_0^1 (B_s^1 - \overline{B}^1)^2ds} \\ \omega_{11}^{-1/2}B_1^2 - \frac{\omega_{12} +\int_0^1(B_s^1-\overline{B}^1) dB_s^2}{\int_0^1 (B_s^1 - \overline{B}^1)^2ds}\omega_{11}^{-1/2}\overline{B}^1  \end{matrix}\r).
\eeas 
\end{theorem*}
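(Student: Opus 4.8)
The plan is to work directly from the OLS representation. Since (\ref{cointModified}) exhibits $\calt(Y)^{def}$ as a linear regression on $\calt(X)^{def}$ with residual $\calt(\epsilon)^{def}$, write $G_i:=\calt(X)^{def}_i$, $H_i:=\calt(\epsilon)^{def}_i$, $\overline G:=n^{-1}\sum_i G_i$, $\overline H:=n^{-1}\sum_i H_i$, so that $\widehat\alpha-\alpha_0=\big(\sum_i(G_i-\overline G)(H_i-\overline H)\big)/\big(\sum_i(G_i-\overline G)^2\big)$ and $\widehat c-c_0=\overline H-(\widehat\alpha-\alpha_0)\overline G$. For the consistency statement only crude orders are needed and \textbf{[D]} is not used: Assumption \textbf{[B]} (ergodicity and $\underline{\sigma}^X>0$) gives that $(nT)^{-1}\sum_i(G_i-\overline G)^2$ is bounded in probability away from $0$, while $\rho<1$ forces $\epsilon_i=O_\proba(\Delta^{1/2})$ uniformly, hence $\sum_i H_i^2=O_\proba(T)$ (the truncation loss being controlled under \textbf{[C]} alone via the jump estimates of Lemma \ref{lemmaJumps}); Cauchy--Schwarz then yields $|\sum_i(G_i-\overline G)(H_i-\overline H)|=O_\proba(\sqrt n\,T)$, so $\widehat\alpha-\alpha_0=O_\proba(n^{-1/2})\to^\proba 0$, and $T^{-1/2}(\widehat c-c_0)=O_\proba(n^{-1/2})\to^\proba 0$ using $\overline H=O_\proba(T^{1/2}n^{-1})$.

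For the central limit theorem the argument rests on two approximations obtained by removing the jumps (truncation), cancelling the common market component (deflation, via Proposition \ref{lemmaCiWeak}), and exploiting ergodicity. Put $\xi^X_t:=\int_0^t\sigma^X_s\,dW^X_s$ and $\xi^Z_t:=\int_0^t\sigma^Z_s\,dW^Z_s$. First, since $C_j^{-1/2}\sigma^M_{t_j/T}\to\omega_{11}^{-1/2}$ off a set of indices of cardinality $o(n)$ (Lemma \ref{lemmaRVLocal}) and $\sigma^M$ is left-continuous, so that $\sigma^M_{t_j/T}$ may be replaced by $\sigma^M_{(t_j/T)-}$ outside a negligible index set, one gets $\Delta G_j\approx\omega_{11}^{-1/2}\sigma^X_{t_j}\Delta W^X_j$, hence $G_i\approx\omega_{11}^{-1/2}\xi^X_{t_i}$ uniformly in $i$, the jump part being absorbed via \textbf{[C]}. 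Second, since $C$ is slowly varying in the index, a summation by parts gives $H_j\approx C_j^{-1/2}\epsilon_j$; then the elementary identity $(1-\rho)\sum_{l\le i}\epsilon_l=Z_i-\rho\epsilon_i$ together with $\epsilon_i=O_\proba(\Delta^{1/2})$ yields for the relevant partial sums $\sum_{j\le i}H_j\approx(1-\rho)^{-1}\sum_{j\le i}C_j^{-1/2}\Delta Z_j\approx(1-\rho)^{-1}\omega_{11}^{-1/2}\xi^Z_{t_i}$. Rescaling to $[0,1]$, Assumption \textbf{[B]} produces the joint functional convergence of $(T^{-1/2}\xi^X_{uT},T^{-1/2}\xi^Z_{uT})_{u\in[0,1]}$ to a Brownian motion with covariance $u\Omega$, i.e.\ to $(B^1_u,B^2_u)$, jointly with $T^{-1}[\xi^X,\xi^Z]_T\to^\proba\omega_{12}$.

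Granting these, the pair (numerator, denominator) of $\widehat\alpha-\alpha_0$ converges jointly in distribution after rescaling. The denominator obeys $(nT)^{-1}\sum_i(G_i-\overline G)^2\to^d\omega_{11}^{-1}\int_0^1(B_s^1-\overline{B}^1)^2ds$ by a Riemann-sum argument. For the numerator, using $\sum_i(G_i-\overline G)=0$, a discrete summation by parts gives $\sum_i(G_i-\overline G)(H_i-\overline H)=-\sum_{i=1}^{n-1}\big(\sum_{j\le i}H_j-\tfrac in\sum_{j\le n}H_j\big)\Delta G_{i+1}$, and substituting the two approximations turns the right-hand side into a left-point It\^o sum with $T^{-1}\sum_i(G_i-\overline G)(H_i-\overline H)\to^d-(1-\rho)^{-1}\omega_{11}^{-1}\int_0^1(B_s^2-sB_1^2)dB_s^1$, the stochastic-integral limit following from the functional CLT for continuous martingales and continuity of the stochastic integral under the joint asymptotics \textbf{[A]}. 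Applying It\^o's formula to $B_1^1B_1^2$ rewrites $-\int_0^1(B_s^2-sB_1^2)dB_s^1=\int_0^1(B_s^1-\overline{B}^1)dB_s^2+[B^1,B^2]_1$ with $[B^1,B^2]_1=\omega_{12}$; dividing numerator by denominator, the common factor $\omega_{11}^{-1}$ cancels and we obtain the stated limit for $n(\widehat\alpha-\alpha_0)$. The statement for $\widehat c$ follows from $nT^{-1/2}(\widehat c-c_0)=T^{-1/2}\sum_iH_i-n(\widehat\alpha-\alpha_0)\,T^{-1/2}\overline G$, together with the joint convergence $T^{-1/2}\sum_iH_i\to^d(1-\rho)^{-1}\omega_{11}^{-1/2}B_1^2$ and $T^{-1/2}\overline G\to^d\omega_{11}^{-1/2}\overline{B}^1$ and the continuous mapping theorem (which applies since the denominator limit $\int_0^1(B_s^1-\overline{B}^1)^2ds>0$ a.s.).

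The main obstacle is not identifying the limit --- which, once the two approximations are in hand, is a continuous-mapping computation ending in It\^o's formula, with the bias $\omega_{12}$ arising from the high-frequency correlation $r_t$ between the regressor increments and the innovations $\Delta Z$ --- but verifying that every approximation error is negligible after the quadratic-form operations: $o_\proba(nT)$ in the denominator and, more delicately, $o_\proba(T)$ in the numerator. Three sources must be bracketed. (i) The jump residual in $H$: truncation leaves a term bounded by a constant times $\sum_{0<s\le T}|\Delta J_s^Y-\alpha_0\Delta J_s^X|$ (using $\underline{\sigma}^X>0$ to bound $C_j^{-1/2}$ from above), which is exactly what Assumption \textbf{[D]} controls at rate $o_\proba(T^{-1/2}n^{-1})$, so that after multiplying by $G_i-\overline G=O_\proba(T^{1/2})$ and summing over $n$ indices it contributes $o_\proba(T)$; this is why \textbf{[D]} is needed for the CLT but not for consistency. (ii) The deflation error $C_i-(\sigma^M_{(t_i/T)-})^2\omega_{11}$: one needs the uniform $\mathbb L^2$-control of $C_i$ outside an $o(n)$ index set (sharper than Proposition \ref{lemmaCiWeak}, and requiring $1/2\le\gamma<1$, $0<\gamma'<\gamma$ so the local window is long enough for ergodic averaging yet short enough that $\sigma^M$ is essentially constant on it), combined with a separate treatment of the indices at which $\sigma^M$ jumps. (iii) The ``$C$ slowly varying'' and AR$(1)$ boundary corrections --- the differences $C_j^{-1/2}-C_l^{-1/2}$ over windows of length $O((1-\rho)^{-1})$, and the endpoint sums $\sum_k\rho^{i-k}\Delta Z_k$ --- each shown via martingale moment bounds to be of strictly smaller order. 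Interlacing these bounds with the rescaling, i.e.\ making the simultaneous $n\to\infty$, $T\to\infty$ limits rigorous, is the technical heart of the proof and is relegated to the Appendix.
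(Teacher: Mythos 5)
Your proposal reaches the correct limit and is sound in outline, but it is organized quite differently from the paper's proof. The paper proves a functional CLT directly for the \emph{deflated} processes $(\widetilde{X}^{c,def,n},\widetilde{Z}^{def,n},\widetilde{\epsilon}^{n})$ via predictable brackets (Lemma \ref{lemmaLimitIntX}), approximates the deflated residual by the AR(1) transform $q_{t_i/T_n}=\sum_{j\leq i}\rho^{i-j}\Delta\widetilde{Z}^{def,n}_{t_j/T_n}$ (Step 1 of Lemma \ref{lemGn}, where the window gap $l_n$ is what preserves the martingale structure of the error terms), and then expands $\sum_i \widetilde{X}q_i$ by geometric summation into a diagonal term $A\to^\proba\omega_{11}^{-1}\omega_{12}$ (this is where the bias appears) plus a term $\int_0^1\widetilde{X}^{c,def,n}d\widetilde{Z}^{def,n}$ handled by the Kurtz--Protter theorem; jump negligibility at the $n$-rate is Lemma \ref{lemmaJumpsH1}. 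You instead Abel-sum the OLS numerator, approximate $G$ and the partial sums of $H$ by the undeflated idiosyncratic martingales $\omega_{11}^{-1/2}\xi^X$ and $(1-\rho)^{-1}\omega_{11}^{-1/2}\xi^Z$ (using the identity $(1-\rho)\sum_{l\le i}\epsilon_l=Z_i-\rho\epsilon_i$), pass to the limit in a single left-point It\^o sum, and recover the $\omega_{12}$ bias from integration by parts at the limit ($-\int_0^1(B^2_s-sB^2_1)dB^1_s=\omega_{12}+\int_0^1(B^1_s-\overline{B}^1)dB^2_s$, which checks out); your identification of where \textbf{[D]} enters (the $o_\proba(T^{1/2}n^{-1})$ bound on $\sum_s|\Delta J^Y_s-\alpha_0\Delta J^X_s|$, multiplied by $G_i-\overline{G}=O_\proba(T^{1/2})$ over $n$ indices) matches Lemma \ref{lemmaJumpsH1}, and your crude Cauchy--Schwarz consistency argument is actually more elementary than the paper's, which obtains consistency as a by-product of the refined analysis. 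Two cautions: your claimed consistency rate $O_\proba(n^{-1/2})$ is not justified under \textbf{[A]}--\textbf{[C]} alone (the truncation/deflation errors are only $o_\proba(T^{1/2})$ uniformly there), though $o_\proba(1)$ --- all the theorem asserts --- survives; and in the numerator the approximation errors cannot be closed by Cauchy--Schwarz (that only yields $o_\proba(T\sqrt{n})$), so the "continuity of the stochastic integral" step must be made precise as a Kurtz--Protter/UT-type convergence for the non-anticipating sums plus martingale second-moment bounds for the adapted error terms against the forward increments $\Delta G_{i+1}$ --- exactly the role played in the paper by the citation of Kurtz and Protter and by the $l_n$-gap construction, and the place where $\gamma\geq 1/2$ and $\overline{\omega}>1/4$ are consumed.
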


The fast rate $n$ for the estimation of $\alpha_0$ is in line with the literature on cointegration estimation, see for instance Proposition 1 of \cite{engle1987co} and Theorem 7 in \cite{kim2010cointegrating}. Similarly, the fact that $X_T^c$ and $Y_T^c$ are of order $T^{1/2}$ implies that one can consistently estimate only $T^{-1/2}c_0$, with the same rate $n$ as for $\alpha_0$. Note also that in the exogenous residual case $\omega_{12} = 0$, the limit distribution for $\widehat{\alpha}$ corresponds to the one of a classical OLS on an homoskedastic cointegration regression as in Lemma 2.1 of \cite{phillips1988asymptotic} up to the mean terms $\overline{W}^1$, due to the presence of the constant $c_0$ in our regression. In other words, the jumps and the heteroskedasticity coming from $\sigma^M$ do not impact the limit theory of $\widehat{\alpha}$ and $\widehat{c}$, and no efficiency is lost due to the truncation and the deflation. Finally, note that, as explained in the following remark, the above limit distribution is actually mixed normal. 

\begin{remark*} \label{rmkThmH1}
Rewriting $B$ as $LW$ and conditioning on $W^1$, we can specify the above limit as the following mixed normal distribution. Defining $I[W^1] = \int_0^1 (W_s^1 - \overline{W}^1)dW_s^1 = (W_1^1)^2/2-1/2-(\overline{W}^1)W_1^1$, $J[W^1] = \int_0^1 (W_s^1-\overline{W}^1)^2ds$, $K[W^1] = \int_0^1 (W_s^1)^2ds$, $r_\infty=\omega_{12}/\sqrt{\omega_{11}\omega_{22}}$, and $v_\epsilon := \frac{\omega_{22}}{\omega_{11}(1-\rho)^2}$, we have 
\beas
\l(\begin{matrix} n(\widehat{\alpha} - \alpha_0) \\ nT^{-1/2}(\widehat{c} - c_0) \end{matrix}\r) \to^d \sqrt{v_\epsilon} \calm\caln \l(  r_\infty \mathcal{B}[W^1], \frac{ (1 - r_\infty^2)}{ J[W^1]} \mathcal{V}[W^1]\r)
\eeas 
where 
$$ \mathcal{B}[W^1] = \l(\begin{matrix}  \frac{1+I[W^1]}{ J[W^1]} \\ W_1^1 - \frac{1+ I[W^1]}{J[W^1]} \overline{W}^1  \end{matrix}\r) \textnormal{ and } \mathcal{V}[W^1] = \l(\begin{matrix} 1  &   \overline{W}^1  \\  \overline{W}^1   &  K[W^1]   \end{matrix}\r).$$ 
Moreover, when $B^1$ and $B^2$ are independent, $r_\infty=0$, and the above limit becomes
\beas
\l(\begin{matrix} n(\widehat{\alpha} - \alpha_0) \\ nT^{-1/2}(\widehat{c} - c_0) \end{matrix}\r)\to^d \sqrt{\frac{v_\epsilon}{ J[W^1]}}\calm\caln \l(0,  \l(\begin{matrix} 1  &   \overline{W}^1  \\  \overline{W}^1   &  K[W^1]   \end{matrix}\r)\r).
\eeas 
\end{remark*}

Remark \ref{rmkThmH1} suggests that it is possible to construct a studentized version of Theorem \ref{thmH1}, essential for the computation of confidence intervals and significance tests. To do so, we need to estimate the different quantities appearing in the bias and the variance of the mixed normal distribution. We construct the estimated residuals 
\bea \label{defResiduals}
\widehat{\epsilon}_i =  \calt (Y)_i^{def} -  \widehat{c} - \widehat{\alpha} \calt (X)_i^{def}, 
\eea 
and then estimate $v_\epsilon$, $\rho$, and $r_\infty$ as follows.
\bea
\widehat{v}_\epsilon = T^{-1} \sum_{i=1}^n \widehat{\epsilon}_i^2,
\eea

\bea \label{estimRho}
\widehat{\rho} = \frac{\sum_{i=2}^{n} \Delta \widehat{\epsilon}_i \widehat{\epsilon}_{i-2}}{\sum_{i=1}^{n} \Delta \widehat{\epsilon}_i \widehat{\epsilon}_{i-1}},
\eea 

\bea 
\widehat{r}_\infty = \frac{\sum_{i=2}^n \Delta \calt(X)_i^{def} (\widehat{\epsilon}_i - \widehat{\rho} \widehat{\epsilon}_{i-1})}{\sqrt{\sum_{i=2}^n (\widehat{\epsilon}_i - \widehat{\rho} \widehat{\epsilon}_{i-1})^2}}.
\eea 
Note that for the estimation of $\rho$, we have preferred the formula of (\ref{estimRho}) over the more classical estimator $\tilde{\rho} = \sum_{i=1}^n\widehat{\epsilon}_i \widehat{\epsilon}_{i-1}/\sum_{i=1}^n \widehat{\epsilon}_i^2$, because the former is robust to truncation under $\textbf{[A]-[C]}$ whereas it is not theoretically clear whether the latter is consistent without Assumption \textbf{[D]}. We also substitute $\calt(X)^{def}$ to all the quantities involving $W^1$ appearing in the central limit theorem. That is, letting
\beas 
\overline{\calt(X)}^{def} &=& T^{-1/2}n^{-1} \sum_{i=1}^n \calt(X)_i^{def}\\
I[\calt(X)^{def}]&=& \frac{(\calt(X)_n^{def})^2 - T}{2T}  - T^{-1/2}\overline{\calt(X)}^{def}  \calt(X)_n^{def}\\
J[\calt(X)^{def}]&=& n^{-1} \sum_{i=1}^n (T^{-1/2}\calt(X)_i^{def} - \overline{\calt(X)}^{def})^2\\ 
K[\calt(X)^{def}] &=& J[\calt(X)^{def}] + (\overline{\calt(X)}^{def} )^2,
\eeas
we introduce the estimators for the asymptotic biases and variances of $\widehat{\alpha}$ and $\widehat{c}$:
\beas  
B_{\widehat{\alpha}} = n^{-1} \sqrt{\widehat{v}_\epsilon} \widehat{r}_\infty\frac{1+I[\calt(X)^{def}]}{J[\calt(X)^{def}]}\textnormal{ and } B_{\widehat{c}} = n^{-1}T^{1/2}\sqrt{\widehat{v}_\epsilon} \widehat{r}_\infty \l(T^{-1/2}\calt(X)_n^{def} - \frac{1+I[\calt(X)^{def}]}{J[\calt(X)^{def}]} \overline{\calt(X)}^{def}\r), 
\eeas
and
\beas 
V_{\widehat{\alpha}} = \widehat{v}_\epsilon\frac{1-\widehat{r}_\infty^2}{J[\calt(X)^{def}]} \textnormal{ and } V_{\widehat{c}} = \widehat{v}_\epsilon\frac{(1-\widehat{r}_\infty^2) K[\calt(X)^{def}]}{J[\calt(X)^{def}]}.
\eeas 
we have the following studentized version of the central limit theory for $\widehat{\alpha}$ and $\widehat{c}$. 
\begin{proposition*}\label{propStud}
Assume that $X$ and $Y$ are cointegrated, that is $\rho < 1$ and is fixed. Assume that $1/2 \leq \gamma < 1$ and $0< \gamma^{'} < \gamma$. We have 
$$ \widehat{\rho} \to^\proba \rho.$$
Moreover, if we assume $\textnormal{\textbf{[D]}}$, then we also have 
$$ (\widehat{v}_\epsilon, \  \widehat{r}_\infty) \to^\proba (v_\epsilon,  r_\infty),$$
\beas 
\frac{n(\widehat{\alpha} - \alpha_0 - B_{\widehat{\alpha}})}{\sqrt{V_{\widehat{\alpha}}}} \to^d \caln(0,1),
\eeas 
and
\beas 
\frac{nT^{-1/2}(\widehat{c} - c_0 - B_{\widehat{c}})}{\sqrt{V_{\widehat{c}}}} \to^d \caln(0,1).
\eeas 
\end{proposition*}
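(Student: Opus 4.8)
The plan is to establish the four assertions in the order stated, reducing each to Theorem~\ref{thmH1}, Proposition~\ref{lemmaCiWeak} (and its uniform companion Lemma~\ref{lemmaRVLocal}), the ergodicity~(\ref{ergoSigma}) of Assumption~\textbf{[B]}, and the finite-variation jump bounds of Lemma~\ref{lemmaJumps}. \emph{Consistency of $\widehat\rho$.} By Theorem~\ref{thmH1}, $\widehat\alpha-\alpha_0\to^\proba0$ and $T^{-1/2}(\widehat c-c_0)\to^\proba0$ under \textbf{[A]}--\textbf{[C]} alone, so I write $\widehat\epsilon_i=\calt(\epsilon)_i^{def}-(\widehat c-c_0)-(\widehat\alpha-\alpha_0)\calt(X)_i^{def}$, expand the numerator $\sum_{i=2}^n\Delta\widehat\epsilon_i\widehat\epsilon_{i-2}$ and denominator $\sum_{i=1}^n\Delta\widehat\epsilon_i\widehat\epsilon_{i-1}$ of (\ref{estimRho}), and show that after multiplying both by $T^{-1}$ every contribution carrying a factor $\widehat\alpha-\alpha_0$ or $\widehat c-c_0$ vanishes in probability, leaving the leading terms $T^{-1}\sum_i\Delta\calt(\epsilon)_i^{def}\calt(\epsilon)_{i-2}^{def}$ and $T^{-1}\sum_i\Delta\calt(\epsilon)_i^{def}\calt(\epsilon)_{i-1}^{def}$. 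Here the estimator's design is essential: since the local window of (\ref{defRVLocal}) stops $l_n\to+\infty$ steps before $i$, $C_i$ is $\calf_{t_{i-2}}$-measurable, so by $\Delta\epsilon_i=(\rho-1)\epsilon_{i-1}+\Delta Z_i$ the increment $\Delta\calt(\epsilon)_i^{def}$ splits into an $\calf_{t_{i-2}}$-measurable part proportional to $\epsilon_{i-2}$ plus martingale increments in $\Delta Z_{i-1},\Delta Z_i$; using $\calt(\epsilon)_{i-1}^{def}=\calt(\epsilon)_{i-2}^{def}+\Delta\calt(\epsilon)_{i-1}^{def}$, the denominator equals the numerator plus $\sum_i\Delta\calt(\epsilon)_i^{def}\Delta\calt(\epsilon)_{i-1}^{def}$, and an ergodic law of large numbers for the predictable parts (through Proposition~\ref{lemmaCiWeak}, which makes the $\sigma^M$ factors cancel, together with (\ref{ergoSigma})) plus $\mathbb{L}^2$ bounds on the martingale parts gives $T^{-1}\times(\text{numerator})\to^\proba\rho\,\ell$ and $T^{-1}\times(\text{denominator})\to^\proba\ell$ for an explicit $\ell\neq0$, whence $\widehat\rho\to^\proba\rho$. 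The truncation indicators are absorbed into the error terms via Lemma~\ref{lemmaJumps} and the rate condition in \textbf{[C]}, so \textbf{[D]} is not needed at this stage.

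\emph{Consistency of $\widehat v_\epsilon$ and $\widehat r_\infty$.} Under \textbf{[D]} the jumps are $O_\proba(\Delta^{1/2})$, hence below the truncation level, and the aggregate jump term in (\ref{cointModified}) is $o_\proba(T^{-1/2}n^{-1})$; combined with the $n^{-1}$ rates for $\widehat\alpha-\alpha_0$ and $nT^{-1/2}(\widehat c-c_0)$ from Theorem~\ref{thmH1}, the telescoping $\sum_{j\le i}\Delta\epsilon_j=\epsilon_i$, and an Abel summation absorbing the variation of $C_j$, this yields the sharp approximations $\widehat\epsilon_i=\calt(\epsilon)_i^{def}+O_\proba(T^{1/2}n^{-1})=\epsilon_i/\sqrt{C_i}+o_\proba(\Delta^{1/2})$ and, since $\widehat\rho-\rho=o_\proba(1)$, $\widehat\epsilon_i-\widehat\rho\widehat\epsilon_{i-1}=\Delta Z_i/\sqrt{C_i}+o_\proba(\Delta^{1/2})$, uniformly enough for the averaged quantities that follow. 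Substituting these into $\widehat v_\epsilon=T^{-1}\sum_i\widehat\epsilon_i^2$ and into the numerator and denominator of $\widehat r_\infty$, replacing $C_i$ by $(\sigma^M_{(t_i/T)-})^2\omega_{11}$ via Proposition~\ref{lemmaCiWeak} and Lemma~\ref{lemmaRVLocal} (so the market-volatility factors cancel), and using (\ref{ergoSigma}) for the residual averages of $\Sigma_t$, a law of large numbers for the resulting weakly dependent, leverage-allowed triangular arrays delivers, after the appropriate normalization, $\widehat v_\epsilon\to^\proba v_\epsilon$ and $\widehat r_\infty\to^\proba r_\infty$.

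\emph{Studentized limits.} The proof of Theorem~\ref{thmH1} rests on a stable martingale central limit theorem, so it actually yields the joint convergence, stably with respect to the limiting $W^1$, of $n(\widehat\alpha-\alpha_0)$, $nT^{-1/2}(\widehat c-c_0)$ and the path functionals $T^{-1/2}\calt(X)_n^{def}$, $\overline{\calt(X)}^{def}$, $I[\calt(X)^{def}]$, $J[\calt(X)^{def}]$, $K[\calt(X)^{def}]$ to $W^1_1$, $\overline W^1$, $I[W^1]$, $J[W^1]$, $K[W^1]$ and the mixed-normal limits of Remark~\ref{rmkThmH1}. Combined with $(\widehat v_\epsilon,\widehat r_\infty,\widehat\rho)\to^\proba(v_\epsilon,r_\infty,\rho)$, the continuous mapping theorem identifies the joint limit of $(n(\widehat\alpha-\alpha_0),nB_{\widehat\alpha},V_{\widehat\alpha})$ and of $(nT^{-1/2}(\widehat c-c_0),nT^{-1/2}B_{\widehat c},V_{\widehat c})$: in particular $nB_{\widehat\alpha}\to^\proba\sqrt{v_\epsilon}\,r_\infty(1+I[W^1])/J[W^1]$, the conditional ($W^1$-measurable) mean of the mixed normal for $n(\widehat\alpha-\alpha_0)$, and $V_{\widehat\alpha}\to^\proba v_\epsilon(1-r_\infty^2)/J[W^1]$, its conditional variance, and likewise for $\widehat c$. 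Recentring, dividing by $\sqrt{V_{\widehat\alpha}}$, resp. $\sqrt{V_{\widehat c}}$, and noting that a mixed normal divided by its own conditional standard deviation is standard normal (so the limit is free of $W^1$), Slutsky's lemma yields the two $\caln(0,1)$ statements.

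\emph{Main obstacle.} The genuinely delicate step is the consistency of $\widehat\rho$: it is the only one needing a second-order analysis — numerator and denominator of (\ref{estimRho}) are both of exact order $T$, with $O(T^{1/2})$ martingale fluctuations that become negligible only once the ratio is taken — it must be carried through without Assumption~\textbf{[D]}, so the truncation indicators and the endogeneity-induced cross-terms have to be controlled by the finite-variation jump estimates of Lemma~\ref{lemmaJumps} and the rate condition in \textbf{[C]}, and it relies on the non-obvious $i-2$ (rather than $i-1$) lag in (\ref{estimRho}) together with the backward-shifted window in (\ref{defRVLocal}), which are precisely what keep the relevant cross-terms martingale-like; the bookkeeping of these error terms is where most of the work lies. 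The remaining steps are, by comparison, standard laws of large numbers and applications of Slutsky's lemma, modulo the stable-convergence upgrade of Theorem~\ref{thmH1} that is already available from its proof.
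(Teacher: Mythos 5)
Your overall architecture matches the paper's: the studentized limits are obtained exactly as in the paper, namely from the joint convergence in distribution of $(n(\widehat{\alpha}-\alpha_0),nT^{-1/2}(\widehat{c}-c_0))$ with $(\overline{\calt(X)}^{def},I[\calt(X)^{def}],J[\calt(X)^{def}],K[\calt(X)^{def}])$ (via Lemmas \ref{lemmaTruncationUCP}, \ref{lemmaLimitIntX}, \ref{lemmaRiemann} and the continuous mapping theorem), the consistency of $(\widehat{v}_\epsilon,\widehat{\rho},\widehat{r}_\infty)$, Slutsky's lemma, and the conditional Gaussianity of the mixed normal of Remark \ref{rmkThmH1}; note that no stable-convergence upgrade of Theorem \ref{thmH1} is needed, since $B_{\widehat{\alpha}},V_{\widehat{\alpha}},B_{\widehat{c}},V_{\widehat{c}}$ are themselves functionals of $\calt(X)^{def}$ and consistent scalars, so plain joint weak convergence suffices (your statement ``$nB_{\widehat{\alpha}}\to^\proba$ the conditional mean'' should be read as part of this joint convergence). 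The claims on $\widehat{v}_\epsilon$ and $\widehat{r}_\infty$ under \textbf{[D]} also follow the paper's route, which simply invokes (\ref{quadResidual1})--(\ref{quadResidual2}) from the proof of Proposition \ref{propH1}.

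The genuine gap is in the first claim, the consistency of $\widehat{\rho}$ without \textbf{[D]}, which you rightly single out as the delicate step but then treat by a direct term-by-term expansion of (\ref{estimRho}) based on (\ref{cointModified}) taken as exact. In the presence of jumps, $\widehat{\epsilon}_i$ contains, besides $\calt(\epsilon)_i^{def}$ and the OLS-error terms, the accumulated deflated truncated jump mismatch $M_i=\sum_{j\leq i}C_j^{-1/2}(\Delta J_j^Y-\alpha_0\Delta J_j^X)\mathbb{1}_{A_j}$ (plus the residual increments killed by truncation), and without \textbf{[D]} the only available control on this object is the sup-norm bound $\sup_i|M_i|=o_\proba(T^{1/2})$ coming from the same estimates as Lemma \ref{lemmaTruncationUCP}. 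That is too weak for your expansion: cross terms of the form $\sum_i\Delta\calt(\epsilon)_i^{def}M_{i-2}$ are bounded naively by $\sup_i|M_i|\sum_i|\Delta\calt(\epsilon)_i^{def}|=o_\proba(T^{1/2})\,O_\proba((nT)^{1/2})$, which after dividing by $T$ is only $o_\proba(n^{1/2})$, not $o_\proba(1)$; so ``the truncation indicators are absorbed via Lemma \ref{lemmaJumps} and \textbf{[C]}'' does not go through as stated. This is precisely what the paper's proof avoids: it never multiplies levels by increments term by term, but reduces both sums in (\ref{estimRho}) to endpoint values, squared increments and products of consecutive increments through the identity $2\sum_i\Delta r_i r_{i-1}=r_1^2-r_0^2-\sum_i(\Delta r_i)^2$ (and $\sum_i\Delta r_i r_{i-2}=\sum_i\Delta r_i r_{i-1}-\sum_i\Delta r_i\Delta r_{i-1}$), so that only increment-level bounds (\ref{eqJump3}) and the AR(1) approximation (\ref{estEpsilonQ}) are needed; the lag-one limit (\ref{estimNumerator}) is then already available from the proof of Proposition \ref{propH1}, and the paper's proof of the present proposition is a short corollary. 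Your route can be repaired, e.g.\ by a summation by parts on those specific cross terms (using that $\sup_i|\calt(\epsilon)_i^{def}|$ is of much smaller order than $T^{1/2}$) or by adopting the increments-only reduction, but as written this step is missing, and it is the very reason the estimator (\ref{estimRho}) is built from first differences in both numerator and denominator.
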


\section{Residual based test for cointegration} \label{sectionTest}

\subsection{Construction of the test and limit theory} \label{sectionConstructionTest}
It is perhaps of even more crucial importance to infer from the data whether $X$ and $Y$ are cointegrated or not in the first place prior to analyzing the estimated cointegration coefficients. Consequently, we now give a test for the null hypothesis that there is no cointegration against the alternative of cointegration. We let  

\begin{eqnarray*}
\calh_0 & : & \rho = 1.\\
\calh_1 & : & 0 \leq \rho < 1 \textnormal{ (independent of }n \textnormal{)}.
\end{eqnarray*}
Recall that both hypotheses induce respectively the following models on the continuous parts of $X$ and $Y$: 
\begin{eqnarray*}
\calh_0 & : & Y_i^c = c_0 + \alpha_0 X_i^c + Z_i  \label{H0}\\
\calh_1 & : & Y_i^c = c_0+\alpha_0 X_i^c + \epsilon_i \textnormal{ with } \epsilon_i = \rho \epsilon_{i-1} + \Delta Z_i,\textnormal{  } \rho < 1, \label{H1}
\end{eqnarray*}
and that the parameter $\rho$ controls how far $\calh_1$ is from $\calh_0$. As it is standard in the literature on tests for unit roots and cointegration (see e.g. \cite{pesavento2004analytical}, \cite{beare2018unit}) and useful to derive the local power of our test, we embed $\calh_0$ in the family of local alternatives $\widetilde{\calh}_1^{n,\beta}$ defined as
\bea 
\widetilde{\calh}_1^{n, \beta} & : & \rho = 1 - \frac{\beta}{n}, \textnormal{ with } \beta \geq 0, \label{H1local}
\eea
which implies the following model on the continuous parts of $X$ and $Y$,
\beas 
\widetilde{\calh}_1^{n, \beta} & : & Y_i^c = c_0+\alpha_0 X_i^c + \epsilon_i \textnormal{ with } \epsilon_i = \rho \epsilon_{i-1} + \Delta Z_i,\textnormal{  } \rho = 1 - \frac{\beta}{n}, \textnormal{ and } \beta \geq 0, 
\eeas
that corresponds to the notion of weak cointegration introduced at the end of Section \ref{sectionCointModel} when $\beta >0$, and is simply $\calh_0$ when $\beta = 0$. The canonical test in the unit root literature is the so-called Dickey-Fuller test on residuals of \cite{dickey1981likelihood}. It has been extended to many directions, such as, among others, the augmented Dickey-Fuller (ADF) test, robust to a residual following an AR(p) specification, and the $Z_t$ and $Z_\alpha$ tests of \cite{phillips1987time}, robust to autocorrelated returns under the null hypothesis of a unit root process. These tests have been later adapted to cointegration, and their asymptotic properties derived in \cite{phillips1990asymptotic}. In the present work, we choose to focus on the DF approach, performed on the estimated residuals resulting from the OLS estimation of (\ref{alphahatexplicitdef}). Before we state the main result of this section, we briefly recall the construction of the test statistic. Recall that the estimated residuals are defined for $i \in \{1,\ldots,n\}$ as 
\beas 
\widehat{\epsilon}_i =  \calt (Y)_i^{def} -  \widehat{c}- \widehat{\alpha} \calt (X)_i^{def}. 
\eeas 
 
Then, the associated DF statistic $\Psi$ is the $t$-statistic of the coefficient $\phi$ in the linear regression
$$ \Delta \widehat{\epsilon}_i = \phi \widehat{\epsilon}_{i-1} + \eta_i, \textnormal{ } i \in \{1,\ldots,n\},$$
that is, first estimating $\phi$ with
$$ \widehat{\phi} = \frac{\sum_{i=1}^n\Delta \widehat{\epsilon}_i \widehat{\epsilon}_{i-1}}{\sum_{i=1}^n \widehat{\epsilon}_i^2},$$
and estimating the standard deviation of $\widehat{\phi}$ with
$$s_{\widehat{\phi}} = \sqrt{\frac{n^{-1} \sum_{i=1}^n (\Delta \widehat{\epsilon}_i - \widehat{\phi} \widehat{\epsilon}_{i-1})^2}{\sum_{i=1}^n \widehat{\epsilon}_{i-1}^2}},$$
$\Psi$ is defined as
\bea \label{defPsi}
\Psi = \frac{\widehat{\phi}}{s_{\widehat{\phi}}}.
\eea 
We now proceed to derive the asymptotic distribution of $\Psi$ under $\widetilde{\calh}_1^{n,\beta}$, for any $\beta \geq 0$. In particular, recall that $\calh_0$ is covered by Theorem \ref{thmH0DF} below, since $\calh_0 = \widetilde{\calh}_1^{n,0}$. We need to define a few quantities before we state the main result. As in the previous section, we consider $W = (W^1,W^2)$ a standard Brownian motion on $[0,1]$, and we define the two dimensional process on $[0,1]$ and for $\beta \geq 0$
$$ J(\beta)_u = \int_0^u e^{-\beta(u-s)}\sigma_s^M dW_s, \textnormal{ }u\in [0,1].$$
Next, letting $\lambda = (r_\infty/\sqrt{1-r_\infty^2},1)^T$, we consider 
$$ \xi(\beta)_u = W^2 -\beta\int_0^u (\sigma_s^M)^{-1} \lambda^TJ(\beta)_sds, \textnormal{ }u\in [0,1]$$
and finally 
$$ H(\beta) = (W^1 - \overline{W}^1, \xi(\beta) - \overline{\xi(\beta)}),$$
where we recall that for a process $(V_u)_{u \in [0,1]}$, $\overline{V} = \int_0^1 V_udu$. Note that under $\calh_0$, $\beta=0$ and $H(0)$ is simply $W-\overline{W}$. We finally introduce 
$$ \kappa(\beta) := \left(\frac{\int_0^1 H(\beta)_u^1 H(\beta)_u^2du}{\int_0^1 (H(\beta)_u^1)^2du},-1\right)^T.$$
The next proposition shows that the OLS estimator (and therefore the associated estimated residual process) is inconsistent under  $\widetilde{\calh}_{1}^{n,\beta}$.
\begin{proposition*} \label{propOLSH0}

Let $ \beta \geq 0$. Let $L_{\alpha_0} = \sqrt{\frac{ \omega_{22}}{\omega_{11}}}\l( \sqrt{1-r_\infty^2}, r_\infty\r)^T$, and $L_{c_0} = -\sqrt{\frac{\omega_{22}}{\omega_{11}}}\l(\overline{W}^1, \overline{\xi(\beta)}\r)^T$. Then, under $\widetilde{\calh}_1^{n,\beta}$, we have the joint convergences
$$ \widehat{\alpha} - \alpha_0 \to^d  L_{\alpha_0}^T \kappa(\beta),$$
and
$$ T^{-1/2}(\widehat{c} - c_0) \to^d L_{c_0}^T \kappa(\beta).$$
\end{proposition*}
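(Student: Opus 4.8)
The plan is to write $(\widehat{\alpha},\widehat{c})$ as a continuous functional of a pair of rescaled partial--sum processes and then supply a joint functional central limit theorem. From the preserved cointegration identity (\ref{cointModified}) and the definition of $\textnormal{OLS}[\calt(X)^{def},\calt(Y)^{def}]$ one gets
\[
\widehat{\alpha}-\alpha_0=\frac{n^{-1}T^{-1}\sum_{i=1}^n(\calt(\epsilon)_i^{def}-\overline{\calt(\epsilon)^{def}})(\calt(X)_i^{def}-\overline{\calt(X)^{def}})}{n^{-1}T^{-1}\sum_{i=1}^n(\calt(X)_i^{def}-\overline{\calt(X)^{def}})^2},
\]
together with $T^{-1/2}(\widehat{c}-c_0)=T^{-1/2}\overline{\calt(\epsilon)^{def}}-(\widehat{\alpha}-\alpha_0)T^{-1/2}\overline{\calt(X)^{def}}$. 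Both are continuous functionals of the two processes $u\mapsto T^{-1/2}\calt(X)_{[nu]}^{def}$ and $u\mapsto T^{-1/2}\calt(\epsilon)_{[nu]}^{def}$ on $[0,1]$ (the denominator limit being a.s.\ strictly positive), so a single joint weak limit for that pair produces, via the continuous mapping theorem, both asserted convergences simultaneously, and in particular jointly.

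The first step is to discard the truncation and the randomness of the deflation. Under \textbf{[A]}--\textbf{[C]} the jump increments that survive truncation are asymptotically negligible and the continuous increments are almost never truncated, by the estimates of Lemma \ref{lemmaJumps}; and by Proposition \ref{lemmaCiWeak} with Lemma \ref{lemmaRVLocal}, $C_i\to(\sigma^M_{(t_i/T)-})^2\omega_{11}$ in $\mathbb{L}^2$, uniformly in $i$ outside an index set of cardinality $o(n)$ (the windows straddling a discontinuity of $\sigma^M$). Propagating these replacements uniformly in $u$ along partial sums that are $O_\proba(\sqrt{T})$ is handled with Doob/Burkholder maximal inequalities. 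After this reduction it suffices to analyse $\omega_{11}^{-1/2}\sum_{j\le nu}(\sigma^M_{(t_j/T)-})^{-1}\Delta X^c_j$ and, using $\Delta\epsilon_j=\Delta Z_j-(\beta/n)\epsilon_{j-1}$, the process $\omega_{11}^{-1/2}\sum_{j\le nu}(\sigma^M_{(t_j/T)-})^{-1}(\Delta Z_j-(\beta/n)\epsilon_{j-1})$.

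The joint FCLT splits into three pieces. Because $\sigma^M$ cancels in $\Delta X^c_j/\sqrt{C_j}$ and in $\Delta Z_j/\sqrt{C_j}$, the martingale parts reduce to $\omega_{11}^{-1/2}T^{-1/2}\int_0^{uT}\sigma^X_s\,dW^X_s$ and $\omega_{11}^{-1/2}T^{-1/2}\int_0^{uT}\sigma^Z_s\,dW^Z_s$; applying the continuous martingale functional CLT to the two--dimensional martingale $(T^{-1/2}\int_0^{\cdot T}\sigma^X\,dW^X,\,T^{-1/2}\int_0^{\cdot T}\sigma^Z\,dW^Z)$, whose quadratic variation tends to $u\Omega$ by the ergodicity in \textbf{[B]} (the same tool as in Theorem \ref{thmH1}), identifies the joint limit with $\omega_{11}^{-1/2}(B^1,B^2)=\omega_{11}^{-1/2}LW$, i.e.\ with $W^1$ and $\sqrt{\omega_{22}/\omega_{11}}(r_\infty W^1+\sqrt{1-r_\infty^2}W^2)$. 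For the remaining ``drift'' piece $-(\beta/n)T^{-1/2}\sum_{j\le nu}(\sigma^M_{(t_j/T)-})^{-1}\epsilon_{j-1}$, which is genuinely of order $\sqrt T$ (it adds $n$ terms each $O_\proba(\sqrt{T})$ scaled by $\beta/n$) and hence contributes to the limit, I would first establish the functional convergence of $T^{-1/2}\epsilon_{[nw]}$ to $\int_0^w e^{-\beta(w-s)}\sigma^M_s\,dB^2_s$ by combining $\rho^{[nw]-j}=(1-\beta/n)^{[nw]-j}\to e^{-\beta(w-t_j/T)}$ with the convergence argument already used in the proof of Remark \ref{rmkOU} (\cite{jacod2013limit}), and then pass a Riemann--sum/bounded--convergence argument through the $ds$--integral to obtain the limit $\beta\int_0^u(\sigma^M_s)^{-1}(\int_0^s e^{-\beta(s-r)}\sigma^M_r\,dB^2_r)\,ds$. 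Rewriting $B^2$ via $L$ in terms of $(W^1,W^2)$ and $\int_0^\cdot e^{-\beta(\cdot-r)}\sigma^M_r\,dB^2_r$ in terms of $\lambda^T J(\beta)$, this gives that $T^{-1/2}\calt(\epsilon)_{[n\cdot]}^{def}$ converges to a fixed linear combination of $W^1$ and $\xi(\beta)$, jointly with $T^{-1/2}\calt(X)_{[n\cdot]}^{def}\to^d W^1$.

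Finally I would invoke the continuous mapping theorem: the denominator above converges to $\int_0^1(W^1_u-\overline{W}^1)^2\,du>0$ a.s., the numerator to the matching cross integral, and $T^{-1/2}\overline{\calt(\epsilon)^{def}}$ and $T^{-1/2}\overline{\calt(X)^{def}}$ to the averages over $[0,1]$ of the respective limit processes; expanding along $H(\beta)=(W^1-\overline{W}^1,\xi(\beta)-\overline{\xi(\beta)})$ and collecting the constants $\sqrt{\omega_{22}/\omega_{11}}$, $r_\infty$, $\sqrt{1-r_\infty^2}$ and the means $\overline{W}^1$, $\overline{\xi(\beta)}$ reproduces $L_{\alpha_0}^T\kappa(\beta)$ and $L_{c_0}^T\kappa(\beta)$. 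The crux of the argument is the joint functional convergence under the random deflation: one must carry the only partially uniform $\mathbb{L}^2$--consistency of $C_i$ through partial sums of size $O_\proba(\sqrt{T})$ rather than $O_\proba(1)$, and simultaneously handle the near--integrated residual $\epsilon$ --- whose \emph{increments}, not levels, are deflated --- so that the non--negligible drift term $-(\beta/n)\sum(\sigma^M)^{-1}\epsilon_{j-1}$ is correctly identified; the truncation part is, by contrast, routine given Lemma \ref{lemmaJumps}.
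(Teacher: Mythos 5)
Your proposal is correct and follows essentially the same route as the paper's own proof: truncation is discarded as negligible (Lemma \ref{lemmaJumps} / Lemma \ref{lemmaTruncationUCP}), the deflator is handled through the $\mathbb{L}^2$-consistency of $C_i$ (Lemma \ref{lemmaRVLocal}) on the localized event where $C_{i,n}$ is bounded below, the joint limit of the deflated $X$, $Z$ partial sums and of the near-integrated residual is obtained from the martingale FCLT under the ergodicity in \textbf{[B]} (Lemma \ref{lemmaLimitIntX}), the drift term $-(\beta/n)\sum_j C_j^{-1/2}\epsilon_{j-1}$ coming from $\Delta\epsilon_j=\Delta Z_j-(\beta/n)\epsilon_{j-1}$ is converted to $\beta\int_0^\cdot(\sigma_s^M)^{-1}I(\beta)_s ds$ by a Riemann-sum argument (lemmas \ref{lemmaRiemann} and \ref{lemmaEpsilonDef}), and the conclusion follows from the continuous mapping theorem together with the algebra $F(\beta)=LH(\beta)$ yielding $L_{\alpha_0}^T\kappa(\beta)$ and $L_{c_0}^T\kappa(\beta)$. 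The only (harmless) deviations are that you substitute $\omega_{11}^{-1/2}(\sigma^M)^{-1}$ for $C_j^{-1/2}$ inside the partial sums via a Doob-type maximal inequality, whereas the paper keeps the random weights and feeds the consistency of $C_{i,n}$ into the quadratic-variation limit, and that you identify the limit of $T^{-1/2}\epsilon_{[n\cdot]}$ through the deterministic exponential kernel acting on the partial-sum limit of $Z$ rather than through the paper's auxiliary martingale $G^n=\rho^{-un}\widetilde{\epsilon}_u^{\,n}$.
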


We are now ready to state the limit distribution of $\Psi$ under any local alternative $\widetilde{\calh}_{1}^{n,\beta}$. Define 
$$ Q(\beta) = \kappa(\beta)^T H(\beta) = \frac{\int_0^1 H(\beta)_u^1 H(\beta)_u^2du}{\int_0^1 (H(\beta)_u^1)^2du} H(\beta)^1 - H(\beta)^2.$$
\begin{theorem*} 
\label{thmH0DF}
Let $ \beta \geq 0$. Under $\widetilde{\calh}_1^{n,\beta}$,

$$ \Psi \to^d \frac{\int_0^1 Q(\beta)_s dQ(\beta)_s}{\sqrt{\kappa(\beta)^T \kappa(\beta)\int_0^1 Q(\beta)_s^2ds} }.$$
In particular, under $\calh_0$, we have 
$$ \Psi \to^d \frac{\int_0^1 Q_s dQ_s}{\sqrt{\kappa^T \kappa\int_0^1 Q_s^2ds} }$$
where 
$$ \kappa:= \kappa(0)=  \left(\frac{\int_0^1 (W_u^1 - \overline{W}^1)  (W_u^2 - \overline{W}^2)du}{\int_0^1 (W_u^1 - \overline{W}^1)^2du},-1\right)^T$$
and
$$ Q := Q(0) = \kappa^T(W-\overline{W}).$$
\end{theorem*}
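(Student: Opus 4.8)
The plan is to write $\Psi$ as a continuous functional of a few partial sums built from the estimated residuals, establish a joint functional limit theorem for those sums under $\widetilde{\calh}_1^{n,\beta}$, and conclude by the continuous mapping theorem. Throughout, ``level'' processes are rescaled by $T^{-1/2}$ and indexed by $i=[nu]$, $u\in[0,1]$. By the cointegration identity (\ref{cointModified}) and the definition of the OLS estimator, $\widehat{\epsilon}_i=\calt(\epsilon)_i^{def}-\overline{\,\cdot\,}_{\calt(\epsilon)^{def}}-(\widehat{\alpha}-\alpha_0)(\calt(X)_i^{def}-\overline{\,\cdot\,}_{\calt(X)^{def}})$, where $\overline{\,\cdot\,}$ denotes the sample mean over $i=1,\dots,n$; that is, $\widehat{\epsilon}$ is exactly the OLS residual of $\calt(\epsilon)^{def}$ on $\{1,\calt(X)^{def}\}$. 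Hence everything reduces to (i) the joint (Skorokhod) limit of the rescaled pair $(T^{-1/2}\calt(X)^{def}_{[n\cdot]},\,T^{-1/2}\calt(\epsilon)^{def}_{[n\cdot]})$ under $\widetilde{\calh}_1^{n,\beta}$, and (ii) the joint limit of $(\widehat{\alpha}-\alpha_0,\,T^{-1/2}(\widehat{c}-c_0))$, which is the content of Proposition \ref{propOLSH0}.

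For (i) I would carry out three successive reductions. \emph{Truncation.} By the summability of the jumps (\ref{momentJumps}), the truncation estimates of the Appendix (Lemma \ref{lemmaJumps}, in the spirit of \cite{mancini2009non}) and $\overline{\omega}\in(0,1/2)$, each indicator $\mathbb{1}_{\{|\Delta X_j|\leq a\Delta^{\overline{\omega}}\}\cap\{|\Delta Y_j|\leq a\Delta^{\overline{\omega}}\}}$ may be replaced by $1$ at the price of terms uniformly negligible in the partial-sum processes; here one must observe that, although the indicator depends on $\Delta Y_j=\alpha_0\Delta X_j+\Delta\epsilon_j$ and hence, through $\Delta\epsilon_j=\Delta Z_j+(\rho-1)\epsilon_{j-1}$, on the near-integrated $\epsilon_{j-1}$, the contribution $(\rho-1)\epsilon_{j-1}=O_\proba(T^{1/2}/n)$ is far below the threshold $a\Delta^{\overline{\omega}}$, so truncation still acts purely at increment scale. \emph{Deflation.} Replace $\sqrt{C_j}$ by $\sigma^M_{(t_j/T)-}\sqrt{\omega_{11}}$ using Proposition \ref{lemmaCiWeak} and its uniform refinement Lemma \ref{lemmaRVLocal}, verifying that the exceptional index set of that lemma (nonempty since $\sigma^M$ may jump) has negligible aggregate effect on the sums. \emph{FCLT.} After these reductions $T^{-1/2}\calt(X)^{def}_{[n\cdot]}$ equals $T^{-1/2}\sum_{j\leq[n\cdot]}\omega_{11}^{-1/2}\sigma^X_{t_j}\Delta W^X_j$ up to negligible terms, and a martingale central limit theorem with the ergodic time change $u\mapsto T^{-1}\!\int_0^{uT}\omega_{11}^{-1}(\sigma^X_s)^2ds\to u$ (Assumption \textbf{[B]}) yields convergence to a standard Brownian motion, which we identify with $W^1$; jointly, the $\Delta Z_j$-part of $\calt(\epsilon)^{def}$ converges to a Brownian motion with variance parameter $\omega_{22}/\omega_{11}$ and correlation $r_\infty$ with $W^1$. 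The ``near-integration'' part $(\rho-1)\sum_{j\leq[n\cdot]}\epsilon_{j-1}/\sqrt{C_j}=-\tfrac{\beta}{n}\sum_{j\leq[n\cdot]}\epsilon_{j-1}/\sqrt{C_j}$ is treated by first noting (as in the computation behind Remark \ref{rmkOU}, now with $\rho^{n\cdot}\to e^{-\beta(\cdot)}$) that $T^{-1/2}\epsilon_{[n\cdot]}$ converges to an exponentially-weighted integral carrying $\sigma^M$ \emph{inside}, proportional to $\lambda^T J(\beta)$, so that dividing by $\sqrt{C_j}\approx\sigma^M_{(t_j/T)-}\sqrt{\omega_{11}}$ (the current value of $\sigma^M$) turns the sum into a Riemann sum converging to $-\beta\,(\mathrm{const})\!\int_0^\cdot(\sigma^M_s)^{-1}\lambda^T J(\beta)_sds$. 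Collecting terms and using the algebraic identity $\lambda^T\!\left[W_u-\beta\!\int_0^u(\sigma^M_s)^{-1}J(\beta)_sds\right]=(r_\infty/\sqrt{1-r_\infty^2})W^1_u+\xi(\beta)_u$, one obtains $T^{-1/2}\calt(\epsilon)^{def}_{[n\cdot]}\to^d \mu\, r_\infty W^1+\mu\sqrt{1-r_\infty^2}\,\xi(\beta)$ with $\mu:=\sqrt{\omega_{22}/\omega_{11}}$ (matching the normalization of Proposition \ref{propOLSH0}).

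To assemble $\Psi$, combine (i) with Proposition \ref{propOLSH0}. Since $\widehat{\epsilon}$ is the OLS residual of $\calt(\epsilon)^{def}$ on $\{1,\calt(X)^{def}\}$, the Frisch--Waugh identity applied in the limit gives $T^{-1/2}\widehat{\epsilon}_{[n\cdot]}\to^d c\,Q(\beta)$ with $c=-\mu\sqrt{1-r_\infty^2}\neq0$: the $W^1$-component of $\mu\, r_\infty W^1+\mu\sqrt{1-r_\infty^2}\,\xi(\beta)$ is absorbed by the regression, leaving $\mu\sqrt{1-r_\infty^2}$ times the projection residual of $\xi(\beta)$ on $\{1,W^1\}$, which is $-Q(\beta)$. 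Moreover $[Q(\beta)]_1=\kappa(\beta)^T\kappa(\beta)$, because $W^1$ and $W^2$ are independent (so $[W^1,\xi(\beta)]\equiv0$) and the $\int ds$-part of $\xi(\beta)$ has zero quadratic variation. Now $\widehat{\phi}=\sum\Delta\widehat{\epsilon}_i\widehat{\epsilon}_{i-1}/\sum\widehat{\epsilon}_i^2=O_\proba(1/n)$, so $\widehat{\phi}\widehat{\epsilon}_{i-1}$ is of smaller order than $\Delta\widehat{\epsilon}_i$, $s_{\widehat{\phi}}^2$ is asymptotically $n^{-1}\sum(\Delta\widehat{\epsilon}_i)^2/\sum\widehat{\epsilon}_{i-1}^2$, and $\sum\widehat{\epsilon}_{i-1}^2$, $\sum\widehat{\epsilon}_i^2$ are asymptotically equivalent. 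Using the discrete It\^{o} identity $\sum_i A_{i-1}\Delta A_i=\tfrac12(A_n^2-A_0^2)-\tfrac12\sum_i(\Delta A_i)^2$ (with $A=\widehat{\epsilon}$), together with Riemann-sum and realized-variance convergence, $\sum\Delta\widehat{\epsilon}_i\widehat{\epsilon}_{i-1}\to^d Tc^2\!\int_0^1Q(\beta)_udQ(\beta)_u$, $\sum\widehat{\epsilon}_{i-1}^2\sim nTc^2\!\int_0^1Q(\beta)_u^2du$, and $n^{-1}\sum(\Delta\widehat{\epsilon}_i)^2\to^d\Delta\,c^2[Q(\beta)]_1=\Delta\,c^2\kappa(\beta)^T\kappa(\beta)$. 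Substituting into $\Psi=\widehat{\phi}/s_{\widehat{\phi}}$ makes the factors $T$, $n$ and $c^2$ cancel and produces the claimed limit. The case $\calh_0$ is $\beta=0$: then $J(0)\equiv0$, $\xi(0)=W^2$, $H(0)=W-\overline{W}$, $\kappa=\kappa(0)$ and $Q=Q(0)=\kappa^T(W-\overline{W})$, so no dependence on $\sigma^M$ survives.

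The main obstacle is the joint functional limit theorem of the second paragraph; the assembly in the third paragraph is continuous mapping plus elementary manipulations. Three features make the FCLT non-standard. First, the deflator $\sqrt{C_j}$ is a ratio of local realized volatilities over a \emph{shifted} window $[t_{j-k},t_{j-l-1}]$, whose error relative to $\sigma^M$ has to be controlled uniformly enough across the roughly $n$ increments even though $\sigma^M$ is only \cadlag; thus Lemma \ref{lemmaRVLocal}'s ``uniformity off a negligible index set'' must be leveraged carefully to bound the bad indices' contribution to partial sums. Second, under $\widetilde{\calh}_1^{n,\beta}$ the truncation indicator couples $\Delta X_j$ to the near-integrated $\epsilon_{j-1}$, and one must confirm this coupling is asymptotically harmless (it is, since $T^{1/2}/n\ll\Delta^{\overline{\omega}}$). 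Third, tracking $\sigma^M$ through the deflation in the near-integration drift requires care precisely because $\sigma^M$ does \emph{not} cancel there -- this is what produces the volatility-curve-dependent correction $\xi(\beta)$ -- and obtaining its exact form (and reconciling the $W^1$-component with Proposition \ref{propOLSH0}) is the delicate bookkeeping of the argument.
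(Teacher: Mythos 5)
Your proposal is correct and follows essentially the same route as the paper's proof: a joint functional limit theorem for the truncated, deflated level processes and the residual-generating process under $\widetilde{\calh}_1^{n,\beta}$ (the content of Lemmas \ref{lemmaTruncationUCP}, \ref{lemmaLimitIntX}, \ref{lemmaEpsilonDef} and \ref{lemmaRiemann}), combined with the limits of $(\widehat{\alpha},\widehat{c})$ from Proposition \ref{propOLSH0} to conclude that the scaled residual process converges to $-\sqrt{\omega_{22}(1-r_\infty^2)/\omega_{11}}\,Q(\beta)$, followed by exactly the same assembly of $\widehat{\phi}$ and $s_{\widehat{\phi}}$ in which the cross terms involving $\widehat{\phi}\widehat{\epsilon}_{i-1}$ are negligible and the realized variance of the residual increments produces the factor $\kappa(\beta)^T\kappa(\beta)$. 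Two cautions on wording rather than substance: the reduction you call "replacing each indicator by $1$" must in fact be "replacing each truncated increment by the corresponding continuous-part increment" (keeping the jumps would leave a contribution of order $T$, not negligible after the $T^{-1/2}$ scaling; this is precisely what Lemma \ref{lemmaTruncationUCP} establishes), and the limit $n^{-1}\sum_i(\Delta\widehat{\epsilon}_i)^2\sim \Delta\,\omega_{11}^{-1}\omega_{22}(1-r_\infty^2)\,\kappa(\beta)^T\kappa(\beta)$ cannot be read off the functional limit of the residual by continuous mapping (quadratic variation is not continuous in that topology); as in the paper it requires expanding $\Delta\widehat{\epsilon}_i$ in the returns of $\calt(X)^{def}$ and $\calt(Y)^{def}$, the convergence of the truncated deflated realized covariances, and the joint limit of $\widehat{\alpha}$ with Slutsky's lemma, which then yields the value you assert.
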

The next proposition gives the behavior of $\Psi$ (and proves the consistency of the test) under $\calh_1$, and provides an equivalent of the statistics under the stronger assumption $\textnormal{\textbf{[D]}}$.  
\begin{proposition*} \label{propH1} 
Under $\calh_1$, we have
$$  \Psi \to^\proba - \infty.$$
Moreover, under $\textnormal{\textbf{[D]}}$, we have 
$$\Psi \sim^\proba - n^{1/2} \sqrt{\frac{1-\rho}{1+\rho}}.$$
\end{proposition*}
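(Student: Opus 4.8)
The plan is to write $\Psi$ in closed form and identify the probability limits of its quadratic building blocks. Put $S_{xx}=\sum_{i=1}^n\widehat{\epsilon}_{i-1}^2$, $S_{xy}=\sum_{i=1}^n\Delta\widehat{\epsilon}_i\widehat{\epsilon}_{i-1}$ and $S_{yy}=\sum_{i=1}^n(\Delta\widehat{\epsilon}_i)^2$; then $\widehat{\phi}=S_{xy}/S_{xx}$, $s_{\widehat{\phi}}^2=n^{-1}(S_{yy}-S_{xy}^2/S_{xx})/S_{xx}$, so $\Psi$ has the sign of $S_{xy}$ and $\Psi^2=nS_{xy}^2/\big(S_{xx}(S_{yy}-S_{xy}^2/S_{xx})\big)$. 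First I would replace $\widehat{\epsilon}_i$ by the genuine deflated–truncated residual $\calt(\epsilon)_i^{def}$ everywhere: the OLS centering term is $O_\proba(\sqrt{T}/n)=o_\proba(\sqrt{\Delta})$; the estimation error $-(\widehat{\alpha}-\alpha_0)\calt(X)_i^{def}-(\widehat{c}-c_0)$ is absorbed using the rates $\widehat{\alpha}-\alpha_0=O_\proba(n^{-1})$, $T^{-1/2}(\widehat{c}-c_0)=O_\proba(n^{-1})$, which are obtained along the proof of the consistency part of Theorem \ref{thmH1} under \textbf{[A]}--\textbf{[C]} alone (hence the first assertion needs no \textbf{[D]}); and the residual small jumps $\sum_{j\le i}C_j^{-1/2}(\Delta J_j^Y-\alpha_0\Delta J_j^X)\mathbb{1}_{\{\cdots\}}$ are dealt with by the truncation bounds of Lemma \ref{lemmaJumps} together with the constraints on $\overline{\omega}$ and on $T/n$ in \textbf{[C]}, \textbf{[D]} being invoked in the sharp equivalent only, to render this jump term genuinely negligible (rather than merely truncation-controlled) in the three quadratic forms.

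The structural step comes next. Because $C_i$ uses only the $\Delta X_j$ with $j\le i-l-1$, it is $\calf_{t_{i-1}}$-measurable (this is the role of the lag $l$), so $\zeta_i:=C_i^{-1/2}\Delta Z_i$ satisfies $\esp[\zeta_i\mid\calf_{t_{i-1}}]=0$. Writing $\Delta\calt(\epsilon)_i^{def}=C_i^{-1/2}\Delta\epsilon_i\,\mathbb{1}_{\{\cdots\}}=(\rho-1)C_i^{-1/2}\epsilon_{i-1}\,\mathbb{1}_{\{\cdots\}}+\zeta_i$ on the truncation set (which omits only negligibly many indices) and summing by parts to replace $C_i^{-1/2}\epsilon_{i-1}$ by $\widehat{\epsilon}_{i-1}$, one gets the discrete ``continuous-time AR(1)'' identity $\Delta\widehat{\epsilon}_i=(\rho-1)\widehat{\epsilon}_{i-1}+\zeta_i+e_i$, where $e_i$ collects $(C_i^{-1/2}-C_{i-1}^{-1/2})\epsilon_{i-1}$, the centering, the estimation-error increments and the small jumps, and one checks $\sum_i e_i^2=o_\proba(T)$, $\sum_i e_i\widehat{\epsilon}_{i-1}=o_\proba(T)$, $\sum_i e_i\zeta_i=o_\proba(T)$.

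Plugging this into $S_{xy}$ and $S_{yy}$ and using that, up to the $O_\proba(n^{-1})$-sized global $\widehat{\alpha}$-term, $\{\zeta_i\widehat{\epsilon}_{i-1}\}$ is a martingale-difference array with predictable quadratic variation $\sum_i C_i^{-1}\esp[(\Delta Z_i)^2\mid\calf_{t_{i-1}}]\widehat{\epsilon}_{i-1}^2=O_\proba(\Delta)O_\proba(T)=o_\proba(T)$, Doob's inequality gives $\sum_i\zeta_i\widehat{\epsilon}_{i-1}=o_\proba(T)$, whence $S_{xy}=(\rho-1)S_{xx}+o_\proba(T)$ and $S_{yy}-S_{xy}^2/S_{xx}=\sum_i\zeta_i^2+o_\proba(T)$. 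By the law of large numbers for the quadratic variation of the martingale increments $\Delta Z_i$, Proposition \ref{lemmaCiWeak} and the cancellation of the common factor $(\sigma^M)^2$ between $C_i$ and $\esp[(\Delta Z_i)^2\mid\calf_{t_{i-1}}]$, one has $T^{-1}\sum_i\zeta_i^2=\omega_{11}^{-1}T^{-1}\int_0^T(\sigma_s^Z)^2\,ds+o_\proba(1)\to^\proba\omega_{22}/\omega_{11}$ by the ergodicity in \textbf{[B]}, and $S_{xx}=\Theta_\proba(T)$. This already gives $\widehat{\phi}=(\rho-1)+o_\proba(1)\to^\proba\rho-1<0$ and $s_{\widehat{\phi}}^2=O_\proba(n^{-1})\to0$, so $\Psi=\widehat{\phi}/s_{\widehat{\phi}}\to^\proba-\infty$: the first assertion. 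For the second, one further needs $T^{-1}S_{xx}\to^\proba\omega_{22}/\big((1-\rho^2)\omega_{11}\big)$; using $\widehat{\epsilon}_{i-1}^2\approx C_{i-1}^{-1}\epsilon_{i-1}^2$ and $\epsilon_{i-1}^2=\sum_{j\le i-1}\rho^{2(i-1-j)}(\Delta Z_j)^2+(\text{off-diagonal})$, exchanging the order of summation turns the diagonal into $(1-\rho^2)^{-1}\sum_i\zeta_i^2$ up to negligible terms (the off-diagonal is a vanishing martingale-type sum), which yields the constant. Then $\Psi^2=nS_{xy}^2/\big(S_{xx}(\sum_i\zeta_i^2+o_\proba(T))\big)\sim n(\rho-1)^2S_{xx}/\big((\omega_{22}/\omega_{11})T\big)\to n(1-\rho)^2/(1-\rho^2)=n(1-\rho)/(1+\rho)$, and since $S_{xy}<0$ eventually, $\Psi\sim^\proba-n^{1/2}\sqrt{(1-\rho)/(1+\rho)}$.

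The main obstacle is the simultaneous control of the error array $e_i$ — truncation residuals, the estimation error inherited from Theorem \ref{thmH1}, the non-constancy of the deflator $C_i$, and the centering — in all three quadratic forms, and, for the sharp equivalent, pinning down the exact constants, in particular the factor $(1-\rho^2)^{-1}$ (not $(1-\rho)^{-2}$) in the limit of $T^{-1}S_{xx}$.
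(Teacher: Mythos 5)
Your computation of the sharp equivalent under \textbf{[D]} is essentially the paper's (same constants, in particular the $(1-\rho^2)^{-1}$ in the limit of $T^{-1}S_{xx}$), reached through an explicit AR(1)-type identity for $\Delta\widehat{\epsilon}_i$ rather than the paper's proxy $q_{t_i/T_n}=\sum_{j\leq i}\rho^{i-j}\Delta\widetilde{Z}^{def,n}_{t_j/T_n}$, and that part is fine modulo the unproved bounds on your error array $e_i$ (the analogue of the paper's estimate (\ref{estEpsilonQ})). The genuine gap is in the first assertion, which must hold under \textbf{[A]}--\textbf{[C]} alone. You absorb the estimation error using the rates $\widehat{\alpha}-\alpha_0=O_\proba(n^{-1})$ and $T^{-1/2}(\widehat{c}-c_0)=O_\proba(n^{-1})$, claiming they come from the consistency part of Theorem \ref{thmH1}; but under \textbf{[A]}--\textbf{[C]} the consistency argument only yields $o_\proba(1)$, and the $n^{-1}$ rate is a product of the central limit theorem, whose proof requires Lemma \ref{lemmaJumpsH1} and hence \textbf{[D]}. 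More fundamentally, without \textbf{[D]} the \emph{level} contamination of $\widehat{\epsilon}_i$ by the surviving small jumps, $\sum_{j\leq i}C_j^{-1/2}(\Delta J_j^Y-\alpha_0\Delta J_j^X)\mathbb{1}_{\{\cdots\}}$, is controlled only at the scale $o_\proba(T^{1/2})$ (this is what Lemma \ref{lemmaTruncationUCP}, built on (\ref{eqJump3}) and \textbf{[C]}, delivers), whereas the true residual $\epsilon_i$ is of order $\Delta^{1/2}$ under $\calh_1$. Consequently none of $S_{xx}=\sum_i\widehat{\epsilon}_{i-1}^2$ being of exact order $T$, $S_{xy}=(\rho-1)S_{xx}+o_\proba(T)$, $\widehat{\phi}\to^\proba\rho-1$, or $s_{\widehat{\phi}}^2=O_\proba(n^{-1})$ is justified without \textbf{[D]}; indeed the paper itself only proves $\sum_i(T^{-1/2}\widehat{\epsilon}_i)^2=o_\proba(n)$ in that regime and establishes $\widehat{\phi}\to^\proba\rho-1$ (its (\ref{convPhiH1})) exclusively under \textbf{[D]}.

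The missing idea is how the paper sidesteps the levels. It writes the numerator by summation by parts, $2\sum_i\Delta\widehat{\epsilon}_i\widehat{\epsilon}_{i-1}=\widehat{\epsilon}_n^2-\widehat{\epsilon}_0^2-\sum_i(\Delta\widehat{\epsilon}_i)^2$, so that only endpoint levels and squared increments appear; increments and endpoints are jump- and estimation-error-robust under \textbf{[A]}--\textbf{[C]} via (\ref{eqJump3}), (\ref{estEpsilonQ}) and the consistency of $(\widehat{\alpha},\widehat{c})$, giving $T^{-1}S_{xy}\to^\proba(\rho-1)(1-\rho^2)^{-1}\omega_{11}^{-1}\omega_{22}<0$ without \textbf{[D]}. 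It then deliberately does not identify $S_{xx}$: the crude bounds $T^{-1}S_{xx}=o_\proba(n)$ and $T^{-1}\sum_i(\Delta\widehat{\epsilon}_i-\widehat{\phi}\widehat{\epsilon}_{i-1})^2=O_\proba(1)$ suffice, because in $\Psi^{-1}=n^{-1/2}\sqrt{\sum_i(\Delta\widehat{\epsilon}_i-\widehat{\phi}\widehat{\epsilon}_{i-1})^2\sum_i\widehat{\epsilon}_i^2}\,\big/\sum_i\Delta\widehat{\epsilon}_i\widehat{\epsilon}_{i-1}$ the explicit $n^{-1/2}$ factor beats the $o_\proba(n)$ bound, yielding $\Psi^{-1}\to^\proba 0$ with eventually negative sign. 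To repair your argument for the first claim you would need either this cruder route, or a proof that the cumulative truncated-jump term is $o_\proba(\Delta^{1/2})$ in levels, which \textbf{[A]}--\textbf{[C]} do not provide (and which is exactly what \textbf{[D]} postulates).
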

When $\beta = 0$, the limit distribution of $\Psi$ in Theorem \ref{thmH0DF} is the same as the one of the ADF statistic in Theorem 4.2 of \cite{phillips1990asymptotic}, up to the mean component $\overline{W}$ coming from the fact that an intercept is present in the regression. Therefore, under $\calh_0$, as in the previous section, the truncation and the deflation completely cancel the impact of jumps and that of the non ergodic volatility $\sigma^M$ that may affect $\Psi$. Moreover, in Proposition \ref{propH1}, the divergence rate of $\Psi$ also corresponds to the standard one (see Theorem 5.1 in \cite{phillips1990asymptotic}). However, under the local alternative $\widetilde{\calh}_1^{n,\beta}$ with $\beta >0$, the limit of $\Psi$ depends on the shape of $\sigma^M$, so that the local power of the test may be affected by a non ergodic volatility component. This feature was already present in the time-varying variance robust unit root tests of \cite{beare2018unit}. More importantly, without jumps and if $\sigma^M = 1$, a careful examination of Theorem 1 in \cite{pesavento2004analytical} shows that the limit distribution of the standard DF and our own modified test coincide for any $\beta \geq 0$: no local power is lost when applying the truncation and the deflation even in the absence of those features. Finally, as a direct corollary of Theorem \ref{thmH0DF} and Proposition \ref{propH1}, we conclude this section with the consistency of the modified DF test.
\begin{corollary*}
Let $\delta \in (0,1)$ and $q_\delta$ be the $\delta$-quantile of $\int_0^1 Q_sdQ_s/\sqrt{\kappa^T\kappa \int_0^1 Q_s^2ds}$. Then the test statistic $\Psi$ satisfies
$$ \proba(\Psi < q_\delta | \calh_0) \to \delta \textnormal{ and } \proba(\Psi <q_\delta | \calh_1 ) \to 1.$$
\end{corollary*}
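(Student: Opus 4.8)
The plan is to read both statements off from Theorem \ref{thmH0DF} and Proposition \ref{propH1}, the only substantive point being that the limiting law of $\Psi$ under $\calh_0$ assigns no mass to the threshold $q_\delta$.

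\emph{Size under $\calh_0$.} Since $\calh_0=\widetilde{\calh}_1^{n,0}$, Theorem \ref{thmH0DF} gives $\Psi\to^d\Psi_\infty$, where $\Psi_\infty:=\int_0^1 Q_s\,dQ_s\big/\sqrt{\kappa^T\kappa\int_0^1 Q_s^2\,ds}$ and, by definition, $q_\delta$ is the $\delta$-quantile of $\Psi_\infty$. First I would note that $\Psi_\infty$ is a.s. well defined: $W^1$ being non-degenerate, $\int_0^1(W^1_u-\overline W^1)^2\,du>0$ a.s., so $\kappa$ is a.s. finite with $\kappa^T\kappa=1+(\kappa^1)^2>0$, and $Q=\kappa^T(W-\overline W)$ is a non-constant continuous Gaussian process, so $\int_0^1 Q_s^2\,ds>0$ a.s. The crux is then to show that the distribution function $F$ of $\Psi_\infty$ is continuous at $q_\delta$. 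I would argue this by conditioning on $W^1$: given $W^1$, the vector $\kappa$ is deterministic, $Q_u=\kappa^1(W^1_u-\overline W^1)-(W^2_u-\overline W^2)$ is an It\^o process with $dQ_u=\kappa^1\,dW^1_u-dW^2_u$ driven by the Brownian motion $W^2$ (independent of $W^1$), and $\int_0^1 Q_s\,dQ_s=\tfrac12(Q_1^2-Q_0^2)-\tfrac12(1+(\kappa^1)^2)$; hence the conditional law of $\Psi_\infty$ is that of a non-degenerate, non-constant measurable functional of a Brownian path, which is atomless. Integrating over $W^1$, $F$ is continuous, so $F(q_\delta)=\delta$, and by the Portmanteau theorem $\proba(\Psi<q_\delta\mid\calh_0)\to\proba(\Psi_\infty<q_\delta)=\proba(\Psi_\infty\le q_\delta)=\delta$. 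Alternatively, one may simply invoke that $\Psi_\infty$ coincides, up to the mean correction by $\overline W$, with the classical residual-based Dickey--Fuller limit (Theorem 4.2 of \cite{phillips1990asymptotic}), which is known to be absolutely continuous.

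\emph{Consistency under $\calh_1$.} Proposition \ref{propH1} gives $\Psi\to^\proba-\infty$, i.e. $\proba(\Psi>K\mid\calh_1)\to0$ for every fixed $K\in\reels$. Since $q_\delta$ is a fixed real number, $\{\Psi\ge q_\delta\}\subseteq\{\Psi>q_\delta-1\}$, whence $\proba(\Psi<q_\delta\mid\calh_1)\ge1-\proba(\Psi>q_\delta-1\mid\calh_1)\to1$, so that $\proba(\Psi<q_\delta\mid\calh_1)\to1$.

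The only step requiring genuine care is the atomlessness (equivalently, continuity at $q_\delta$ of the cdf) of the $\calh_0$-limit; this is what converts the weak convergence of Theorem \ref{thmH0DF} into convergence of the rejection probability at the fixed critical value $q_\delta$. Everything else — the a.s. finiteness of the limiting random variable and the inclusion of events under $\calh_1$ — is routine bookkeeping, and the continuity itself is either the short conditioning argument sketched above or a direct appeal to the unit-root/cointegration literature.
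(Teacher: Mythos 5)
Your overall route is exactly the paper's: the corollary is presented there as a direct consequence of Theorem \ref{thmH0DF} (which gives the null limit law whose $\delta$-quantile defines $q_\delta$) and Proposition \ref{propH1} (which gives $\Psi \to^\proba -\infty$ under $\calh_1$), with continuity of the limiting cdf at $q_\delta$ treated as standard; your handling of the $\calh_1$ part is the same routine argument.

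One caveat concerning the step you yourself flag as the only delicate one: your conditioning argument for atomlessness does not work as written. Conditionally on $W^1$, the coefficient $\kappa^1 = \int_0^1 (W_u^1-\overline{W}^1)(W_u^2-\overline{W}^2)du / \int_0^1 (W_u^1-\overline{W}^1)^2du$ is \emph{not} deterministic --- it is itself a functional of $W^2$ --- so, given $W^1$, $Q$ is not an It\^{o} process with fixed coefficient $\kappa^1$ and the representation $dQ_u = \kappa^1 dW_u^1 - dW_u^2$ cannot be invoked the way you do. Moreover, the principle you appeal to, that a non-degenerate, non-constant measurable functional of a Brownian path is atomless, is false in general (the sign of $W_1^2$ is a counterexample). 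The conclusion is nevertheless safe via the alternative you mention: under $\calh_0$ the limit coincides, up to the demeaning by $\overline{W}$, with the residual-based Dickey--Fuller limit of \cite{phillips1990asymptotic}, whose distribution is absolutely continuous, hence its cdf is continuous at $q_\delta$ and the Portmanteau step goes through; this is effectively what the paper relies on implicitly. With that substitution in place of the conditioning sketch, your proof is complete and matches the paper's.
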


\subsection{Testing for cointegration with drifting \ito-semimartingales} \label{SectionDrift}

We now examine how the testing procedure can be adapted if the processes $X^c$ and $Z$ feature drift terms. We only partially address the problem, and restrict ourselves to the simple case of linear trends. Dealing simultaneously with general drifts, even ergodic ones, and a non ergodic volatility component is a difficult matter (at least to us) that we set aside in this work. As a matter of fact, we show in this section that even with linear drifts a natural adaptation of our DF statistic already yields a complex limit distribution that depends on the curve $u \to \sigma_u^M$ even under the null hypothesis (so that critical values must be estimated everytime the test is run). The new model for $X$ and $Y$ is   

\begin{eqnarray*}
X_t^c & = & X_0 +  b^X t +  \int_0^t \sigma_{s/T}^M\sigma_s^X dW_s^X  \textnormal{ and } Z_t  =  b^Z t  + \int_0^t\sigma_{s/T}^M \sigma_s^Z dW_s^Z, t \in [0,T],
\end{eqnarray*}
 where $b^X,b^Z \in \reels$. 

The testing procedure can be modified as follows to be drift robust. We first truncate the returns, then detrend the processes by subracting to each return the quantity $\cald(U) = n^{-1}  (\calt(U)_n - \calt(U)_0)$, and finally deflate by $\sqrt{C_i}$ each truncated and detrented return. This yields the new process for $U \in \{X,Y\}$
\beas 
\breve{\calt}(U)_i^{def} = U_0 + \sum_{j=1}^i C_j^{-1/2}(\Delta U_j \mathbb{1}_{\{ |\Delta X_j| \leq a 
\Delta^{\overline{\omega}}\} \cap \{ |\Delta Y_j| \leq a 
\Delta^{\overline{\omega}}\}} - \cald(U)).
\eeas 
Next, we apply the testing procedure of the previous section to $\breve{\calt}(X)_i^{def}$ and $\breve{\calt}(Y)_i^{def}$ in lieu of $\calt(X)_i^{def}$ and $\calt(Y)_i^{def}$. We denote by $\breve{\Psi}$ the associated statistic. In the following theorem, for $(V_u)_{u \in[0,1]}$ a process, we define $(\breve{V})_{u \in [0,1]}$ such that for $u\in[0,1]$, 

$$\breve{V}_u = \int_0^1 V_sds + \l(\int_0^1 (\sigma_s^M)^{-1} sds - \int_{u}^1 (\sigma_s^M)^{-1} ds\r) \int_0^1 \sigma_s^M dV_s$$
whenever the integrals make sense.
\begin{theorem*} \label{thmRobustDrift}
Let $ \beta \geq 0$. Under $\widetilde{\calh}_1^{n,\beta}$, $\breve{\Psi}$ converges to the same limit as $\Psi$ in Theorem \ref{thmH0DF} except that 
$\overline{W}$ and $\overline{\xi(\beta)}$ are respectively replaced by $\breve{W}$ and $\breve{\xi}(\beta)$ in the expression of $H(\beta)$, $\kappa(\beta)$ and $Q(\beta)$.
\end{theorem*}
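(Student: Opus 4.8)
The plan is to retrace the proof of Theorem~\ref{thmH0DF}, whose structure is: first truncate away the jumps, then establish a functional limit theorem for the demeaned deflated price processes, and finally read off the limit of the Dickey-Fuller statistic by continuous mapping. Only two ingredients are new here, the linear drifts $b^X,b^Z$ and the detrending step. The key exact identity is that, by linearity of the deflation, $\breve{\calt}(U)_i^{def}=\calt(U)_i^{def}-\cald(U)\sum_{j=1}^{i}C_j^{-1/2}$, so $\breve{\calt}(U)^{def}$ is an explicit linear perturbation of the un-detrended deflated process analysed in Section~\ref{sectionConstructionTest}. The detrending is tuned so that, in the limit, this perturbation subtracts from each limiting process $V$ on $[0,1]$ precisely the transform $\breve V$ of the statement; since $\breve V_u$ replaces the constant centering $\int_0^1 V_s\,ds$ by a $u$-dependent quantity, this is exactly the announced modification of $H(\beta)$, $\kappa(\beta)$ and $Q(\beta)$.

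First I would run the truncation reduction of the proof of Theorem~\ref{thmH0DF}: by Assumption~\textbf{[C]} and the Appendix estimates (Lemma~\ref{lemmaJumps} and companions), each truncated increment $\Delta\calt(U)_j$ may be replaced by the continuous-plus-drift increment $b^U\Delta+\int_{t_{j-1}}^{t_j}\sigma^M_{s/T}\sigma^U_s\,dW^U_s$ up to terms negligible at the relevant orders; the extra point is that inside $\cald(U)\sum_{j=1}^{i}C_j^{-1/2}$ those errors are amplified by a factor of size $n$, so one checks they remain $o_\proba(T^{1/2})$, which is exactly what the exponent bound $T^{1+e_1\vee e_2}n^{-1}\to 0$ in \textbf{[C]} provides (the bookkeeping being that of Section~\ref{sectionLimitTheoryOLS}). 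Writing $g(u):=\int_0^u(\sigma^M_r)^{-1}\,dr$, I would then show from Proposition~\ref{lemmaCiWeak} and its uniform-outside-a-negligible-index-set version (Lemma~\ref{lemmaRVLocal}), by a Riemann-sum argument that survives $\sigma^M$ being merely \caglad, that $\sum_{j=1}^{i}C_j^{-1/2}=\omega_{11}^{-1/2}\,n\,g(i/n)+o_\proba(n)$, while $\cald(X)=\Delta\,b^X+n^{-1}M^X_T+o_\proba(n^{-1}T^{1/2})$ with $M^X_t:=\int_0^t\sigma^M_{s/T}\sigma^X_s\,dW^X_s$, and similarly for $\cald(Y)$ via the decomposition $Y^c=c_0+\alpha_0 X^c+\epsilon$. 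Multiplying, the deterministic part of $\cald(X)\sum_{j=1}^{i}C_j^{-1/2}$ equals $b^X T\,\omega_{11}^{-1/2}g(i/n)$, which is exactly the deterministic part of $\calt(X)_i^{def}$ (and, under $\calh_0$, the analogous statement holds for $Y$ with effective drift $\alpha_0 b^X+b^Z$); hence the linear drift cancels in $\breve{\calt}(\cdot)^{def}$, and the detrending leaves only the stochastic term $-\,\omega_{11}^{-1/2}g(i/n)\,M_T+o_\proba(T^{1/2})$ with $M$ the relevant martingale part.

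With this in hand I would establish the functional limit theorem exactly as in the proof of Theorem~\ref{thmH0DF}: using the ergodicity in \textbf{[B]} to replace idiosyncratic quadratic variations by their limits and the stable functional CLT for martingales, one gets the joint weak convergence on $[0,1]$ of $u\mapsto T^{-1/2}\calt(X)^{def}_{[nu]}$ and $u\mapsto T^{-1/2}M^X_{uT}$ towards $W^1$ and $\sqrt{\omega_{11}}\int_0^{\cdot}\sigma^M_r\,dW^1_r$ respectively, driven by the same standard Brownian motion $W^1$, and similarly for the $Z$-quantities with the correlation $r_\infty$ and the vector $\lambda$ entering as in the construction of $J(\beta)$ and $\xi(\beta)$ in Section~\ref{sectionConstructionTest}. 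Feeding in the previous step and performing the demeaning $A_i\mapsto A_i-\overline A$ forced by the intercept in the OLS, one obtains $T^{-1/2}(\breve{\calt}(X)^{def}_{[nu]}-\overline{\breve{\calt}(X)^{def}})\Rightarrow W^1-\breve W^1$, with $\breve W^1_u=\overline W^1+(g(u)-\int_0^1 g)\int_0^1\sigma^M_r\,dW^1_r$; a one-line Fubini identity ($\int_0^1 g=g(1)-\int_0^1(\sigma^M_r)^{-1}r\,dr$) identifies this with the first coordinate of $\breve W$ in the statement, and the identical computation on the residualised $Z$/$Y$-component turns $\overline{\xi(\beta)}$ into $\breve\xi(\beta)$. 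Since $\widehat\phi$, $s_{\widehat\phi}$, and hence $\breve\Psi$, are --- after the rescalings recorded in the proof of Theorem~\ref{thmH0DF} --- continuous functionals of these two demeaned limiting processes through $H(\beta)$, $\kappa(\beta)$ and $Q(\beta)$, the continuous mapping theorem yields the claimed limit.

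The main obstacle is this functional limit step. What is genuinely new compared with Theorem~\ref{thmH0DF} is that the detrending injects, alongside the \emph{local} deflated processes, the \emph{global} martingale functionals $M^X_T$ and $M^Z_T$; obtaining their joint stable convergence with the local processes while keeping the leverage and correlation structure intact is the core difficulty, and it must be dovetailed with the uniform-except-on-a-negligible-set control of $\sum_{j=1}^{i}C_j^{-1/2}$ when $\sigma^M$ is only \caglad (rather than differentiable, as in \cite{beare2018unit}) and with the amplified-error bounds that hinge on \textbf{[C]}. A further delicate point under $\widetilde{\calh}_1^{n,\beta}$ with $\beta>0$ is to follow the drift that the nearly-integrated residual $\epsilon$ inherits through its local-to-unity AR dynamics and to verify that, after the endpoint detrending (which removes exactly the linear component of a trend) and the OLS residualisation, the surviving contribution assembles into the announced form, namely a limit $Q(\beta)=\kappa(\beta)^T H(\beta)$ with $H(\beta)$ built from $W^1-\breve W^1$ and $\xi(\beta)-\breve\xi(\beta)$.
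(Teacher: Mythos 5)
Your proposal follows essentially the same route as the paper's own proof: drift increments and jumps are removed via Lemma \ref{lemmaJumps} and Assumption \textbf{[C]}, the auxiliary lemmas are re-derived for the detrended and deflated processes, the deflated local processes are shown to converge jointly with the endpoint martingale functionals $nT^{-1/2}\cald(\cdot)$ so that in the limit $B$ is replaced by $B-\int_0^{\cdot}(\sigma_s^M)^{-1}ds\int_0^1\sigma_s^M dB_s$, and the continuous mapping theorem then delivers the statistic's limit, with your Fubini computation identifying the demeaned detrended limit exactly with the paper's $V-\breve{V}$. The two issues you single out as delicate --- the joint convergence of the global martingale endpoints with the local deflated processes, and tracking the non-linear deterministic component that the nearly integrated residual inherits under $\widetilde{\calh}_1^{n,\beta}$ with $\beta>0$ --- are precisely the steps the paper itself handles only tersely, by extending Lemma \ref{lemmaLimitIntX} and asserting that the analogues of Lemma \ref{lemmaEpsilonDef} and Theorem \ref{thmH0DF} carry over.
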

In particular, note that even in the case of a constant drift, the limit distribution of $\breve{\Psi}$ now depends on $\sigma^M$ even under $\calh_0$. Therefore, one needs to estimate the curve of $\sigma^M$ and then compute the related critical values by, for instance, Monte-Carlo simulations. This sheds light on the lack of applicability of the above procedure, and also indicates that dealing with a drift and time-varying volatility at the same time is a complex procedure. 

\smallskip
Since the drift seems to be having a negligible impact in our numerical studies, at least in a realistic model and when it is calibrated to values usually encountered in empirical data, it seems more reasonable to use the simpler statistic $\Psi$ whose critical values are known and independent of the model at hand. Correspondingly, we will focus entirely on that statistic in the following finite sample experiment. In particular, we will not implement Monte-Carlo method to preestimate the curve of $\sigma^M$.

\section{Finite sample}

In this section, we conduct a Monte Carlo experiment in two steps. First, we investigate that the deflated and truncated based OLS method to estimate the cointegrated relations performs reasonably well, and that it outperforms the classical OLS procedure in a general model incorporating all the features of high frequency data in case of cointegration. Very related to estimation of relation methods is that of autocorrelation of the residuals' level, which we also look at. Second, we examine the size and power properties of the modified Dickey-Fuller residual based test for the null of no cointegration against the alternative of cointegration. In addition, we explore how the modified test performs relative to four standard residual based tests from the literature on cointegration in a variety of models, and more specifically in the presence of which feature the new test outperforms the standard procedures.

\subsection{Setup}
Overall, eight different models are generated. An overview is reported on Table \ref{tableoverview}. One model (i.e. Model 8) is general and includes all the aforementioned features of high frequency data, whereas each remaining model (i.e. Model 1-7) includes one specific feature. In what follows, and for the sake of brevity, we sharpen our focus on Model 3, Model 7 and Model 8. Additional tables and comments related to the other models can be found in the Appendix. We simulate $M=1,000$ Monte Carlo paths of high-frequency returns, where each path consists of $T=2$ years of generated returns. A year is divided into 252 working days, each of them being set to 6.5 hours of trading activity, i.e. 23,400 seconds. Each path is simulated via an Euler scheme with related step set to 10 seconds.
\subsubsection{Sampling gap}
In accordance with our empirical examples, we consider the gap between two observations $\Delta$ ranging from 10 minutes, i.e. 600 seconds which sets the number of observations to $n=19,656$ across the two simulated years, to 2 days, yielding $n=252$ observations. With one observation every 10 minutes, we are enough in the high frequency regime so that the limit theory related to the truncation method kicks in, but not too much into it so that we prevent as much as possible from market microstructure effects. When the gap is two days, this is purely low frequency setting.

\subsubsection{Simulation mechanism}
We simulate $X_t^c$ and $Z_t$ as:
\begin{eqnarray*}
X_t^c & = & X_0 + \int_0^t b_t^X dt +  \int_0^t \sigma_{s/T}^M\sigma_s^X dW_s^X  \textnormal{ and } Z_t  = \int_0^t b_t^Z dt + \int_0^t\sigma_{s/T}^M \sigma_s^Z dW_s^Z, t \in [0,T],
\end{eqnarray*}
where the correlation between $W_t^X$ and $W_t^Z$ is set to $\overline{\rho} = 0.2$, i.e. $d\langle W^X, W^Z \rangle_t = \overline{\rho} dt$. Here, contrary to the theoretical setting in (\ref{XtcZt}), the two processes can incorporate non-zero drifts which are set to $b_t^X = 0.03 (1 + W_t^{X,b})$, and $b_t^Z = 0.02 (1+ W_t^{Z,b})$, where $W_t^{X,b}$ and $W_t^{Z,b}$ are two independent Brownian motions. Depending on the model at hands, the market volatility can be constant, linear, or including 1 jump, and may respectively take the following forms:
\begin{eqnarray}
\sigma_t^M & = &\tilde{\sigma},\\
\sigma_t^M & = &\tilde{\sigma}(1 - 3t/4),\\
\sigma_t^M & = &\tilde{\sigma}(\mathbf{1}_{\{x<0.2\}} + 1/3 \times \mathbf{1}_{\{x\geq0.2\}}),
\end{eqnarray}
where we fix $\tilde{\sigma} = \sqrt{0.1}$. For $V \in 
\{ X,Z\}$, the idiosyncratic component of the volatility is split into a U-shape intraday seasonality component and Heston model with jumps specified as
\begin{eqnarray*}
\sigma_t^V &=& \sigma_{t-,U}^V \sigma_{t,SV}^V,
\end{eqnarray*}
where 
\begin{eqnarray*} 
\sigma_{t,U}^V & = & C + Ae^{-at/T} + De^{-c(1-t/T)} + J_t^{\sigma,V},\\
d(\sigma_{t,SV}^{V})^2 & = & \alpha(\bar{\sigma}^2 - (\sigma_{t,SV}^V)^2)dt + \delta \sigma_{t,SV}^Vd\bar{W}_t^V, 
\end{eqnarray*}
with $C=0.75$, $A=0.25$, $D=0.89$, $a=10$, $c=10$, the volatility jump process is defined as $dJ_t^{\sigma,V} = M_t^{\sigma,V} S_t^{\sigma,V} dN_t^{\sigma,V}$, where the volatility jump magnitude $M_t^{\sigma,V}$ is distributed as $\mathcal{N}(0.5,0.1)$, the signs of the jumps  $S_t^{\sigma,V} = \pm 1$ are i.i.d symmetric, $N_t^{\sigma,V}$ is a homogeneous Poisson process with parameter $\bar{\lambda} = 10T/252$ (with that setting volatility jumps occur randomly on average ten times a year), $\alpha = 5$, $\bar{\sigma}^2 = 1$, $\delta = 0.4$, $\bar{W}_t^V$ is a standard Brownian motion correlated to $W^V$ with $d\langle W^V,\bar{W}^V \rangle_t = \overline{\phi} dt$, $\overline{\phi} = -0.75$, $(\sigma_{0,SV}^V)^2$ is sampled from a Gamma distribution of parameters $(2\alpha\bar{\sigma}^2/\delta^2,\delta^2/2\alpha)$, which corresponds to the stationary distribution of the CIR process. To obtain more information about the model one can consult \cite{clinet2018efficient}. The model is inspired directly from \cite{andersen2012jump} and \cite{ait2019hausman}.

\smallskip
In addition, for $V\in \{X,Y\}$ the price jumps are generated via $dJ_t^{V} = M_t^{V} S_t^{V} dN_t^{V}$, where the price jump magnitude $M_t^{V}$ is distributed as $\mathcal{N}(\tilde{\sigma}/\sqrt{10} ,\tilde{\sigma}/10^{3/2})$, the signs of the jumps  $S_t^{V} = \pm 1$ are i.i.d symmetric, $N_t^{\sigma,V}$ is a homogeneous Poisson process with parameter $\bar{\lambda} = 10T/252$ (with that setting jumps occur on average 10 times a year and the contribution of jumps to the total quadratic variation of the price process is around 50\%, both of which are roughly in line with empirical findings in  \cite{huang2005relative}).

\smallskip
Finally, the parameter related to the autocorrelation of residuals' level introduced in (\ref{cointContinuous}) is obviously set to $\rho=1$ in case of no cointegration (i.e. null hypothesis) and chosen equal to $\rho=0.8,0.9$ when there is cointegration (i.e. in the alternative).

\subsubsection{Concurrent methods}
We implement four concurrent leading methods, all of which have already been mentioned: the DF test and the ADF test, and the Phillips-Perron tests $Z_\alpha$ and $Z_\tau$, which are tuned to cointegration in \cite{phillips1990asymptotic}.

\subsubsection{Remaining tuning parameters}
We choose $\overline{\omega} = 0.48$, $a = a_0 \widehat{\sigma}_{MLE}$, $a_0 = 4$, where $\widehat{\sigma}_{MLE}$ is the daily volatility MLE, consistently with the parameter values of the numerical study (Section 5, p. 301) in \cite{clinet2019testing} and \cite{ait2019hausman} (except for $a_0 = 4$, because the original value ($a_0=3$) was yielding too many jumps detection in our case). Parameters related to the deflation are set to $\gamma= 1/2$ and $\gamma'=0.01$.
\begin{table}
\centering
\caption{Overview of models}
\label{tableoverview}
\begin{tabular}{lllll}
\toprule
\toprule
No & Market volatility & Idiosyncratic part of volatility & Drift & Price jumps \\
\toprule
1 & constant & constant & no & no\\
2 & linear & constant & no & no \\
3 & constant + 1 jump & constant & no & no\\
4 & constant & Heston & no & no\\
5 & constant & daily U-shape + jumps & no & no\\
6 & constant & constant & yes & no\\
7 & constant & constant & no & yes\\
8 & linear + 1 jump & Heston + daily U-shape + jumps & yes & yes\\
\bottomrule
\end{tabular}

\end{table}

\begin{table}
\caption{Summary statistics of cointegration relations estimation in case when there is cointegration in Model 8}
\label{tableestcointrel}
\centering
\begin{tabular}{lrrrr}

\toprule
n &  \multicolumn{2}{c}{Modified OLS} &  \multicolumn{2}{c}{Standard OLS} \\
\midrule
\midrule & Bias & Std & Bias & Std\\ 
\midrule
\multicolumn{5}{c}{\emph{Estimation of $c_0$}}\\
\midrule
\multicolumn{5}{l}{\emph{True model: $c_0=1$, $\alpha_0=2$, $\rho=0.8$}}\\
\midrule 
19,656 & 0.005 & 0.202 & 9.075 & 3.087\\
6,552& -0.005& 0.361 & 9.037 & 3.175
\\3,276 & 0.326& 4.202 & 9.106 & 3.007
\\ 1,638 & 9.178& 4.484 & 8.950 & 3.167
\\ 504 & 10.062& 5.782 & 9.163 & 3.062
\\ 252 & 9.570& 4.818 & 9.041 & 3.206\\
\midrule
\multicolumn{5}{l}{\emph{True model: $c_0=1$, $\alpha_0=2$, $\rho=0.9$}}\\
\midrule 
19,656 & 0.002& 0.206 & 9.097 & 3.205\\
6,552& -0.003& 0.348 & 9.102 & 3.020
\\3,276 & 0.286& 3.119 & 8.927 & 2.880
\\ 1,638 & 9.074& 4.263 & 9.061 & 3.091
\\ 504 & 9.700& 5.780 & 9.213 & 3.147
\\ 252 & 9.890& 4.937 & 9.053 & 3.214\\
\midrule
\multicolumn{5}{c}{\emph{Estimation of $\alpha_0$}}\\
\midrule
\multicolumn{5}{l}{\emph{True model: $c_0=1$, $\alpha_0=2$, $\rho=0.8$}}\\
\midrule 
19,656 & -0.001& 0.045 & -1.962 & 0.670 \\
6,552& 0.001& 0.081 & -1.953 & 0.688
\\3,276 & -0.060& 0.829 & -1.969 & 0.650
\\ 1,638 & -1.831& 0.816 & -1.935 & 0.691
\\ 504 & -2.022& 0.733 & -1.979 & 0.665
\\ 252 & -2.001& 0.753 & -1.955 & 0.692\\
\midrule
\multicolumn{5}{l}{\emph{True model: $c_0=1$, $\alpha_0=2$, $\rho=0.9$}}\\
\midrule 
19,656 & -0.001& 0.047 &-1.965 & 0.694\\
6,552& 0.001& 0.078 & -1.969 & 0.657
\\3,276 & -0.056& 0.713 & -1.929 & 0.625
\\ 1,638 & -1.893& 0.786 & -1.962 & 0.670
\\ 504 & -1.992& 0.707 & -1.996 & 0.683
\\ 252 & -2.027& 0.733 & -1.959 & 0.689\\
\bottomrule
\end{tabular}
\end{table}

\begin{table}
\caption{Estimation of autocorrelation of the residuals' level $\rho$ in Model 8}
\label{tableestrho}
\centering
\begin{tabular}{lrr}

\toprule
n & Modified est. $\widehat{\rho}$ & Standard est. $\widetilde{\rho}$ \\
\midrule 
\midrule
\multicolumn{3}{l}{\emph{True model: $\rho=1$}}\\
\midrule 
19,656 & 1.010 & 0.987\\
6,552& 0.994& 0.976
\\3,276 & 0.985& 0.963
\\ 1,638 & 0.996& 0.950\\ 504 & 1.015& 0.913
\\ 252 & 0.934& 0.869\\
\midrule
\multicolumn{3}{l}{\emph{True model: $\rho=0.8$}}\\
\midrule 
19,656 & 0.799 & 0.986\\
6,552& 0.808& 0.976
\\3,276 & 0.824& 0.963
\\ 1,638 & 0.685& 0.950
\\ 504 & 0.942 & 0.906
\\ 252 & 0.846 & 0.870\\
\midrule
\multicolumn{3}{l}{\emph{True model: $\rho=0.9$}}\\
\midrule 
19,656 & 0.899 & 0.986 \\
6,552& 0.905 & 0.975
\\3,276 & 0.910 & 0.967
\\ 1,638 & 1.004 & 0.951
\\ 504 & 0.980 & 0.906
\\ 252 & 0.962& 0.866\\
\bottomrule
\end{tabular}
\end{table}

\begin{figure}
\includegraphics[width=\linewidth]{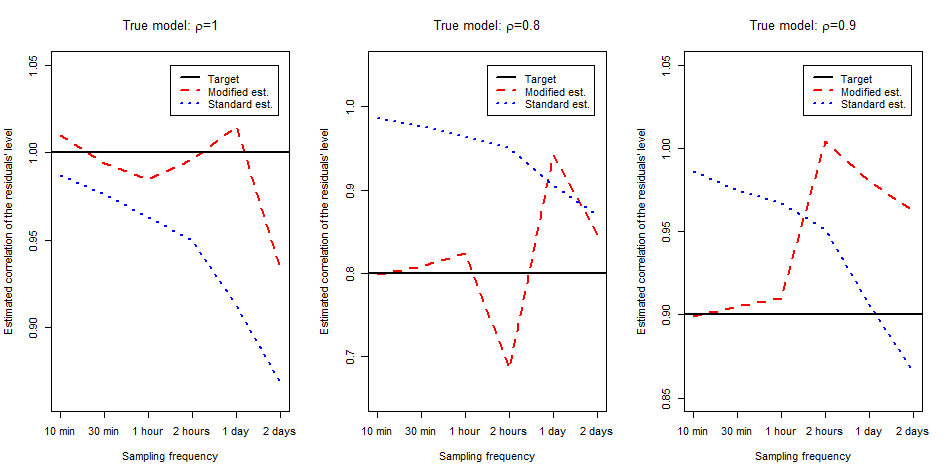}
\centering
\caption{Signature plot of estimated autocorrelation of the residuals' level $\rho$ in Model 8}
\label{sigplotnumstudy}
\end{figure}

\begin{table}
\caption{Size and power properties tabulated to 5\% quantile of several cointegration tests in Model 3}
\label{sizepowermod3}
\centering
\begin{tabular}{lrrrrr}

\toprule
n &  Modified DF &  DF &  ADF &  $Z_\alpha$ &  $Z_\tau$ \\
\midrule
\midrule
\multicolumn{6}{c}{\emph{Size}}\\
\midrule
19,656&0.046&0.235&0.233&0.237&0.174\\
6,552&0.058&0.233&0.228&0.236&0.175\\3,276&0.047&0.229&0.227&0.246&0.176\\
1,638&0.062&0.225&0.231&0.242&0.180\\
504&0.063&0.236&0.225&0.251&0.189\\
252&0.061&0.233&0.228&0.254&0.186\\
\midrule
\multicolumn{6}{c}{\emph{Power, $\rho=0.8$}}\\
\midrule 
19,656&1.000&1.000&1.000&1.000&1.000\\
6,552&1.000&1.000&1.000&1.000&1.000\\
3,276&1.000&1.000&1.000&1.000&1.000\\
1,638&1.000&1.000&1.000&1.000&1.000\\
504&0.999&1.000&1.000&1.000&1.000\\
252&0.978&0.995&0.988&0.915&1.000\\
\midrule
\multicolumn{6}{c}{\emph{Power, $\rho=0.9$}}\\
\midrule 
19,656&1.000&1.000&1.000&1.000&1.000\\
6,552&1.000&1.000&1.000&1.000&1.000\\
3,276&1.000&1.000&1.000&1.000&1.000\\
1,638&1.000&1.000&1.000&1.000&1.000\\
504&0.987&0.998&0.993&0.998&1.000\\
252&0.627&0.859&0.824&0.915&0.935\\
\bottomrule
\end{tabular}
\end{table}

\begin{table}
\caption{Size and power properties tabulated to 5\% quantile of several cointegration tests in Model 7}
\label{sizepowermod7}
\centering
\begin{tabular}{lrrrrr}

\toprule
n &  Modified DF &  DF &  ADF &  $Z_\alpha$ &  $Z_\tau$ \\
\midrule
\midrule
\multicolumn{6}{c}{\emph{Size}}\\
\midrule
19,656&0.048&0.046&0.047&0.044&0.047\\
6,552&0.061&0.054&0.052&0.058&0.057\\
3,276&0.052&0.061&0.060&0.070&0.073\\
1,638&0.039&0.059&0.060&0.071&0.070\\
504&0.050&0.051&0.052&0.056&0.056\\
252&0.048&0.065&0.065&0.066&0.066\\
\midrule
\multicolumn{6}{c}{\emph{Power, $\rho=0.8$}}\\
\midrule 
19,656&1.000&0.103&0.072&0.054&0.075\\
6,552&0.998&0.097&0.087&0.058&0.072\\
3,276&0.641&0.097&0.085&0.068&0.073\\
1,638&0.093&0.104&0.095&0.085&0.093\\
504&0.094&0.101&0.097&0.109&0.122\\
252&0.093&0.102&0.098&0.104&0.113\\
\midrule
\multicolumn{6}{c}{\emph{Power, $\rho=0.9$}}\\
\midrule 
19,656&1.000&0.093&0.078&0.050&0.071\\
6,552&0.994&0.095&0.090&0.062&0.081\\
3,276&0.636&0.096&0.092&0.075&0.083\\
1,638&0.091&0.087&0.083&0.075&0.082\\
504&0.084&0.082&0.082&0.090&0.102\\
252&0.078&0.080&0.080&0.092&0.089\\
\bottomrule
\end{tabular}
\end{table}

\begin{table}
\caption{Size and power properties tabulated to 5\% quantile of several cointegration tests in Model 8}
\label{sizepowermod8}
\centering
\begin{tabular}{lrrrrr}

\toprule
n &  Modified DF &  DF &  ADF &  $Z_\alpha$ &  $Z_\tau$ \\
\midrule
\midrule
\multicolumn{6}{c}{\emph{Size}}\\
\midrule
19,656&0.040&0.063&0.065&0.067&0.053\\
6,552&0.037&0.047&0.048&0.055&0.048\\
3,276&0.039&0.061&0.064&0.070&0.067\\
1,638&0.058&0.057&0.058&0.066&0.050\\
504&0.065&0.070&0.069&0.077&0.072\\
252&0.101&0.066&0.066&0.078&0.063\\
\midrule
\multicolumn{6}{c}{\emph{Power, $\rho=0.8$}}\\
\midrule 
19,656&1.000&0.052&0.052&0.054&0.047\\
6,552&0.998&0.061&0.062&0.065&0.064\\
3,276&0.913&0.061&0.059&0.063&0.067\\
1,638&0.058&0.077&0.075&0.075&0.070\\
504&0.078&0.068&0.069&0.077&0.067\\
252&0.088&0.066&0.063&0.076&0.071\\
\midrule
\multicolumn{6}{c}{\emph{Power, $\rho=0.9$}}\\
\midrule 
19,656&1.000&0.060&0.060&0.058&0.056\\
6,552&0.997&0.046&0.046&0.049&0.044\\
3,276&0.925&0.059&0.061&0.067&0.062\\
1,638&0.045&0.057&0.059&0.067&0.065\\
504&0.074&0.070&0.073&0.075&0.071\\
252&0.113&0.068&0.072&0.072&0.071\\
\bottomrule
\end{tabular}
\end{table}

\subsection{Results}
\subsubsection{Estimation of cointegrated relations via modified OLS}
Table \ref{tableestcointrel} reports the bias and standard deviation of the two parameters of cointegrated relations in the case of the modified OLS and standard OLS in a general model when there is cointegration. It is clear that the standard OLS is defectively biased for any level of subsampling. On the contrary, the modified OLS works well when the frequency of subsampling is high enough, but is equally biased when the frequency decreases. This is due to the truncation method which performs more poorly when the frequency decreases. Thus, a limitation of our method when there is a price jump component is that it requires to sample at reasonable high frequencies, i.e. up to one hour. 

\smallskip
Table \ref{tableestrho} reports the estimated autocorrelation of the residuals' level $\rho$. The corresponding signature plots can be found on Figure \ref{sigplotnumstudy}. The standard estimator is off, notably in the presence of cointegration (i.e. $\rho<1$). The modified estimator is quite reliable when subsampling up to one hour, but insufficient with lower frequencies. The standard and adapted DF can be seen as testing respectively $\widetilde{\rho}=1$ and $\widehat{\rho}=1$.

\subsubsection{Validity of modified DF tests}
We turn now to the behavior of the size and power of the tests. Table \ref{sizepowermod3}-\ref{sizepowermod8} report the size and power of the tests for a variety of models.

\smallskip
Table \ref{sizepowermod3} report the size and power of modified DF and that of the alternative methods when there is one break in market volatility. It is clear that sizes of the concurrent methods are distorted when market volatility is non constant. Reversely, sizes of the modified DF are satisfactory at any level of sampling and for both configurations. The powers of all the methods are not affected. This indicates that the deflation provides a real advantage in practice when market volatility is non constant.

\smallskip
Table \ref{sizepowermod7} reports the statistical properties in case of breaks in price process. We can see that the power of the concurrent methods is distorted when price features jumps. In case of the modified DF, the power is adequate when the sampling frequency is high enough, but not suitable when the frequency decreases. This is what to be expected using the truncation method, and definitely a limitation of our method. Nonetheless, we can see that the truncation is beneficial for whoever implements standard residual based tests for no cointegration with high frequency data.

Finally, Table \ref{sizepowermod8} is concerned with a general model featuring all the aforementioned high frequency features. Mostly, the idiosyncratic effects add to each other, although the sizes of the concurrent methods are somehow not as badly impacted as in the pure non constant market volatility case.
\section{Empirical examples}
We illustrate our methodology by studying two empirical examples where in particular the modified tests results deviate from that of standard tests. The two pairs of stocks considered are Action Construction Equipment Limited (ACE) - Alexion Pharmaceuticals (ALXN) and CMS Energy Corporation (CMS) - Eversource Energy (ES), all of which traded on the S\&P500.\footnote{The data were obtained through Reuters and provided by the Chair of Quantitative Finance of Ecole Centrale Paris.} In line with our numerical study, we consider a two-year-long period, i.e. 2012-2013, and subsample with frequency ranging from ten minutes to two days to conduct the tests.

\subsection{ACE-ALXN case}
Table \ref{empiricaltable1} reports the tests results. The corresponding signature plot of estimated autocorrelation of the residuals' level can be found on Figure \ref{sigplotempstudy1}. The modified DF rejects the null of no cointegration at the highest frequencies, with estimated autocorrelation level around 0.90. On the contrary, the concurrent tests do not reject the null hypothesis. This is an echo of the results available on Table \ref{sizepowermod8} in the case $\rho=0.9$. It seems that there is cointegration, and that due to price jumps, the alternative tests do not reject the null hypothesis. As in the numerical study, the tests results related to the modified DF are unstable when the subsample frequency is higher or equal to two hours. The signature plot in Figure \ref{sigplotempstudy1} is also a replica of that in Figure \ref{sigplotnumstudy} related to the case $\rho=0.9$, and corroborates the aforementioned analysis.
\begin{table}
\caption{Empirical test results and estimated autocorrelation of the residuals' level on the pair ACE-ALXN in period 2012-2013}
\label{empiricaltable1}
\centering
\begin{tabular}{lrrrrrrr}

\toprule
n &  Modified DF &  DF &  ADF &  $Z_\alpha$ &  $Z_\tau$ & $\widehat{\rho}$ & $\widetilde{\rho}$\\
\midrule
\midrule
& \multicolumn{7}{l}{\emph{0: no rejection, 1: rejection}}\\
19,656&1&0&0&0&0&0.940&0.991\\
6,552&1&0&0&0&0&0.894&0.984\\
3,276&1&0&0&0&0&0.896&0.976\\
1,638&1&0&0&0&0&0.895&0.966\\
504&0&0&0&0&0&1.057&0.905\\
252&1&0&0&0&0&0.830&0.782\\
\bottomrule
\end{tabular}
\end{table}

\begin{figure}
\includegraphics[width=\linewidth]{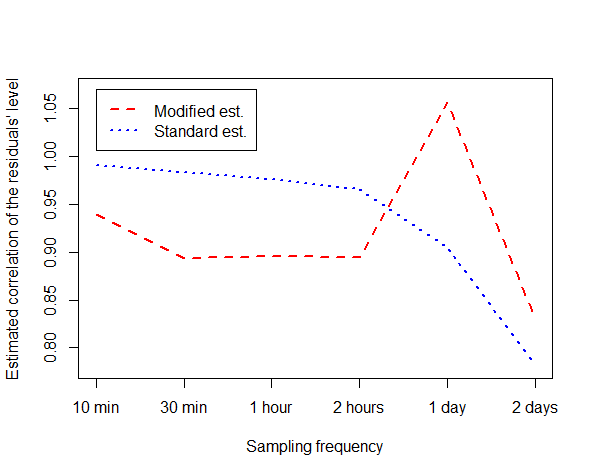}
\centering
\caption{Signature plot of estimated autocorrelation of the residuals' level $\rho$ on the pair ACE-ALXN in period 2012-2013}
\label{sigplotempstudy1}
\end{figure}

\subsection{CMS-ES case}
Table \ref{empiricaltable2} reports the tests results. The related signature plot of estimated autocorrelation of the residuals' level is available on Figure \ref{sigplotempstudy2}. This case is reverse from the previous case. The modified DF does not reject the null of no cointegration at the highest frequencies, while the concurrent tests do reject the null hypothesis. For this particular pair of stocks, results are to be compared with size results in Table \ref{sizepowermod3}. It seems that we should trust modified DF, which indicates no cointegration, whereas the concurrent tests are altered due to time-varying market volatility. Here again the test results related to modified DF are unstable when subsampling with lower frequencies.

\begin{table}
\caption{Empirical test results and estimated autocorrelation of the residuals' level on the pair CMS-ES in period 2012-2013}
\label{empiricaltable2}
\centering
\begin{tabular}{lrrrrrrr}

\toprule
n &  Modified DF &  DF &  ADF &  $Z_\alpha$ &  $Z_\tau$ & $\widehat{\rho}$ & $\widetilde{\rho}$\\
\midrule
\midrule
& \multicolumn{7}{l}{\emph{0: no rejection, 1: rejection}}\\
19,656&0&1&1&1&1&0.822&0.236\\
6,552&0&1&1&1&1&0.856&0.172\\
3,276&0&1&1&1&1&0.988&0.124\\
1,638&0&0&0&0&0&0.985&0.804\\
504&1&1&1&1&1&0.009&0.087\\
252&1&1&1&1&1&0.022&0.005\\
\bottomrule
\end{tabular}
\end{table}

\begin{figure}[H]
\includegraphics[width=\linewidth]{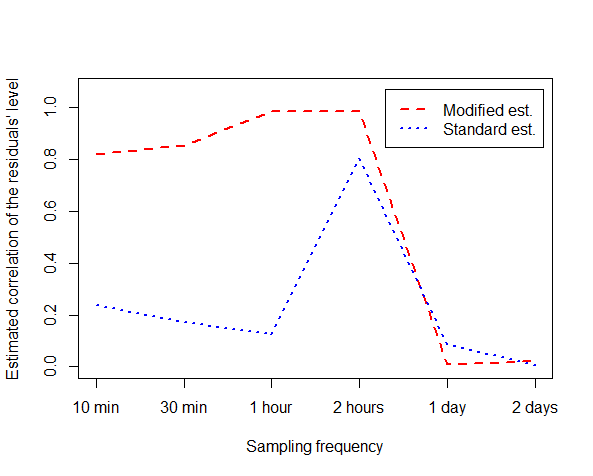}
\centering
\caption{Signature plot of estimated autocorrelation of the residuals' level $\rho$ on the pair CMS-ES in period 2012-2013}
\label{sigplotempstudy2}
\end{figure}

\section{Final remarks}
We have explored the challenges posed by the use of cointegration methods along with high frequency data. In terms of theoretical contribution, we have adapted the problem to the in-fill asymptotics case. We have provided a modified OLS to estimate cointegration relations when there is cointegration, together with its related central limit theory. We have also developed a (non ergodic) time-varying volatility and price-jump robust DF estimator, along with its limit theory.

\smallskip
In terms of applied contribution, we have seen in finite sample that some of the residual based concurrent methods to test for no cointegration are not sufficient when the model accommodates high frequency features, whereas our modified DF showed adequate size and reasonable power. Two empirical examples corroborated the fact that modified DF and standard tests can disagree in practice.

\appendix

\section{Additional finite sample results}
This section reports the size and power of the tests for additional models.

Table \ref{sizepowermod1} reports the size and power of modified DF and that of the concurrent methods in a pure time series environment, i.e. with no high frequency data feature and constant volatility. Evidently, the DF test, which was designed for such environment performs the best, but there is no substantial difference between the size and power of modified DF and that of the concurrent methods. This seems to indicate that the deflation and truncation do not degrade much the behavior of the statistic in this basic framework. Although we would not recommend to use our more sophisticated test given the setup, it is fairly reassuring to see that the modified test is yet substantially in line with the alternative methods.

\smallskip
Table \ref{sizepowermod2} reports the size and power of modified DF and that of the alternative methods when there is a linear trend in market volatility. The results are very comparable to the one break case.

\smallskip
Table \ref{sizepowermod4}-\ref{sizepowermod5} report the size and power properties respectively in the presence of U-shape and jumps in the idiosyncratic component of volatility. It seems that both configurations do not affect the size and power properties. It is not surprising as our assumption of ergodicity on the idiosyncratic part of volatility falls within the setting of \cite{phillips1987time}.

\smallskip
Table \ref{sizepowermod6} reports the size and power properties of modified DF and that of the concurrent methods in the presence of drift. This is considerably important as the theoretical setup related to modified DF does not accommodate with a non zero drift. Comparing to Table \ref{sizepowermod1}, there is no visible difference between the results obtained in case of drift inclusion or not. Actually, we have generated models including drift with parameter values ten times as big as the standard values we can find in the financial literature, and still there was no perceptible effect on the statistics. Our conclusion is that the drift does not seem to affect the size and power properties, at least in this specific stochastic model and within the range of parameter values we will come across on real stocks. Accordingly, we believe that it is reasonably safe to use our tests on stocks data including drift.

\begin{table}
\caption{Size and power properties tabulated to 5\% quantile of several cointegration tests in Model 1}
\label{sizepowermod1}
\centering
\begin{tabular}{lrrrrr}

\toprule
n &  Modified DF &  DF &  ADF &  $Z_\alpha$ &  $Z_\tau$ \\
\midrule
\midrule
\multicolumn{6}{c}{\emph{Size}}\\
\midrule
19,656&0.045&0.051&0.051&0.051&0.053\\
6,552&0.055&0.050&0.051&0.049&0.056\\3,276&0.048&0.049&0.052&0.050&0.060\\
1,638&0.061&0.051&0.051&0.054&0.063\\
504&0.063&0.056&0.061&0.064&0.068\\
252&0.059&0.049&0.056&0.058&0.060\\
\midrule
\multicolumn{6}{c}{\emph{Power, $\rho=0.8$}}\\
\midrule 
19,656&1.000&1.000&1.000&1.000&1.000\\
6,552&1.000&1.000&1.000&1.000&1.000\\
3,276&1.000&1.000&1.000&1.000&1.000\\
1,638&1.000&1.000&1.000&1.000&1.000\\
504&0.999&1.000&1.000&1.000&1.000\\
252&0.984&1.000&0.998&1.000&1.000\\
\midrule
\multicolumn{6}{c}{\emph{Power, $\rho=0.9$}}\\
\midrule 
19,656&1.000&1.000&1.000&1.000&1.000\\
6,552&1.000&1.000&1.000&1.000&1.000\\
3,276&1.000&1.000&1.000&1.000&1.000\\
1,638&1.000&1.000&1.000&1.000&1.000\\
504&0.993&1.000&1.000&1.000&1.000\\
252&0.737&0.849&0.838&0.884&0.933\\
\bottomrule
\end{tabular}
\end{table}

\begin{table}
\caption{Size and power properties tabulated to 5\% quantile of several cointegration tests in Model 2}
\label{sizepowermod2}
\centering
\begin{tabular}{lrrrrr}

\toprule
n &  Modified DF &  DF &  ADF &  $Z_\alpha$ &  $Z_\tau$ \\
\midrule
\midrule
\multicolumn{6}{c}{\emph{Size}}\\
\midrule
19,656&0.044&0.123&0.121&0.125&0.102\\
6,552&0.054&0.124&0.126 &0.133&0.109\\3,276&0.048&0.122&0.121&0.135&0.110\\
1,638&0.062&0.124&0.123&0.136&0.116\\
504&0.064&0.123&0.132&0.139&0.118\\
252&0.062&0.137&0.139&0.148&0.113\\
\midrule
\multicolumn{6}{c}{\emph{Power, $\rho=0.8$}}\\
\midrule 
19,656&1.000&1.000&1.000&1.000&1.000\\
6,552&1.000&1.000&1.000&1.000&1.000\\
3,276&1.000&1.000&1.000&1.000&1.000\\
1,638&1.000&1.000&1.000&1.000&1.000\\
504&1.000&1.000&1.000&1.000&1.000\\
252&0.984&1.000&0.994&1.000&1.000\\
\midrule
\multicolumn{6}{c}{\emph{Power, $\rho=0.9$}}\\
\midrule 
19,656&1.000&1.000&1.000&1.000&1.000\\
6,552&1.000&1.000&1.000&1.000&1.000\\
3,276&1.000&1.000&1.000&1.000&1.000\\
1,638&1.000&1.000&1.000&1.000&1.000\\
504&0.992&1.000&1.000&1.000&1.000\\
252&0.709&0.885&0.867&0.914&0.935\\
\bottomrule
\end{tabular}
\end{table}

\begin{table}
\caption{Size and power properties tabulated to 5\% quantile of several cointegration tests in Model 4}
\label{sizepowermod4}
\centering
\begin{tabular}{lrrrrr}

\toprule
n &  Modified DF &  DF &  ADF &  $Z_\alpha$ &  $Z_\tau$ \\
\midrule
\midrule
\multicolumn{6}{c}{\emph{Size}}\\
\midrule
19,656&0.045&0.048&0.048&0.049&0.054\\
6,552&0.052&0.049&0.049&0.051&0.056\\3,276&0.051&0.050&0.050&0.050&0.061\\
1,638&0.060&0.047&0.049&0.056&0.065\\
504&0.062&0.054&0.060&0.065&0.071\\
252&0.056&0.049&0.056&0.063&0.057\\
\midrule
\multicolumn{6}{c}{\emph{Power, $\rho=0.8$}}\\
\midrule 
19,656&1.000&1.000&1.000&1.000&1.000\\
6,552&1.000&1.000&1.000&1.000&1.000\\
3,276&1.000&1.000&1.000&1.000&1.000\\
1,638&1.000&1.000&1.000&1.000&1.000\\
504&0.997&1.000&1.000&1.000&1.000\\
252&0.984&1.000&0.998&1.000&1.000\\
\midrule
\multicolumn{6}{c}{\emph{Power, $\rho=0.9$}}\\
\midrule 
19,656&1.000&1.000&1.000&1.000&1.000\\
6,552&1.000&1.000&1.000&1.000&1.000\\
3,276&1.000&1.000&1.000&1.000&1.000\\
1,638&1.000&1.000&1.000&1.000&1.000\\
504&0.990&1.000&1.000&1.000&1.000\\
252&0.736&0.849&0.839&0.886&0.937\\
\bottomrule
\end{tabular}
\end{table}

\begin{table}
\caption{Size and power properties tabulated to 5\% quantile of several cointegration tests in Model 5}
\label{sizepowermod5}
\centering
\begin{tabular}{lrrrrr}

\toprule
n &  Modified DF &  DF &  ADF &  $Z_\alpha$ &  $Z_\tau$ \\
\midrule
\midrule
\multicolumn{6}{c}{\emph{Size}}\\
\midrule
19,656&0.045&0.048&0.048&0.049&0.054\\
6,552&0.052&0.049&0.049&0.051&0.056\\3,276&0.051&0.050&0.050&0.050&0.061\\
1,638&0.060&0.047&0.049&0.056&0.065\\
504&0.062&0.054&0.060&0.065&0.071\\
252&0.056&0.049&0.056&0.063&0.057\\
\midrule
\multicolumn{6}{c}{\emph{Power, $\rho=0.8$}}\\
\midrule 
19,656&1.000&1.000&1.000&1.000&1.000\\
6,552&1.000&1.000&1.000&1.000&1.000\\
3,276&1.000&1.000&1.000&1.000&1.000\\
1,638&1.000&1.000&1.000&1.000&1.000\\
504&0.997&1.000&1.000&1.000&1.000\\
252&0.984&1.000&0.998&1.000&1.000\\
\midrule
\multicolumn{6}{c}{\emph{Power, $\rho=0.9$}}\\
\midrule 
19,656&1.000&1.000&1.000&1.000&1.000\\
6,552&1.000&1.000&1.000&1.000&1.000\\
3,276&1.000&1.000&1.000&1.000&1.000\\
1,638&1.000&1.000&1.000&1.000&1.000\\
504&0.990&1.000&1.000&1.000&1.000\\
252&0.736&0.849&0.839&0.886&0.937\\
\bottomrule
\end{tabular}
\end{table}

\begin{table}
\caption{Size and power properties tabulated to 5\% quantile of several cointegration tests in Model 6}
\label{sizepowermod6}
\centering
\begin{tabular}{lrrrrr}

\toprule
n &  Modified DF &  DF &  ADF &  $Z_\alpha$ &  $Z_\tau$ \\
\midrule
\midrule
\multicolumn{6}{c}{\emph{Size}}\\
\midrule
19,656&0.044&0.048&0.048&0.050&0.046\\
6,552&0.052&0.048&0.048&0.048&0.050\\3,276&0.048&0.047&0.048&0.050&0.055\\
1,638&0.057&0.047&0.047&0.052&0.061\\
504&0.062&0.051&0.055&0.064&0.063\\
252&0.056&0.049&0.056&0.062&0.060\\
\midrule
\multicolumn{6}{c}{\emph{Power, $\rho=0.8$}}\\
\midrule 
19,656&1.000&1.000&1.000&1.000&1.000\\
6,552&1.000&1.000&1.000&1.000&1.000\\
3,276&1.000&1.000&1.000&1.000&1.000\\
1,638&1.000&1.000&1.000&1.000&1.000\\
504&0.999&1.000&1.000&1.000&1.000\\
252&0.986&1.000&0.998&1.000&1.000\\
\midrule
\multicolumn{6}{c}{\emph{Power, $\rho=0.9$}}\\
\midrule 
19,656&1.000&1.000&1.000&1.000&1.000\\
6,552&1.000&1.000&1.000&1.000&1.000\\
3,276&1.000&1.000&1.000&1.000&1.000\\
1,638&1.000&1.000&1.000&1.000&1.000\\
504&0.994&1.000&1.000&1.000&1.000\\
252&0.736&0.846&0.837&0.878&0.935\\
\bottomrule
\end{tabular}
\end{table}

\section{Proofs}
 
\subsection{Notation}

For the sake of clarity, most quantities ($T$, $\Delta$, $C_i$,\ldots) introduced in the main body of the paper and which depend on $n$ are explicitly indexed by $n$ ($T_n$, $\Delta_n$, $C_{i,n}$,\ldots) to avoid confusion. In addition to (\ref{notation1})-(\ref{notation3}), we also often introduce for a process $A$ and $i\in \{1,\ldots,n\}$, $t \in [0,T_n]$ the notation $\Delta A_{i,t} = A_{t_i \wedge t} - A_{t_{i-1} \wedge t}$ where $x \wedge y$ is the minimum of $x$ and $y$. 

When $\rho < 1$, (\ref{cointContinuous}) defines $\epsilon$ only for the discrete times $t_0,\ldots,t_n$. For the proofs, it will be more convenient to embed those discrete observations in a process on $[0,T_n]$ as follows. For $t \in [0,T_n]$, we let
\beas 
\epsilon_t = \sum_{j=1}^n \rho^{\Delta_n^{-1}(t-t_j)} \Delta Z_{j,t}.
\eeas 
One immediately checks that for $i \in \{1,\ldots,n\}$, $\epsilon_i$ coincides with (\ref{cointContinuous}) (however, the above definition is different from the interpolation introduced in Remark \ref{rmkOU}). Moreover, since all the estimated quantities are based on the discrete observations only, there is no loss of generality in assuming that $\epsilon$ is defined as above.\\

For a \cadlag function $f$ on $[0,1]$, we write $w'_f$ (or simply $w'$ when there is no room for ambiguity) its associated modulus of continuity as defined in (12.6), p. 122, in \cite{billingsley1999convergence}. We will also often deal with convergence of sequences of \cadlag processes $X_n$ from $[0,1]$ to $\reels^k$, $k \in \naturels - \{0\}$. Accordingly, $X^n \to^{u.c.p} X$ means $\sup_{u \in  [0,1]} \|X_u^n-X_u\| \to^\proba 0$, and $X_n \to^d X$ is the weak convergence with respect to the associated Skorohod topology of the Skorohod space $D_{\reels^k}[0,1]$ (We also use $\to^d$ for the convergence in distribution of simple random variables). By Proposition VI.1.17 from \cite{JacodLimit2003} and Theorem 2.7 from \cite{billingsley1999convergence}, note that, if $X^n \to^d X$ and the limit $X$ is continuous, then for any mapping $f$ on $D_{\reels^k}[0,1]$ which is continuous with respect to the \textit{uniform} topology, then $f(X^n) \to^d f(X)$. For instance, if $X^n \to^d X$ and $X$ is continuous, then for any $s \in [0,1]$, $X_s^n \to^d X_s$, and also $\int_0^1 X_s^nds \to^d \int_0^1 X_sds$. In the proofs, we will often apply this to several mappings which are clearly continuous for the uniform norm. When we do so, we will simply say "by the continuous mapping theorem\ldots" with no further reference to this discussion. \\

Hereafter, $K$ stands for a positive constant which does not depend on $n$ or any other index but may vary from one line to the next. Finally, for an event $E$,  $E^c$ stands for the complementary event.

\subsection{Estimates and preliminary lemmas}

We recall that, under \textbf{[B]}, for any $q \leq 2p_0$ we have for $U \in \{X^c,Z\}$
\bea \label{estimateXc}
\sup_{t \in \reels_+} \esp \l[  \sup_{w \in [0,s]} |U_{t+w} - U_t|^q \r] \leq K s^{q/2}
\eea
as a consequence of the Burkholder-Davis-Gundy inequality.
 
We now proceed to derive a few useful estimates for the jump increments. For $i \in  \{1,\ldots,n\}$, $u \in [0,1]$, let
$$ A_{i,n,u} = \{ |\Delta X_{i,uT_n} |\leq a \Delta_n^{\overline{\omega}} \} \cap \{ |\Delta Y_{i,uT_n}| \leq a \Delta_n^{\overline{\omega}} \}.$$
\begin{lemma*} \label{lemmaJumps}
Let $p \in [r,8] - \{0\}$. There exists a constant $K >0$ such that for $U \in \{X,Y\}$
\bea \label{eqJump1}
\sup_{t \in \reels_+} \esp\l[ \sup_{0 \leq u \leq s } \l|J_{t+u}^U - J_{t}^U\r|^p \r] \leq K(s + s^{p \vee 1}). 
\eea

Moreover, assume $r>0$. Then, for $q \geq r$, there exists $K>0$ which does not depend on $i \in \{1,\ldots,n\}$, such that
\bea \label{eqJump2}
\esp \l[\sup_{u \in [0,1]} |\Delta J_{i,uT_n}^U|^q \mathbb{1}_{\{ |\Delta J_{i,uT_n}^U| < a\Delta_n^{\overline{\omega}}\}}\r] \leq K\Delta_n^{1+\overline{\omega}(q-r)},
\eea
and we have the deviation for $p,q \in \{1,2\}$, $U\in \{X,Y\}$
\bea 
 \esp \sup_{u \in [0,1]} \l| (\Delta U_{i,uT_n})^p\mathbb{1}_{A_{i,n,u}} -  (\Delta U_{i,uT_n}^c)^p\r|^q &\leq& K(\Delta_n^{1+\overline{\omega}(pq-r)} + \Delta_n^{1/2 + p_0(1/2-\overline{\omega})}). \label{eqJump3}
\eea
When $r=0$, (\ref{eqJump2}) and (\ref{eqJump3}) remain true replacing $r$ in the right-hand sides by any positive number arbitrarily close to $0$. 
\end{lemma*}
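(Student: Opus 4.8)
The plan is to exploit the moment assumption (\ref{assJumps}) on the jump coefficient $\delta^U$, together with standard estimates for compensated Poisson integrals, and then carefully disentangle the truncation indicators. First I would address (\ref{eqJump1}). Writing $J^U_{t+u} - J^U_t = \int_{(t,t+u]\times E} \delta^U(s,z)\mu^U(ds,dz)$, I would split $\delta^U = \delta^U\mathbb{1}_{\{|\delta^U|\le 1\}} + \delta^U\mathbb{1}_{\{|\delta^U|>1\}}$. For $p \le 1$ the bound follows directly from the triangle inequality on the sum of jumps together with (\ref{assJumps}): one has $\esp\sup_{u\le s}|J^U_{t+u}-J^U_t|^p \le \esp\sum_{t<v\le t+s}|\Delta J^U_v|^p \le \esp\sum_{t<v\le t+s}(|\Delta J^U_v|^r\vee|\Delta J^U_v|^8)\le Ks$ by (\ref{assJumps}) and its consequence (\ref{momentJumps})-type bound over an interval of length $s$. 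For $p\in(1,8]$, I would compensate the large-jump part (finitely many jumps, controlled by $|\delta^U|^8$) and use the Burkholder--Davis--Gundy/Kunita inequality on the martingale part, producing the $s^{p\vee1}$ term from the quadratic variation and the $s$ term from the single-power compensator; the small-jump part is handled as in the $p\le1$ case since $|\delta^U|^p\le|\delta^U|^r$ there when $p\ge r$ and $|\delta^U|\le1$.

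Next, for (\ref{eqJump2}) with $r>0$: on the event $\{|\Delta J^U_{i,uT_n}|<a\Delta_n^{\overline\omega}\}$ I would bound $|\Delta J^U_{i,uT_n}|^q = |\Delta J^U_{i,uT_n}|^r \cdot |\Delta J^U_{i,uT_n}|^{q-r}\le |\Delta J^U_{i,uT_n}|^r (a\Delta_n^{\overline\omega})^{q-r}$ when $q\ge r$, so the supremum over $u$ is dominated by $(a\Delta_n^{\overline\omega})^{q-r}\sum_{t_{i-1}<v\le t_i}|\Delta J^U_v|^r$, whose expectation is $O(\Delta_n^{1+\overline\omega(q-r)})$ by (\ref{assJumps}) applied on an interval of length $\Delta_n$. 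When $r=0$ the same argument works with $r$ replaced by an arbitrarily small $\eta>0$, since (\ref{assJumps}) holds with any such $\eta$ in place of $r$.

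Finally, (\ref{eqJump3}) is the technical heart and the main obstacle. The idea is to write $\Delta U_{i,uT_n} = \Delta U^c_{i,uT_n} + \Delta J^U_{i,uT_n}$ and expand $(\Delta U_{i,uT_n})^p$ binomially, then compare with $(\Delta U^c_{i,uT_n})^p$ after inserting the indicator $\mathbb{1}_{A_{i,n,u}}$. The difference decomposes into (a) cross terms and pure-jump terms multiplied by $\mathbb{1}_{A_{i,n,u}}$, where on $A_{i,n,u}$ one has $|\Delta X_{i,uT_n}|,|\Delta Y_{i,uT_n}|\le a\Delta_n^{\overline\omega}$; and (b) the error from removing the indicator on the pure-continuous term $(\Delta U^c_{i,uT_n})^p$, i.e. $(\Delta U^c_{i,uT_n})^p\mathbb{1}_{A^c_{i,n,u}}$. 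For (a), on $A_{i,n,u}$ the jump increment is bounded by $2a\Delta_n^{\overline\omega}$, so $|\Delta J^U_{i,uT_n}|^k\mathbb{1}_{A_{i,n,u}}\le (2a\Delta_n^{\overline\omega})^{k-r}|\Delta J^U_{i,uT_n}|^r$ for the relevant integer powers $k\ge1$; combined with (\ref{estimateXc}) for the continuous factors and Cauchy--Schwarz this yields the $\Delta_n^{1+\overline\omega(pq-r)}$ term (the worst case being the pure-jump term to the power $pq$). For (b), one must show the truncation rarely fails when the continuous part is "typical": on $A^c_{i,n,u}$ either $|\Delta X^c_{i,uT_n}|$ or $|\Delta Y^c_{i,uT_n}|$ is of order $\Delta_n^{\overline\omega}$ (a large deviation for the Brownian-type increment, whose probability is controlled via (\ref{estimateXc}) and Markov's inequality at order $p_0$, giving $\proba(A^c_{i,n,u}\cap\{\text{continuous large}\})\le K\Delta_n^{p_0(1/2-\overline\omega)}$), or a jump of size $\gtrsim\Delta_n^{\overline\omega}$ occurred (probability $O(\Delta_n^{1+\overline\omega(\text{something})})$ by (\ref{assJumps})); applying Cauchy--Schwarz to $(\Delta U^c_{i,uT_n})^{pq}$ (moments $O(\Delta_n^{pq/2})$, hence $O(\Delta_n^{1/2})$ after using $p,q\le2$ and $pq\le 4$, but one keeps $\Delta_n^{1/2}$ as the generic bound) against $\mathbb{1}_{A^c_{i,n,u}}$ produces the $\Delta_n^{1/2+p_0(1/2-\overline\omega)}$ term. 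The delicate points are: making the large-deviation estimate for the \emph{supremum over} $u\in[0,1]$ uniform (handled by (\ref{estimateXc}), which already bounds the sup of the increment), keeping track of which exponent is extremal among the binomial cross-terms, and verifying that with $p,q\in\{1,2\}$ the claimed right-hand side indeed dominates every term — this bookkeeping, rather than any single deep estimate, is where care is required.
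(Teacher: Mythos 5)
Your overall architecture for (\ref{eqJump3}) is the same as the paper's (split into jump terms on the truncation event and continuous terms on its complement, use $|\Delta J|^{k}\le (a\Delta_n^{\overline\omega})^{k-r}|\Delta J|^{r}$ below the threshold, and control the complement via high moments of the continuous increments), and your arguments for (\ref{eqJump1}) and (\ref{eqJump2}) match the paper's (which simply invokes Lemma 2.1.7 of Jacod--Protter for (\ref{eqJump1}) and the same thresholding trick for (\ref{eqJump2})). However, there is a genuine gap in your treatment of the complement event $A_{i,n,u}^c$. You decompose $A_{i,n,u}^c$ into ``a continuous increment is large'' (fine: Markov at high order, though you need order $2p_0$ rather than $p_0$ inside the Cauchy--Schwarz to actually reach $\Delta_n^{1/2+p_0(1/2-\overline\omega)}$; (\ref{estimateXc}) does provide it) and ``a jump exceeds $\tfrac a2\Delta_n^{\overline\omega}$'', and you claim plain Cauchy--Schwarz against $(\Delta U^c_{i,uT_n})^{pq}$ yields the $\Delta_n^{1/2+p_0(1/2-\overline\omega)}$ term for both. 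For the jump-large part this fails: the relevant probability is only $\proba(\sup_u|\Delta J^{X}_{i,uT_n}|>\tfrac a2\Delta_n^{\overline\omega})\le K\Delta_n^{1-\overline\omega r}$, which has nothing to do with $p_0$, so Cauchy--Schwarz gives at best $K\Delta_n^{pq/2+(1-\overline\omega r)/2}$. For $pq=1$ this is $\Delta_n^{1-\overline\omega r/2}$, and one checks that under Assumption \textbf{[C]} (where $\overline\omega<\tfrac12-\tfrac{3}{2p_0}\le \tfrac{p_0-1}{2p_0-r}$) this exponent is strictly smaller than both $1+\overline\omega(pq-r)$ and $\tfrac12+p_0(\tfrac12-\overline\omega)$, so the term is not absorbed by the claimed right-hand side.

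The paper closes exactly this point with a sharper, asymmetric H\"older argument: bounding $\mathbb{1}_{\{|\Delta J^U_{i,uT_n}|>\frac a2\Delta_n^{\overline\omega}\}}\le K|\Delta J^U_{i,uT_n}|^{r(1-\frac{pq}{2p_0})}\Delta_n^{-\overline\omega r(1-\frac{pq}{2p_0})}$ and applying H\"older with exponents $\tfrac{2p_0}{pq}$ and $(1-\tfrac{pq}{2p_0})^{-1}$, which uses the full $2p_0$-moment of $\Delta X^c$ so that the jump probability enters with power $1-\tfrac{pq}{2p_0}$ (close to $1$) instead of $\tfrac12$; this gives $K\Delta_n^{pq/2+(1-\frac{pq}{2p_0})(1-\overline\omega r)}\le K\Delta_n^{1+\overline\omega(pq-r)}$ precisely because $\overline\omega\le\tfrac{p_0-1}{2p_0-r}$. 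A second, more minor imprecision: on $A_{i,n,u}$ the jump increment is \emph{not} automatically bounded by $2a\Delta_n^{\overline\omega}$, since $|\Delta J^U_{i,uT_n}|\le a\Delta_n^{\overline\omega}+|\Delta U^c_{i,uT_n}|$; you must additionally split off the rare event $\{|\Delta U^c_{i,uT_n}|>a\Delta_n^{\overline\omega}\}$ (as the paper does via $\mathbb{1}_{\{|\Delta X_{i,uT_n}|<a\Delta_n^{\overline\omega}\}}\le\mathbb{1}_{\{|\Delta J^X_{i,uT_n}|<2a\Delta_n^{\overline\omega}\}}+\mathbb{1}_{\{|\Delta X^c_{i,uT_n}|>a\Delta_n^{\overline\omega}\}}$), whose contribution is handled by Cauchy--Schwarz with $\esp\sup_u|\Delta J^X_{i,uT_n}|^{2pq}\le K\Delta_n$ and lands in the $\Delta_n^{1/2+p_0(1/2-\overline\omega)}$ term. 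With these two repairs your argument coincides with the paper's proof.
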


\begin{proof}
Assume $r>0$. The estimate (\ref{eqJump1}) is a direct consequence of (2.1.40) for $p \in [r ,1]$ and (2.1.41) for $p \in (1,8]$ from Lemma 2.1.7 in \cite{jacod2011discretization}, along with (\ref{assJumps}). To show (\ref{eqJump2}), we first remark that 
\beas  
|\Delta J_{i,uT_n}^U|^q \mathbb{1}_{\{ |\Delta J_{i,uT_n}^U| < a\Delta_n^{\overline{\omega}}\}} &\leq& a^{q-r}\Delta_n^{\overline{\omega}(q-r)}|\Delta J_{i,uT_n}^U|^r ,
\eeas 
so that taking the supremum over $[0,1]$ in $u$, applying expectation on both sides and applying (\ref{eqJump1}) with $p=r$, $s = a\Delta_n^{\overline{\omega}}$ yields the claimed result. Now we show (\ref{eqJump3}) for $U=X$.  First, note that 
\beas 
\l| (\Delta X_{i,uT_n})^p\mathbb{1}_{\{ |\Delta X_{i,uT_n} \leq a \Delta_n^{\overline{\omega}} \}}| -  (\Delta X_{i,uT_n}^c)^p\r|^q &\leq&K \l( |\Delta J_{i,uT_n}^X|^{pq} \mathbb{1}_{\{|\Delta X_{i,uT_n}| < a \Delta_n^{\overline{\omega}}\}} + |\Delta X_{i,uT_n}^c|^{pq} \mathbb{1}_{A_{i,n,u}^c} \r)\\
 &:=& I_u +II_u.
\eeas 
Now, remark that $\mathbb{1}_{\{|\Delta X_{i,uT_n}| < a \Delta_n^{\overline{\omega}}\}} \leq \mathbb{1}_{\{|\Delta J_{i,uT_n}^X| < 2a \Delta_n^{\overline{\omega}}\}} + \mathbb{1}_{\{|\Delta X_{i,uT_n}^c| > a \Delta_n^{\overline{\omega}}\}}$ so that $I_u$ can be further dominated up to a multiplicative constant by $I_u^A + I_u^B$ with
\bea \label{eqDevJumpI}
I_u^A =  \sup_{u \in [0,1]} \l|  \Delta J_{i,uT_n}^X \r|^{pq} \mathbb{1}_{\{|\Delta J_{i,uT_n}^X|<2a\Delta_n^{\overline{\omega}}\}}  
\eea 
and
\bea \label{eqDevJumpII}
I_u^B =  \sup_{u \in [0,1]} \l|  \Delta J_{i,uT_n}^X \r|^{pq} \mathbb{1}_{\{|\Delta X_{i,uT_n}^c|>a\Delta_n^{\overline{\omega}}\}}.
\eea 
By (\ref{eqJump2}), we have
\bea\label{estJump1} 
\esp I_u^A  &\leq&  K\Delta_n^{1+\overline{\omega}(pq-r)}.
\eea
As for the expected value of (\ref{eqDevJumpII}), it can be dominated by
\beas  
 \sqrt{\esp\sup_{u \in [0,1]} \l|\Delta J_{i,uT_n}^X\r|^{2pq} \esp \sup_{u \in [0,1]} \mathbb{1}_{\{|\Delta X_{i,uT_n}^c|>a\Delta_n^{\overline{\omega}}\}}} 
\eeas 
By Lemma \ref{lemmaJumps}, we have $\esp  \sup_{u \in [0,1]} \l|\Delta J_{i,uT_n}^X\r|^{2pq}  \leq K\Delta_n$. Moreover, by (\ref{estimateXc}),
\beas 
\esp \sup_{u \in [0,1]} \mathbb{1}_{\{|\Delta X_{i,uT_n}^c|>a\Delta_n^{\overline{\omega}}\}} &\leq& K \esp \sup_{u \in [0,1]} \Delta_n^{-2p_0\overline{\omega}}|\Delta X_{i,uT_n}^c|^{2p_0} \\
&\leq & K  \Delta_n^{2p_0(1/2 - \overline{\omega})},
\eeas 
so that overall, 
\bea \label{estJump2} 
  \esp \sup_{u \in [0,1]} \l|\Delta J_{i,uT_n}^X\r|^{pq} \mathbb{1}_{\{|\Delta X_{i,uT_n}^c|>a\Delta_n^{\overline{\omega}}\}} &\leq& K \Delta_n^{1/2+p_0(1/2-\overline{\omega})}.
\eea 
Now we deal with $II_u$. Since 
\beas 
\mathbb{1}_{A_{i,n,u}^c} &=& \mathbb{1}_{\{|\Delta X_{i,uT_n}|>a\Delta_n^{\overline{\omega}}\} \cup \{|\Delta Y_{i,uT_n}|>a\Delta_n^{\overline{\omega}}\}  } \\
&\leq& \mathbb{1}_{\{|\Delta X_{i,uT_n}^c| > \frac{a}{2} \Delta_n^{\overline{\omega}}\}} + \mathbb{1}_{\{|\Delta Y_{i,uT_n}^c| > \frac{a}{2} \Delta_n^{\overline{\omega}}\}} + \mathbb{1}_{\{|\Delta J_{i,uT_n}^X| > \frac{a}{2} \Delta_n^{\overline{\omega}}\}} + \mathbb{1}_{\{|\Delta J_{i,uT_n}^Y| > \frac{a}{2} \Delta_n^{\overline{\omega}}\}},
\eeas 
we derive separate estimates for
\bea\label{eqDevJumpIII}
 II_u^A = \sup_{u \in [0,1]}  |\Delta X_{i,uT_n}^c|^{pq}( \mathbb{1}_{\{|\Delta X_{i,uT_n}|^c > \frac{a}{2}\Delta_n^{\overline{\omega}} \} } + \mathbb{1}_{\{|\Delta Y_{i,uT_n}|^c > \frac{a}{2} \Delta_n^{\overline{\omega}}\}} ) 
\eea 
and
\bea \label{eqDevJumpIV}
 II_u^B = \sup_{u \in [0,1]}  |\Delta X_{i,uT_n}^c|^{pq}( \mathbb{1}_{\{|\Delta J_{i,uT_n}^X| > \frac{a}{2} \Delta_n^{\overline{\omega}}\}} + \mathbb{1}_{\{|\Delta J_{i,uT_n}^Y| > \frac{a}{2} \Delta_n^{\overline{\omega}}\}} )
\eea
Similar calculation as for $I_u^B$ yields for the expected value of $II_u^A$ the domination 
\bea \label{estJump3} 
\esp \sup_{u \in [0,1]} |\Delta X_{i,uT_n}^c|^{pq} ( \mathbb{1}_{\{|\Delta X_{i,uT_n}|^c > \frac{a}{2}\Delta_n^{\overline{\omega}} \} } + \mathbb{1}_{\{|\Delta Y_{i,uT_n}|^c > \frac{a}{2} \Delta_n^{\overline{\omega}}\}} ) \leq K \Delta_n^{pq/2+ p_0(1/2-\overline{\omega})}.
\eea 
Finally, we derive an estimate for (\ref{eqDevJumpIV}). For $U\in\{X,Y\}$, 
\bea 
 &&\nonumber \esp\sup_{u \in [0,1]}  |\Delta X_{i,uT_n}^c|^{pq} \mathbb{1}_{\{|\Delta J_{i,uT_n}^U| > \frac{a}{2} \Delta_n^{\overline{\omega}}\}}\\\nonumber &\leq& K   \esp\sup_{u \in [0,1]}|\Delta X_{i,uT_n}^c|^{pq} |\Delta J_{i,uT_n}^U|^{r(1-\frac{pq}{2p_0})} \Delta_n^{-(1-\frac{pq}{2p_0})\overline{\omega}r} \\
\nonumber &\leq& \frac{K}{ \Delta_n^{(1-\frac{pq}{2p_0})\overline{\omega}r}} \l(
 \esp \sup_{u \in [0,1]} |\Delta X_{i,uT_n}^c |^{2p_0}\r)^{\frac{pq}{2p_0}} \l( \esp \sup_{u \in [0,1]} |\Delta J_{i,uT_n}^U|^r\r)^{(1-\frac{pq}{2p_0})}\\
\nonumber&\leq& K \Delta_n^{pq/2+ (1-\frac{pq}{2p_0})(1-\overline{\omega}r)} \\
&\leq& K\Delta_n^{1+ \overline{\omega}(pq-r)}
\label{estJump4}
\eea
where the last estimate is a consequence of $\overline{\omega} < \frac{1}{2} - \frac{3}{2p_0} \leq \frac{p_0-1}{2p_0-r}$ by Assumption \textbf{[C]}, and where we have applied H\"{o}lder inequality at the second step. In view of (\ref{estJump1}), (\ref{estJump2}), (\ref{estJump3}) and (\ref{estJump4}), and using the fact that $pq \geq 1$, (\ref{eqJump3}) for $U=X$ readily follows. The case $U=Y$ is similar, using that under all the alternatives $\esp |\Delta \epsilon_{i,uT_n}|^{2p_0} < K\Delta_n^{p_0}$ for any $i\in\{1,\ldots,n\}$. Finally, if $r=0$, then note that Condition (\ref{assJumps}) is satisfied for any $r^{'} > 0$ arbitrary close to $0$, hence the claimed result.
\end{proof}

Now we devote the next three lemmas to prove the convergence and the uniform boundedness away from 0 of the local realized volatility used for the deflation. First, we need a technical lemma for the c\'{a}dl\'{a}g function $\sigma^M$. 
\begin{lemma*} \label{lemmaSigmaM}
Let $u_n \geq 0$ and $u_n \to 0$. For $i\in \{1,\ldots,n\}$, define
\beas 
D_{i,n} = \int_{\frac{t_i}{T_n}- u_n}^{\frac{t_i}{T_n}} |(\sigma_s^M)^2 - (\sigma_{t_i-}^M)^2|^2 ds.
\eeas 
Then, there exists $\cala_n \subset \{1,\ldots,n\}$, such that $\# \cala_n = o(n)$ and
\beas 
\sup_{i \in \{1,\ldots,n\} - \cala_n} u_n^{-1}D_{i,n} \to 0.
\eeas 
\end{lemma*}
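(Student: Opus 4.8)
The plan is to exploit the fact that a c\`adl\`ag function on a compact interval has at most countably many discontinuities, only finitely many of which exceed any fixed height $\delta>0$, and that the set of points lying within distance $u_n$ of such a ``large'' jump has Lebesgue measure $O(u_n)$ and hence contains only $o(n)$ of the grid points $t_i/T_n$. First I would fix $\delta>0$ and let $\calj_\delta = \{ v \in [0,1] : |\sigma_v^M - \sigma_{v-}^M| > \delta \text{ or } |\sigma_{v+}^M - \sigma_v^M|>\delta\}$, which is finite, say $\calj_\delta = \{v_1,\dots,v_{m_\delta}\}$. For each such $v_\ell$, the interval $(v_\ell - u_n, v_\ell + u_n]$ contains at most $2u_n n + 1 = o(n)$ indices $i$ with $t_i/T_n$ in it; call the union of all these bad indices $\cala_n^{(\delta)}$, so $\#\cala_n^{(\delta)} = o(n)$.

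Next I would show that for $i \notin \cala_n^{(\delta)}$, the interval $[t_i/T_n - u_n, t_i/T_n]$ contains no point of $\calj_\delta$, so that on this interval the oscillation of $\sigma^M$ is at most a fixed multiple of $\delta$ (here one uses that a c\`adl\`ag function with no jump of size exceeding $\delta$ on a closed interval has total oscillation controlled by $\delta$ together with its values at the endpoints — more precisely, using the modulus of continuity $w'_{\sigma^M}$, for $u_n$ small enough relative to the finitely many remaining small jumps one gets $\sup_{s,t \in [t_i/T_n-u_n, t_i/T_n]} |\sigma_s^M - \sigma_t^M| \le K\delta$ uniformly in $i \notin \cala_n^{(\delta)}$, once $n$ is large). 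Since $\sigma^M$ is also bounded on $[0,1]$, say by a constant $\bar{\sigma}$, this gives $|(\sigma_s^M)^2 - (\sigma_{t_i-}^M)^2|^2 \le K\bar\sigma^2 \delta^2$ for $s$ in that interval, whence $u_n^{-1} D_{i,n} \le K\bar\sigma^2\delta^2$ for all $i \notin \cala_n^{(\delta)}$ and all $n$ large.

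Finally I would remove the dependence on $\delta$ by a diagonal argument: taking $\delta = \delta_k = 1/k$, one obtains for each $k$ an index set $\cala_n^{(k)}$ with $\#\cala_n^{(k)} = o(n)$ and an integer $N_k$ such that $\sup_{i \notin \cala_n^{(k)}} u_n^{-1} D_{i,n} \le K\bar\sigma^2/k^2$ for $n \ge N_k$; choosing $N_k$ strictly increasing and setting $\cala_n = \cala_n^{(k)}$ for $N_k \le n < N_{k+1}$ yields a single sequence $\cala_n$ with $\#\cala_n = o(n)$ along which the supremum tends to $0$. The main obstacle is the middle step — making rigorous the claim that absence of large jumps on a closed interval of vanishing length forces the oscillation of the c\`adl\`ag function there to be uniformly small across all the ``good'' indices simultaneously; this is where one must invoke the c\`adl\`ag modulus of continuity $w'_{\sigma^M}$ carefully, noting that although $w'_{\sigma^M}(u_n)$ need not vanish, the excess over $\delta$ comes only from the finitely many remaining jumps of size in $(\delta', \delta]$ for a suitable $\delta' < \delta$, which can again be excluded on a set of $o(n)$ indices.
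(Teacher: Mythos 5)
Your outline follows essentially the same route as the paper's proof: exclude the $o(n)$ indices whose window meets a jump above a threshold, bound the oscillation on the remaining windows by the c\`adl\`ag modulus of continuity plus the threshold, and let the threshold tend to zero. The only structural difference is minor: you fix $\delta$ and then diagonalize over $\delta_k = 1/k$, whereas the paper chooses the threshold adaptively, setting $\eta_n = \sup\{\eta \geq 0 : N_\eta \geq a_n\} + 1/n$ with $a_n \to \infty$, $a_n u_n \to 0$ (where $N_\eta$ counts jumps of $(\sigma^M)^2$ of size at least $\eta$), so that $\#\calb_{\eta_n} = o(n)$ and the bound $\sup_i u_n^{-1}D_{i,n} \leq 2w'(u_n)^2 + \eta_n^2 \to 0$ comes out in one stroke, without a diagonal extraction. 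Both devices accomplish the same thing.

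The one point you flag as the ``main obstacle'' is settled by precisely the property you doubt: for a \cadlag\ (or regulated) function, Billingsley's modulus $w'$ \emph{does} vanish as its argument tends to zero --- that is the defining feature of this modulus, and it is exactly what the paper invokes (applied to $(\sigma^M)^2$). Concretely, take a partition realizing $w'(u_n)$, each of whose subintervals has length exceeding $u_n$; a window of length $u_n$ then meets at most two consecutive subintervals, so its oscillation is at most $2w'(u_n)$ plus the size of the single jump possibly straddled, which is at most $\delta$ on a good window, giving oscillation at most $3\delta$ for $n$ large, uniformly over the good indices. In particular, no secondary exclusion of jumps of size in $(\delta',\delta]$ is needed, and that patch alone would not close the argument anyway: after removing those you would face the same question for still smaller jumps, and only the vanishing of $w'$ terminates this regress. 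With that correction, your proof is complete and coincides in substance with the paper's.
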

\begin{proof}

For $\eta \geq 0$, set $\calb_\eta = \{i \in \{1,\ldots,n\} | \exists s \in (\frac{t_i}{T_n}- u_n,\frac{t_i}{T_n}) \textnormal{ s.t }  |\Delta (\sigma_s^M)^2| \geq \eta \}$. For $i \in \{1,\ldots,n\} - \calb_\eta$ we easily have that  
\beas
D_{i,n} \leq u_n (2 w'(u_n)^2 + \eta^2),  
\eeas 
where $w'$ is the modulus of continuity of $(\sigma^M)^2$ introduced in the notation section and moreover $\#\calb_\eta \leq N_{\eta} u_n\Delta_n^{-1} $ where $N_\eta$ is the number of jumps of $(\sigma^M)^2$ of size at least $\eta$. Let $a_n >0$ such that $a_n u_n \to 0$ and $a_n \to +\infty$. Since $N_\eta$ is left continuous in $\eta$, if we set $\eta_n = \sup \{ \eta \geq 0 | N_\eta \geq a_n \} +1/n,$  then $N_{\eta_n} \leq a_n$, and it is easy to see that $\eta_n$ must be finite since $N_\eta = 0$ as soon as $\eta$ is larger than the greatest jump of $(\sigma^M)^2$, and moreover $\eta_n \downarrow 0$ because otherwise $(\sigma^M)^2$ would have an infinite number of jumps of size larger than some $\eta_\infty > 0$. Therefore, setting $\cala_{n} = \calb_{\eta_n}$, we get $\# \cala_n  \leq a_n u_n \Delta_n^{-1}= o(n/T_n) = o(n)$ and $$ \sup_{i \in \{1,\ldots,n\} - \cala_n} u_n^{-1} D_{i,n} \leq (2w'(u_n)^2 + \eta_n^2) \to 0.$$ 
\end{proof}

Now we prove the uniform consistency of $C_{i,n}$ outside of the set $\cala_n'$ whose cardinality is negligible with respect to $n$. The following lemma is a stronger version of Proposition \ref{lemmaCiWeak}. 
\begin{lemma*} \label{lemmaRVLocal}
  Let $\gamma' < \gamma \in (0,1)$. Let $k_n = [T_n^\gamma \Delta_n^{-1}]$ and $l_n = [T_n^{\gamma'} \Delta_n^{-1}]$, and finally $\cala_n' = \cala_n \cup \{1,\ldots,2k_n\}$ where $\cala_n$ is as in Lemma \ref{lemmaSigmaM}. Then, uniformly in $i \in \{1,\ldots,n \} - \cala_n'$
$$ \esp |C_{i,n} - (\sigma^M_{(t_i/T_n)-})^2  \omega_{11}|^2 \to 0.$$
% Moreover, if $\sigma^M$ is Lipschitz then we have 
% $$ T_n^{-\gamma}RV_{i,k_n} - (\sigma^M_{t_i-/T})^2  \omega_{11} + O_{\mathbb{L}^2}(\Delta_n^{1/2} T_n^{-\frac{\gamma}{2}}  + T_n^{\frac{3}{2}(\gamma - 1)}).$$
\end{lemma*}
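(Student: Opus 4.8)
The plan is to fix an index $i\notin\cala_n'$, so that $i>2k_n$ and the sum defining $RV_{i,k,l}$ in (\ref{defRVLocal}) runs over the full block $W_{i,n}:=\{i-k_n,\dots,i-l_n-1\}$ of $k_n-l_n\sim T_n^{\gamma}\Delta_n^{-1}$ consecutive indices; the associated time interval $[t_{i-k_n-1},t_{i-l_n-1}]$ has length $(k_n-l_n)\Delta_n$, and its rescaled version $[t_{i-k_n-1}/T_n,\,t_{i-l_n-1}/T_n]$ sits inside $[t_i/T_n-u_n,\,t_i/T_n]$ with $u_n:=(k_n+1)\Delta_n T_n^{-1}\sim T_n^{\gamma-1}\to0$. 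The heart of the argument is the decomposition
$$ C_{i,n}-(\sigma^M_{(t_i/T_n)-})^2\,\omega_{11} \;=\; (\mathrm{I})_i+(\mathrm{II})_i+(\mathrm{III})_i+(\mathrm{IV})_i , $$
where $(\mathrm{I})_i=T_n^{-\gamma}\sum_{j\in W_{i,n}}\big(\Delta X_j^2\mathbb{1}_{\{|\Delta X_j|\le a\Delta_n^{\overline{\omega}}\}}-(\Delta X_j^c)^2\big)$ removes the jumps, $(\mathrm{II})_i=T_n^{-\gamma}\sum_{j\in W_{i,n}}\big((\Delta X_j^c)^2-\int_{t_{j-1}}^{t_j}(\sigma^M_{s/T_n})^2(\sigma^X_s)^2\,ds\big)$ is the realized-variance fluctuation, $(\mathrm{III})_i=T_n^{-\gamma}\int_{t_{i-k_n-1}}^{t_{i-l_n-1}}\big((\sigma^M_{s/T_n})^2-(\sigma^M_{(t_i/T_n)-})^2\big)(\sigma^X_s)^2\,ds$ freezes $\sigma^M$ at the left endpoint, and $(\mathrm{IV})_i=(\sigma^M_{(t_i/T_n)-})^2\big(T_n^{-\gamma}\int_{t_{i-k_n-1}}^{t_{i-l_n-1}}(\sigma^X_s)^2\,ds-\omega_{11}\big)$ is an ergodic time-average. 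I would then bound each term in $\mathbb{L}^2$ \emph{uniformly in $i\notin\cala_n'$}; throughout one uses that $\sigma^M$ is bounded (being \caglad on $[0,1]$), that $\#W_{i,n}\le k_n$, and one works on $\{RV_{i,k,l}>0\}$ — the complement, on which $C_{i,n}=+\infty$ by definition, has probability decaying faster than any power of $\Delta_n$ and is immaterial.

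For $(\mathrm{I})_i$, Minkowski's inequality gives $\esp|(\mathrm{I})_i|^2\le T_n^{-2\gamma}k_n^2\,\max_{j\in W_{i,n}}\esp\big|\Delta X_j^2\mathbb{1}_{\{|\Delta X_j|\le a\Delta_n^{\overline{\omega}}\}}-(\Delta X_j^c)^2\big|^2$, and (\ref{eqJump3}) — or rather the estimate obtained by repeating its proof with the single-sided truncation $\{|\Delta X_j|\le a\Delta_n^{\overline{\omega}}\}$ in place of $A_{i,n,u}$, which only sharpens the bound, taken with $U=X$ and $p=q=2$ — controls the maximum by $K\big(\Delta_n^{1+\overline{\omega}(4-r)}+\Delta_n^{1/2+p_0(1/2-\overline{\omega})}\big)$; since $T_n^{-2\gamma}k_n^2\sim\Delta_n^{-2}$, this yields $\esp|(\mathrm{I})_i|^2\le K\big(\Delta_n^{\overline{\omega}(4-r)-1}+\Delta_n^{p_0(1/2-\overline{\omega})-3/2}\big)\to0$, both exponents being positive exactly because $\tfrac1{4-r}<\overline{\omega}<\tfrac12-\tfrac3{2p_0}$ in Assumption \textbf{[C]}. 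For $(\mathrm{II})_i$, the summands are martingale increments relative to $(\calf_{t_j})$ — because $X^c$ is a continuous martingale (recall (\ref{XtcZt})), so is $(X^c)^2-\langle X^c\rangle$ — hence pairwise orthogonal, and by (\ref{estimateXc}) and Assumption \textbf{[B]} each has second moment $O(\Delta_n^2)$; therefore $\esp|(\mathrm{II})_i|^2\le T_n^{-2\gamma}\sum_{j\in W_{i,n}}\esp|\cdots|^2\le KT_n^{-2\gamma}k_n\Delta_n^2\sim K\Delta_n T_n^{-\gamma}\to0$.

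For $(\mathrm{III})_i$, the Cauchy--Schwarz inequality followed by the substitution $v=s/T_n$ gives $|(\mathrm{III})_i|^2\le T_n^{-2\gamma}\cdot T_n D_{i,n}\cdot\int_{t_{i-k_n-1}}^{t_{i-l_n-1}}(\sigma^X_s)^4\,ds$, where $D_{i,n}$ is the deterministic quantity of Lemma \ref{lemmaSigmaM} for $u_n=(k_n+1)\Delta_n T_n^{-1}$; taking expectations and using $\esp\int(\sigma^X)^4\le Kk_n\Delta_n\sim KT_n^{\gamma}$ together with the uniform bound $\sup_{i\notin\cala_n}u_n^{-1}D_{i,n}\to0$ of Lemma \ref{lemmaSigmaM}, one obtains $\sup_{i\notin\cala_n'}\esp|(\mathrm{III})_i|^2\le KT_n^{1-\gamma}\sup_{i\notin\cala_n}D_{i,n}\to0$. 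For $(\mathrm{IV})_i$, I would write the window integral as the difference of the tail integrals $\tfrac1{T_n}\int_{t_{i-k_n-1}}^{T_n}(\sigma^X_s)^2\,ds$ and $\tfrac1{T_n}\int_{t_{i-l_n-1}}^{T_n}(\sigma^X_s)^2\,ds$, scaled by $T_n^{1-\gamma}$; comparing each to its ergodic limit via (\ref{ergoSigma}), the deterministic main parts telescope to $T_n^{-\gamma}(k_n-l_n)\Delta_n\,\omega_{11}\to\omega_{11}$, while the stochastic remainder has $\mathbb{L}^2$-norm at most $KT_n^{1-\gamma}\sqrt{\epsilon(T_n)}$, uniformly in $i$ thanks to the supremum over $u$ in the definition of $\epsilon(T)$. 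Collecting the four estimates and noting $\#\cala_n'\le\#\cala_n+2k_n=o(n)+O(nT_n^{\gamma-1})=o(n)$ finishes the proof, and Proposition \ref{lemmaCiWeak} is recovered by simply discarding the uniformity.

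I expect $(\mathrm{IV})_i$ to be the main obstacle: making it vanish \emph{uniformly in $i$} forces one to marry the purely deterministic uniformity of Lemma \ref{lemmaSigmaM} with the ergodic $\mathbb{L}^2$-estimate (\ref{ergoSigma}), and — more delicately — to accommodate the fact that the averaging window, although it contains $k_n-l_n\to\infty$ returns, has physical length $o(T_n)$, so the rescaling by $T_n^{-\gamma}$ amplifies the ergodic fluctuation by the factor $T_n^{1-\gamma}$; keeping this under control is where essentially all of the quantitative bookkeeping resides (and where the restriction to $\gamma$ not too small, consistent with the $1/2\le\gamma<1$ imposed in the subsequent results, comes into play), everything else reducing to moment estimates already packaged in Lemma \ref{lemmaJumps} and (\ref{estimateXc}).
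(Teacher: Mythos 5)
Your decomposition into the truncation-removal term, the realized-variance martingale fluctuation, the freezing of $\sigma^M$, and the ergodic window-average is exactly the skeleton of the paper's two-step proof, and your treatments of the first three terms are correct and essentially identical to the paper's: the jump term via (\ref{eqJump3}) with $p=q=2$ and the exponent conditions of \textbf{[C]}, the fluctuation term via martingale orthogonality and (\ref{estimateXc}) (the paper folds the $l_n$ edge intervals into this martingale and picks up an extra harmless $T_n^{2\gamma'}$, which your window-matching avoids), and the $\sigma^M$-freezing term via Lemma \ref{lemmaSigmaM} with $u_n\sim T_n^{\gamma-1}$, giving $o(T_n^{1-\gamma}\cdot T_n^{\gamma-1})=o(1)$ uniformly off $\cala_n'$.

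The genuine gap is in your term $(\mathrm{IV})$. By writing the local window as a difference of tail integrals anchored at $T_n$ and normalizing by $T_n^{-\gamma}$, you arrive at the bound $KT_n^{1-\gamma}\sqrt{\epsilon(T_n)}$ for the stochastic remainder; but Assumption \textbf{[B]} gives no rate for $\epsilon(T)$, so this quantity has no reason to vanish, and your own closing paragraph concedes the point without resolving it. Appealing to ``$\gamma$ not too small'' does not help: this lemma is stated for any $\gamma\in(0,1)$ (the restriction $1/2\leq\gamma$ appears later for other purposes, e.g.\ the bound $\Delta_n^{4\overline{\omega}}T_n^{-2\gamma}$ in the proof of Theorem \ref{thmH1}), and even with $\gamma\geq 1/2$ one would need $T_n^{1-\gamma}\sqrt{\epsilon(T_n)}\to 0$, which \textbf{[B]} does not provide. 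The paper avoids the amplification altogether by applying the ergodicity property at the scale of the local window itself: the averaging interval has length of order $T_n^\gamma\to+\infty$, and (\ref{ergoSigma}) is invoked with this window (i.e.\ with ``$T$'' of order $T_n^\gamma$, the supremum over $u$ providing the uniformity in $i$), which yields $\esp\bigl|T_n^{-\gamma}\int_{t_{i-k_n}}^{t_{i-1}}(\sigma_s^X)^2ds-\omega_{11}\bigr|^2\leq K\epsilon(T_n^\gamma)\to 0$ uniformly, with no factor $T_n^{1-\gamma}$. So the missing ingredient is precisely this window-level (rather than $T_n$-anchored) use of \textbf{[B]}; as your argument stands, $(\mathrm{IV})$ is not shown to go to zero. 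A secondary remark: dismissing the event $\{RV_{i,k_n,l_n}=0\}$ because its probability is super-polynomially small does not work for an $\mathbb{L}^2$ statement, since $C_{i,n}=+\infty$ there would make the expectation infinite regardless of how small the probability is; the paper simply proves the convergence for $T_n^{-\gamma}RV_{i,k_n,l_n}$, which is the quantity actually estimated, so the right move is to identify $C_{i,n}$ with it (or note the convention), not to invoke a probability bound.
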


\begin{proof}
%Since $c\leq \omega_{11}\min_{u \in [0,1]} (\sigma_u^M)^2$, we have 
%$$|C_{i,n} - (\sigma^M_{t_i-/T})^2  \omega_{11}| = |(T_n^{-\gamma}RV_{i,k_n,l_n}) \vee c - (\sigma^M_{t_i-/T})^2  \omega_{11} \vee c.| \leq |T_n^{-\gamma}RV_{i,k_n,l_n} - (\sigma^M_{t_i-/T})^2  \omega_{11}|$$
%and thus it is sufficient to prove that uniformly in $i \in \{1,\ldots,n\}$, $T_n^{-\gamma}RV_{i,k_n,l_n} - (\sigma^M_{t_i-/T})^2  \omega_{11} = o_{\mathbb{L}^2}(1)$. \\
The proof is conducted in two steps.\\
\textbf{Step 1.} We remove the truncation part. We prove the case $r>0$. The proof is simpler if $r=0$. Defining 
\bea \label{defRvTilde}
\widetilde{RV}_{i,k_n,l_n} = \sum_{j=i-k_n}^{i-l_n-1} (\Delta X_j^c)^2,
\eea 
We prove that uniformly in $i\in\{2k_n+1,\ldots,n\}$, $T_n^{-\gamma}(RV_{i,k_n,l_n} - \widetilde{RV}_{i,k_n,l_n}) \to^{\mathbb{L}^2} 0$. By Jensen's inequality,  (\ref{eqJump3}) applied with $p=q=2$, and using $k_n - l_n - 1 < k_n$, we have that
\beas
T_n^{-2\gamma}\esp|RV_{i,k_n,l_n} - \widetilde{RV}_{i,k_n,l_n}|^2 &\leq& K \Delta_n^{-2}(\Delta_n^{1+\overline{\omega}(4-r)} + \Delta_n^{1/2 + p_0(1/2-\overline{\omega})}).
\eeas 
Note that for $p_0 \geq 8$ and $\overline{\omega} < \frac{p_0-1}{2p_0-r}$ we automatically have $-3/2 + p_0(1/2 - \overline{\omega}) > 0$, and moreover, $\overline{\omega}(4-r) - 1> 0$ since $\overline{\omega} > 1/(4-r)$, so that $T_n^{-2\gamma}\esp|RV_{i,k_n,l_n} - \widetilde{RV}_{i,k_n,l_n}|^2 \to 0$ uniformly in $i \in\{2k_n+1,\ldots,n\}$.

\textbf{Step 2.}  Introducing $A_{i,k_n} =  \sum_{j=i-k_n}^{i  - 1} \int_{t_j}^{t_{j+1}} (\sigma_{s/T_n}^M)^2(\sigma_s^X)^2ds  $, we have by \ito's formula that $\widetilde{RV}_{i,k_n,l_n} - A_{i,k_n} = M_{i,k_n,l_n}   $ with 
$$M_{i,k_n,l_n} = 2\sum_{j=i-k_n}^{i-l_n-1}   \int_{t_j}^{t_{j+1}} ( X_s^{c} -  X_{t_j}^{c} ) \sigma_{s/T_n}^M \sigma_s^X dW_s^X - \sum_{j=i-l_n}^{i-1} \int_{t_j}^{t_{j+1}}(\sigma_{s/T_n}^M)^2(\sigma_s^X)^2ds .$$  
Now, 
\beas
\esp M_{i,k_n,l_n}^2 &\leq& 4   \sum_{j=i-k_n}^{i-l_n-1}  \esp \int_{t_j}^{t_{j+1}} ( X_s^{c} -  X_{t_j}^{c} )^2 (\sigma_{s/T_n}^M)^2 (\sigma_s^X)^2 ds + K \Delta_n^2 l_n^2\\
&\leq& K\Delta_n^2 (k_n+l_n^2) = K(\Delta_n T_n^\gamma + T_n^{2\gamma'}) \\
\eeas 
  
so that uniformly in $i \in \{2k_n+1,\ldots,n\}$, $\esp|\widetilde{RV}_{i,k_n,l_n} - A_{i,k_n} |^2 = O(\Delta_n T_n^{\gamma} + T_n^{2\gamma'})$. Now defining $\widetilde{A_{i,k_n}} = (\sigma^M_{(t_i/T_n)-})^2 \int_{t_{i-k_n}}^{t_{i-1}} (\sigma_s^X)^2 ds,$ we have for $i\in\{k_n+1,\ldots,n\} - \cala_n$:
\beas 
 \esp \left|A_{i,k_n}-\widetilde{A_{i,k_n}}\right|^2 &=&  \esp \left| \sum_{j=i-k_n}^{i-1}    \int_{t_j}^{t_{j+1}} ((\sigma^M_{s/T_n})^2 - (\sigma^M_{(t_i/T_n)-})^2))(\sigma_s^X)^2 ds \right|^2\\
 &\leq& k_n \Delta_n \sum_{j=i-k_n}^{i-1} \int_{t_j}^{t_{j+1}} \left|(\sigma^M_{s/T_n})^2 - (\sigma^M_{(t_i/T_n)-})^2)\right|^2\esp (\sigma_s^X)^4 ds\\
 &\leq& K T_n^{\gamma+1} \sup_{i \in \{k_n+1,\ldots,n\} - \cala_n}D_{i,n} = o(T_n^{2\gamma}),\\
\eeas 
where we have applied Jensen's inequality at the second step and Lemma \ref{lemmaSigmaM} with $u_n = T_n^{\gamma-1}$ at the last step. Finally, By \textbf{[B]}, we immediately deduce that $\esp |\widetilde{A_{i,k_n}} - (\sigma^M_{t_i-/T})^2 T_n^\gamma \omega_{11}|^2 \leq K T_n^{2\gamma} \epsilon(T_n^\gamma) = o(T_n^{2\gamma})$. 
Combining all those inequalities we get that for any $i \in \{1,\ldots,n\} - \cala_n'$
\beas 
\esp |T_n^{-\gamma}\widetilde{RV}_{i,k_n,l_n} - (\sigma^M_{t_i-/T})^2 \omega_{11}|^2 \leq K (a_n + \Delta_nT_n^{-\gamma} + T_n^{2(\gamma' - \gamma)}) \to 0  
\eeas 
for some sequence $a_n \to 0$, and we are done.
 
\end{proof} 
Next, we prove that $C_{i,n}$ are uniformly bounded from below in probability, which will allow us to greatly simplify the subsequent proofs.

\begin{lemma*} \label{lemmaLocalization}
There exists $ 0 < c < \frac{1}{2}(\underline{\sigma}^X)^2 \min_{u \in [0,1]}(\sigma_u^M)^2   $ such that 
\beas
\proba\l[\min_{i \in \{1,\ldots,n\}} C_{i,n} < c\r] \to 0.
\eeas 
\end{lemma*}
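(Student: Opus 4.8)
The plan is to derive the lower bound for $\min_i C_{i,n}$ from the already-established $\mathbb{L}^2$ convergence in Lemma \ref{lemmaRVLocal}, but being careful about the fact that this convergence is only uniform outside the negligible set $\cala_n'$, whereas the minimum here is taken over \emph{all} indices $i \in \{1,\ldots,n\}$. First I would fix a candidate constant $c$: since $\sigma^M$ is \caglad and valued in $\reels_+ - \{0\}$ on the compact $[0,1]$, the quantity $\min_{u \in [0,1]} (\sigma_u^M)^2$ is attained and strictly positive, and combined with the asymptotic lower bound $\underline{\sigma}^X > 0$ from Assumption \textbf{[B]} and $\omega_{11} > 0$ (positive definiteness of $\Omega$), I can choose $0 < c < \frac12 (\underline{\sigma}^X)^2 \min_{u}(\sigma_u^M)^2 \le \frac12 (\underline{\sigma}^X)^2 \min_u (\sigma_u^M)^2 \omega_{11}$ after possibly rescaling; more precisely $c$ should be chosen strictly below $\frac12$ times the infimum over $i$ of the limiting value $(\sigma_{(t_i/T_n)-}^M)^2 \omega_{11}$, which is uniformly bounded below by a positive constant by the two facts just cited (for $n$ large, so that $t_i/T_n$ stays in a region where $\sigma^X$ is already above $\underline\sigma^X/\sqrt2$, say, with high probability).

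Next I would split the event $\{\min_i C_{i,n} < c\}$ according to whether the minimizing index lies in $\cala_n'$ or not. On the complement $\{1,\ldots,n\} - \cala_n'$, for each such $i$ we have $\esp|C_{i,n} - (\sigma_{(t_i/T_n)-}^M)^2\omega_{11}|^2 \le \delta_n$ with $\delta_n \to 0$ \emph{uniformly}, by Lemma \ref{lemmaRVLocal}. A union bound plus Markov's inequality then gives
\beas
\proba\l[ \min_{i \in \{1,\ldots,n\}-\cala_n'} C_{i,n} < c \r] \le \sum_{i \notin \cala_n'} \proba\l[ |C_{i,n} - (\sigma^M_{(t_i/T_n)-})^2 \omega_{11}| > c' \r] \le n \frac{\delta_n}{(c')^2},
\eeas
where $c' > 0$ is the (uniform) gap between $c$ and the limiting values. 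This is not obviously negligible because of the factor $n$, so the genuine point is that Lemma \ref{lemmaRVLocal}'s rate must beat $n^{-1}$; I would therefore retrace the estimates in the proof of Lemma \ref{lemmaRVLocal} to extract a quantitative rate $\delta_n = O(\Delta_n T_n^{-\gamma} + T_n^{2(\gamma'-\gamma)} + \ldots)$ and check, using Assumption \textbf{[C]} (the condition $T_n^{1+e_1\vee e_2} n^{-1} \to 0$), that $n \delta_n \to 0$. This verification — confirming that the combination of truncation error, martingale approximation error, and ergodic averaging error decays faster than $n^{-1}$ under \textbf{[A]}--\textbf{[C]} — is the main obstacle, and it is exactly the reason the assumptions on $T_n$ versus $n$ are stated the way they are.

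For the remaining indices $i \in \cala_n'$, I would argue differently: $\cala_n' = \cala_n \cup \{1,\ldots,2k_n\}$, and on $\{1,\ldots,2k_n\}$ the estimator $C_{i,n}$ is set to $+\infty$ by definition, so it never contributes to the minimum being small; it only remains to control $i \in \cala_n$. Here $\# \cala_n = o(n)$, and in fact from the proof of Lemma \ref{lemmaSigmaM} one has the sharper bound $\#\cala_n \le a_n u_n \Delta_n^{-1} = a_n T_n^{\gamma - 1}$ for a slowly diverging $a_n$. I would note that even for $i \in \cala_n$, provided $RV_{i,k_n,l_n} > 0$ and $i > 2k_n$, the quantity $C_{i,n} = T_n^{-\gamma} RV_{i,k_n,l_n}$ is a truncated realized volatility over a window whose endpoint index $i-l_n-1$ is \emph{not} in $\cala_n$ for most of the window (the bad points form a negligible fraction), so a crude lower bound $C_{i,n} \ge c$ with high probability still holds — alternatively, and more simply, one can enlarge the exceptional set in Lemma \ref{lemmaSigmaM} slightly and redo the union bound, since $\#\cala_n = o(n)$ only makes the bound $n\delta_n$ easier. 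Assembling the two pieces, $\proba[\min_i C_{i,n} < c] \le n\delta_n/(c')^2 + \proba[\min_{i \in \cala_n, i > 2k_n} C_{i,n} < c] \to 0$, which is the claim.
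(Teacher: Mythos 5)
Your plan correctly identifies where the difficulty sits, but the step you defer ("check that $n\delta_n\to 0$") is not a verification that succeeds — it fails under the paper's assumptions, for two distinct reasons. First, the truncation-removal error inside the proof of Lemma \ref{lemmaRVLocal} is of order $\Delta_n^{\overline{\omega}(4-r)-1}$ in $\mathbb{L}^2$ (after the $T_n^{-\gamma}$ normalization), and since $\overline{\omega}<1/2$ forces $\overline{\omega}(4-r)<2$, a union bound over $\sim n$ indices gives $n\Delta_n^{\overline{\omega}(4-r)-1}=T_n\Delta_n^{\overline{\omega}(4-r)-2}\to+\infty$: Assumption \textbf{[C]} is calibrated so that such errors vanish after summation over $O(n\Delta_n^{2\overline{\omega}})$ indices, not over all $n$ of them. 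Second, and more fundamentally, the ergodic-averaging error in Lemma \ref{lemmaRVLocal} is controlled only through $\epsilon(T_n^{\gamma})$ from Assumption \textbf{[B]}, which tends to $0$ with no rate whatsoever; hence no amount of retracing the estimates can show that $n$ times this error vanishes. Any argument that benchmarks every $C_{i,n}$ against the deterministic limit $(\sigma^M_{(t_i/T_n)-})^2\omega_{11}$ therefore needs a quantitative ergodicity assumption the paper does not make, so the core of your proposal cannot be completed as stated.

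The paper's proof gets around both obstacles with two devices absent from your proposal. (i) A coarse-grid reduction: since each truncated squared return is at most $a^2\Delta_n^{2\overline{\omega}}$, one has the deterministic estimate $|C_{k,n}-C_{j,n}|\leq KT_n^{-\gamma}\Delta_n^{2\overline{\omega}}\,(|k-j|\wedge k_n)$, so the minimum over all indices differs by only $O(T_n^{-\gamma})$ from the minimum over a grid of mesh $d_n=[\Delta_n^{-2\overline{\omega}}]$, and the union bound runs over only $\approx n\Delta_n^{2\overline{\omega}}$ indices. (ii) A pathwise benchmark: instead of the limit $(\sigma^M)^2\omega_{11}$, each $C_{j,n}$ is compared with the random quantity $T_n^{-\gamma}\int_{t_{j-k_n}}^{t_{j-l_n}}(\sigma^M_{s/T_n})^2(\sigma^X_s)^2ds$, which on the event $E_n=\{\inf_{t\geq t_{k_n}}(\sigma^X_t)^2>2c/\min_{u}(\sigma^M_u)^2\}$ (of probability tending to one by \textbf{[B]}) is bounded below by roughly $2c$; this eliminates the ergodicity error altogether and also makes the exceptional set $\cala_n$ irrelevant, so your separate (and only sketched) treatment of $i\in\cala_n$ is not needed. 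The residual deviation is then handled by a mixed Markov bound — second moment for the martingale part (of order $T_n^{-\gamma}\Delta_n$ per index) and first moment for the jump/truncation part — and it is this combination, summed over the coarse grid, that Assumption \textbf{[C]} makes negligible.
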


\begin{proof}
The proof is conducted in two steps.\\
\textbf{Step 1.} Let $d_n = [\Delta_n^{-2\overline{\omega}}]$. We prove that 
\bea \label{approxCnBlock} 
\min_{j \in \{1, 1+d_n,\ldots, 1+[(n-1)/d_n]d_n\}} C_{j,n} - \min_{i \in \{1,\ldots,n\}} C_{i,n}   \to^\proba 0.
\eea
Indeed, note that for $2k_n+1 \leq j \leq k \leq n$,
$$ |C_{k,n} - C_{j,n}| \leq T_n^{-\gamma}\l( \sum_{l=(k-k_n)\vee (j-l_n)}^{k-l_n-1} \Delta X_l^2 \mathbf{1}_{\{|\Delta X_l| \leq a\Delta_n^{\overline{\omega}}\}} + \sum_{l=(j-k_n)}^{(k-k_n)\wedge (j-l_n-1)} \Delta X_l^2 \mathbf{1}_{\{|\Delta X_l| \leq a\Delta_n^{\overline{\omega}}\}}\r),$$
and since $l_n = o(k_n)$, we immediately deduce that then
\bea \label{estimateCkCj}
|C_{k,n} - C_{j,n}| \leq KT_n^{-\gamma}\Delta_n^{2\overline{\omega}} [(k-j) \wedge k_n]
\eea 
so that, since $d_n \leq k_n$, we have almost surely for $k$ and $j$ larger than $2k_n+1$
\bea \label{inegaliteKJ} 
\max_{|k-j| \leq d_n} |C_{k,n} - C_{j,n}| \leq KT_n^{-\gamma}.
\eea
Now, let $i_n$ be the random index such that $C_{i,n}$ is minimal. There exists $j_n$ of the form $1+k d_n$ such that $|j_n - i_n| \leq d_n$ and clearly $i_n$ and $j_n$ are larger than $2k_n+1$. Then (\ref{approxCnBlock}) can be rewritten as 
\beas 
\min_{j \in \{1, 1+d_n,\ldots, 1+[(n-1)/d_n]d_n\}} C_{j,n} -  C_{i_n,n} &\leq& C_{j_n,n} - C_{i_n,n} \\
&\leq& KT_n^{-\gamma} \to^\proba 0,
\eeas
by (\ref{inegaliteKJ}) and we are done.\\
\textbf{Step 2.} In view of Step 1 and since $C_{i,n} = +\infty$ if $i \leq 2k_n$, we only need to prove that the claimed result holds when the minimum is taken over the subset $\calb_n =\{1, 1+d_n,\ldots, 1+[(n-1)/d_n]d_n\} - \{  1,\ldots,2k_n\}$. Moreover, letting $E_n = \{ \inf_{t \geq t_{k_n}} (\sigma_{t}^X)^2 > 2c/\min_{u \in [0,1]}(\sigma_u^M)^2\}$ with $c$ as in the lemma, by \textbf{[B]} we have $\proba(E_n) \to 1$, so that it is sufficient to prove 
$$\proba\l[\l\{\min_{i \in \calb_n} C_{i,n} < c \r\} \cap E_n \r] \to 0.$$
We have
\beas 
 &&\l\{ \min_{j \in \calb_n} C_{j,n} < c \r\} \cap E_n \\&\subset& \bigcup_{j \in \calb_n} \l\{ \l|C_{j,n} - T_n^{-\gamma} \int_{t_{j-k_n} }^{t_{j-l_n}} (\sigma_{s/T_n}^M)^2 (\sigma_s^X)^2ds\r| >  T_n^{-\gamma} \int_{t_{j-k_n} }^{t_{j-l_n}} (\sigma_{s/T_n}^M)^2 (\sigma_s^X)^2ds - c \r\} \cap E_n\\
 &\subset&  \bigcup_{j \in \calb_n}\l\{ \l|C_{j,n} - T_n^{-\gamma} \int_{t_{j-k_n} }^{t_{j-l_n}} (\sigma_{s/T_n}^M)^2 (\sigma_s^X)^2ds\r| >  c \r\}
\eeas 
so that, using $\mathbb{1}_{\{ |a+b| > c \}} \leq \mathbb{1}_{\{ |a| > c/2\}} + \mathbb{1}_{\{ |b| > c/2\}} \leq 4|a|^2/c^2 + 2|b|/c $ with $a = T_n^{-\gamma}(\widetilde{RV}_{j,k_n,l_n} - \int_{t_{j-k_n} }^{t_{j-l_n}} (\sigma_{s/T_n}^M)^2 (\sigma_s^X)^2ds)$, and $b = RV_{j,k_n,l_n} - \widetilde{RV}_{j,k_n,l_n}$ where $\widetilde{RV}_{j,k_n,l_n}$ was defined in (\ref{defRvTilde}), we get 
\beas 
\proba\l[\min_{j \in \calb_n} C_{j,n} < c\r] &\leq& K\sum_{j\in \calb_n}  \proba \l[\l|C_{j,n} - T_n^{-\gamma} \int_{t_{j-k_n} }^{t_{j-l_n}} (\sigma_{s/T_n}^M)^2 (\sigma_s^X)^2ds\r|^2 > c \r] \\
&\leq& I + II
\eeas
with 
\beas 
I &=& K T_n^{-\gamma}  \sum_{j\in \calb_n}  \esp \l| \sum_{k=l_n}^{k_n - 1} \int_{t_{j-k -1} }^{t_{j-k}} (X_s^c - X_{t_{j-k-1}}) \sigma_{s/T_n}^M \sigma_s^XdW_s^X \r|^2
\eeas 
and 
\beas
II &=&  K T_n^{-\gamma}  \sum_{j\in \calb_n} \esp \l| RV_{j,k_n,l_n} - \widetilde{RV}_{j,k_n,l_n}  \r|.
\eeas 
Application of Burkholder-Davis-Gundy inequality and (\ref{estimateXc}) yields 
\beas 
I \leq Knd_n^{-1} T_n^{-\gamma} k_n \Delta_n^2 &\leq& K \l(T_n^{1+ \frac{1-\gamma}{2\overline{\omega}}}/n\r)^{2\overline{\omega}} \to 0
\eeas 
by Assumption \textbf{[C]}. Moreover, application of (\ref{eqJump3}) for $p=q=1$ yields for $r>0$
\beas 
II \leq K \l(T_n^{1+ \frac{1}{\overline{\omega}(4-r)-1}}/n\r)^{\overline{\omega}(4-r)-1} + K \l(T_n^{1+ \frac{1}{p_0(1/2-\overline{\omega}) + 2\overline{\omega} - 3/2}}/n\r)^{ p_0(1/2-\overline{\omega}) + 2\overline{\omega} - 3/2} \to 0
\eeas 
again by Assumption \textbf{[C]}. Similar reasoning yields $II \to 0$ for $r=0$.
\end{proof}

In view of the previous lemma and a standard localization argument, from now on we will always prove the convergences in probability and in distribution assuming that we are on the event $\{\min_{i\in\{1,\ldots,n\} }C_{i,n} \geq c\}$, which is asymptotically of probability $1$. This amounts to assuming without loss of generality that $C_{i,n}$ is bounded away from $0$ uniformly in $i$:

\smallskip 

\par\noindent \textbf{[H]} There exists $c > 0 $ such that for any $i\in \{2k_n+1,\ldots,n\}$, $C_{i,n} = (T_n^{-\gamma}RV_{i,k_n,l_n}) \vee c$, where $x \vee y$ is the maximum of $x$ and $y$. \\

We now proceed to show that $X$, $Z$ and $\epsilon$, when properly scaled (both in space and time) and seen as \cadlag processes on $[0,1]$, converge in distribution. Let $\widetilde{X}^{c,def,n}$, $\widetilde{Z}^{def,n}$, and $\widetilde{\epsilon}$ be the processes such that for $u \in [0,1]$, we have
\bea \label{defXtilde}
\widetilde{X}_{u}^{c,def,n} = T_n^{-1/2}\l(X_0 + \sum_{i=1}^n C_{i,n}^{-1/2} \Delta X_{i,uT_n}^c\r),
\eea 
a similar definition for $\widetilde{Z}^{def,n}$, and
\bea  \label{defepsilontilde}
\widetilde{\epsilon}_u^n = T_n^{-1/2} \epsilon_{uT_n} = T_n^{-1/2} \sum_{i=1}^n \rho_n^{n\l(u- \frac{t_i}{T_n}\r)} \Delta Z_{i,uT_n}.  
\eea 
Moreover, let $\widetilde{\epsilon}^{def,n}$ be the process such that
$$\Delta \widetilde{\epsilon}_{\frac{t_i}{T_n}}^{def,n} =  \widetilde{\epsilon}_{\frac{t_i}{T_n}}^{def,n} -  \widetilde{\epsilon}_{\frac{t_{i-1}}{T_n}}^{def,n} =  T_n^{-1/2}\frac{\Delta \epsilon_{i}}{\sqrt{C_{i,n}}}$$
for any $i \in \{1,\ldots,n\}$, that is, for $u \in [0,1]$,
\bea 
\widetilde{\epsilon}_{u}^{def,n} = \sum_{i=1}^n C_{i,n}^{-1/2} \Delta \widetilde{\epsilon}_{\frac{t_i}{T_n}, u}^n 
\eea 
where for a process $V$ on $[0,1]$, we write $\Delta V_{\frac{t_i}{T_n}, u} := V_{\frac{t_i}{T_n} \wedge u} - V_{\frac{t_{i-1}}{T_n} \wedge u}$. We also naturally define the scaled process $\widetilde{Y}^{c,def,n}$ as
\bea \label{defYtilde}
\widetilde{Y}_{u}^{c,def,n} = c_0 + \alpha_0 \widetilde{X}_{u}^{c,def,n} + \widetilde{\epsilon}_u^{def,n}, \textnormal{ } u\in[0,1].
\eea 
Finally, consider for $u \in [0,1]$ 
\bea \label{defTX}
\widetilde{\calt}(X)_{u}^{def,n} = T_n^{-1/2}\l(X_0 + \sum_{i=1}^n C_{i,n}^{-1/2}\Delta X_{i,uT_n}\mathbb{1}_{A_{i,n,u}}\r) 
\eea
and
\bea \label{defTY}
\widetilde{\calt}(Y)_{u}^{def,n} = T_n^{-1/2}\l(Y_0 + \sum_{i=1}^n C_{i,n}^{-1/2}\Delta Y_{i,uT_n} \mathbb{1}_{A_{i,n,u}}\r). 
\eea 
We first prove in the following Lemma that $\widetilde{\calt}(X)^{def,n}$ (resp. $\widetilde{\calt}(Y)^{def,n}$) is well approximated by its continuous counterpart $\widetilde{X}^{c,def,n}$ (resp. $\widetilde{Y}^{c,def,n}$).

\begin{lemma*}\label{lemmaTruncationUCP}
We have 
$$ \widetilde{\calt}(X)^n - \widetilde{X}^{c,def,n} \to^{u.c.p} 0,$$
 
and
$$ \widetilde{\calt}(Y)^n - \widetilde{Y}^{c,def,n} \to^{u.c.p} 0.$$
\end{lemma*}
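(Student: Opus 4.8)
The plan is to prove both statements by a direct $\mathbb{L}^1$-bound followed by Markov's inequality; the only genuine ingredient beyond bookkeeping is the jump-truncation estimate (\ref{eqJump3}) of Lemma \ref{lemmaJumps}, combined with the rate condition in Assumption \textbf{[C]}. I would work throughout on the event of \textbf{[H]}, which is legitimate after the localization provided by Lemma \ref{lemmaLocalization}.

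First I would reduce both claims to a single estimate. For $U=X$, subtracting (\ref{defXtilde}) from (\ref{defTX}) and using that $C_{i,n}$ does not depend on $u$ gives
$$\widetilde{\calt}(X)_u^{def,n} - \widetilde{X}_u^{c,def,n} = T_n^{-1/2}\sum_{i=1}^n C_{i,n}^{-1/2}\bigl(\Delta X_{i,uT_n}\mathbb{1}_{A_{i,n,u}} - \Delta X_{i,uT_n}^c\bigr).$$
For $U=Y$, a short computation using the cointegration identities (\ref{cointContinuous})--(\ref{cointModified}) (which yield $\Delta Y_{i,uT_n}^c = \alpha_0\Delta X_{i,uT_n}^c + \Delta\epsilon_{i,uT_n}$, so that the $\widetilde{X}^{c,def,n}$ and $\widetilde{\epsilon}^{def,n}$ parts of $\widetilde{Y}^{c,def,n}$ match the corresponding pieces of $\widetilde{\calt}(Y)^{def,n}$) shows that, up to trivial constant/initial-value terms of order $O_\proba(T_n^{-1/2})$, $\widetilde{\calt}(Y)_u^{def,n} - \widetilde{Y}_u^{c,def,n}$ is likewise $T_n^{-1/2}\sum_i C_{i,n}^{-1/2}(\Delta Y_{i,uT_n}\mathbb{1}_{A_{i,n,u}} - \Delta Y_{i,uT_n}^c)$. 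Hence both cases are instances of the same object with $U\in\{X,Y\}$.

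Next, on \textbf{[H]} we have $C_{i,n}^{-1/2}\leq c^{-1/2}$, so pulling the supremum inside the finite sum,
$$\sup_{u\in[0,1]}\bigl|\widetilde{\calt}(U)_u^{def,n} - \widetilde{U}_u^{c,def,n}\bigr| \leq c^{-1/2}T_n^{-1/2}\sum_{i=1}^n \sup_{u\in[0,1]}\bigl|\Delta U_{i,uT_n}\mathbb{1}_{A_{i,n,u}} - \Delta U_{i,uT_n}^c\bigr| + O_\proba(T_n^{-1/2}).$$
Applying (\ref{eqJump3}) with $p=q=1$ to each of the $n$ terms and using $nT_n^{-1/2}=T_n^{1/2}\Delta_n^{-1}$, the expectation of the leading term is bounded by $KT_n^{1/2}\bigl(\Delta_n^{\overline{\omega}(1-r)} + \Delta_n^{p_0(1/2-\overline{\omega})-1/2}\bigr)$ (with $r$ replaced by an arbitrarily small $r'>0$ when $r=0$).

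The main obstacle — still routine, but this is where care is needed — is verifying that this crude bound, obtained by summing $n$ individual estimates without exploiting any martingale cancellation, nonetheless tends to $0$; this is precisely what the rate condition $T_n^{1+e_1\vee e_2}n^{-1}\to 0$ of \textbf{[C]} is tailored to. Writing $T_n^{1/2}\Delta_n^{\overline{\omega}(1-r)}=\bigl(T_n^{1+\frac{1}{2\overline{\omega}(1-r)}}n^{-1}\bigr)^{\overline{\omega}(1-r)}$ and $T_n^{1/2}\Delta_n^{p_0(1/2-\overline{\omega})-1/2}=\bigl(T_n^{1+\frac{1}{p_0(1-2\overline{\omega})-1}}n^{-1}\bigr)^{p_0(1/2-\overline{\omega})-1/2}$, one checks, using $\frac{1}{4-r}<\overline{\omega}<\frac12-\frac{3}{2p_0}$ and $p_0\geq 8$ (so $\overline{\omega}\leq\frac12$ and $p_0(1-2\overline{\omega})>3$), that $\frac{1}{2\overline{\omega}(1-r)}\leq e_1$ and $\frac{1}{p_0(1-2\overline{\omega})-1}\leq\frac{4}{p_0(1/2-\overline{\omega})}\leq e_2$; hence both terms vanish. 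Markov's inequality then gives $\sup_{u}\bigl|\widetilde{\calt}(U)_u^{def,n}-\widetilde{U}_u^{c,def,n}\bigr|\to^\proba 0$ for $U\in\{X,Y\}$, i.e. the claimed u.c.p. convergence.
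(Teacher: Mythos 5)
Your proof is correct and follows essentially the same route as the paper's: after localizing via Lemma \ref{lemmaLocalization} to bound $C_{i,n}^{-1/2}$ by $c^{-1/2}$, you sum the uniform-in-$u$ bound of (\ref{eqJump3}) with $p=q=1$ over the $n$ increments and check that the resulting rate $T_n^{1/2}(\Delta_n^{\overline{\omega}(1-r)}+\Delta_n^{p_0(1/2-\overline{\omega})-1/2})$ vanishes under Assumption \textbf{[C]}, which is exactly the paper's argument (your exponent check routes the second term through the $\frac{4}{p_0(1/2-\overline{\omega})}$ component of $e_2$ rather than the other one, but both work). The reduction of the $Y$ case via $\Delta Y^c_{i,uT_n}=\alpha_0\Delta X^c_{i,uT_n}+\Delta\epsilon_{i,uT_n}$ is likewise what the paper implicitly does when it says the $Y$ case is "proven the same way."
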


\begin{proof}
We prove the convergence for $X$ and $r>0$. First, recall that $C_{i,n}^{-1/2} < c^{-1/2} < +\infty$ so that 
\beas 
 \widetilde{\calt}(X)_u^{def,n} - \widetilde{X}_u^{c,def,n} &\leq& KT_n^{-1/2} \sum_{i=1}^n   \l| \Delta X_{i,uT_n}\mathbb{1}_{A_{i,n,u}} -  \Delta X_{i,uT_n}^c\r|
\eeas 
and applying (\ref{eqJump3}) with $p=q=1$ yields 
\beas 
\esp \sup_{u \in  [0,1]} |\widetilde{\calt}(X)_u^{def,n} - \widetilde{X}_u^{c,def,n}| &\leq& KT_n^{-1/2}n(\Delta_n^{1+\overline{\omega}(1-r)} + \Delta_n^{1/2+p_0(1/2-\overline{\omega})}) \\
&\leq& K \l(\l(\frac{T_n^{1+\frac{1}{2\overline{\omega}(1-r)}}}{n}\r)^{\overline{\omega}(1-r)} + \l(\frac{T_n^{1+\frac{1}{2p_0(1/2 -\overline{\omega}) -1}}}{n}\r)^{p_0(1/2 - \overline{\omega}) - 1/2 } \r) \to 0
\eeas 
by Assumption \textbf{[C]}. The convergence for $Y$ and for the case $r=0$ can be proven the same way.

\end{proof}

We now show that under the local alternative $\widetilde{\calh}_{1}^{n,\beta}$, the process
$$(\calu_u^n)_{u \in [0,1]} = \left(\widetilde{X}_u^{c,def,n}, \widetilde{Z}_u^{def,n},\widetilde{\epsilon}_u^n\right)_{u \in [0,1]} $$ 
converges in distribution toward a limit which depends on the matrix 
$$ \Omega = \proba-\lim_{T \to + \infty} \frac{1}{T} \int_0^T \Sigma_t dt =\left(\begin{matrix} \omega_{11} & \omega_{12}\\ \omega_{12}& \omega_{22} \end{matrix}\right)$$
and on $B = L W$, where we recall that $W$ is a two dimensional standard Brownian motion on $[0,1]$, and $L$ was defined in (\ref{defL}).
\begin{lemma*}\label{lemmaLimitIntX}
Under $\widetilde{\calh}_{1}^{n,\beta}$, we have the convergence in distribution with respect to the Skorokhod topology of $D_{\reels^3}[0,1]$
$$\left(\calu_u^n\right)_{u \in [0,1]} \to^d \left(\calu_u\right)_{u \in [0,1]}$$
such that
$$ \left(\calu_{u,1},\calu_{u,2}\right)_{u \in [0,1]} = \omega_{11}^{-1/2} (B_u)_{u \in [0,1]},$$
and
$$ \left(\calu_{u,3}\right)_{u \in [0,1]} =  (I(\beta)_u)_{u \in [0,1]} := \left(\int_0^u  e^{-\beta(u-s)}\sigma_s^MdB_s^2\right)_{u \in [0,1]}.$$
 
Under $\calh_1$, we only have the convergence of the subvector $(\calu_{u,1}^n,\calu_{u,2}^n)_{u \in [0,1]} \to^d (\calu_{u,1},\calu_{u,2})_{u \in [0,1]}$.
\end{lemma*}

\begin{proof}
We first prove the functional convergence in distribution for $(\calu_1^n,\calu_2^n, G^n)$, where $G^n$ is the process such that for $u \in [0,1]$, $G_{u}^n = \rho_n^{-un}\widetilde{\epsilon}_u^n = T_n^{-1/2}\l(Z_0 +  \sum_{j=1}^{n} \rho_n^{n \frac{t_j}{T_n}  }\Delta Z_{j,uT_n}\r)$, which is a  martingale. We apply Corollary VIII.3.24 p. 476 from \cite{JacodLimit2003} to the continuous martingale $\calu^n$. We need to prove for any $u \in [0,1]$
\bea \label{bracketXZ}
\left[ (\calu_{1}^n,\calu_{2}^n),(\calu_1^n,\calu_2^n) \right]_u \to^\proba  u\omega_{11}^{-1}\Omega,
\eea
\bea 
\label{bracketEpsilon}
\left[ G^n,G^n \right]_u \to^\proba  \omega_{22}\int_0^u e^{2\beta s}(\sigma_s^M)^2ds,
\eea
\bea \label{bracketEpsilon1}
\left[ \calu_1^n, G^n \right]_u \to^\proba  \omega_{11}^{-1/2} \omega_{12}\int_0^u e^{\beta s}\sigma_s^Mds,
\eea 
and 
\bea \label{bracketEpsilon2}
\left[ \calu_2^n,G^n \right]_u \to^\proba  \omega_{11}^{-1/2} \omega_{22} \int_0^u e^{\beta s }\sigma_s^Mds.
\eea 

We first prove (\ref{bracketXZ}). Note that 
 
$$ \left[ (\calu_1^n,\calu_2^n), (\calu_1^n,\calu_2^n) \right]_u = T_n^{-1} \int_0^{uT_n} C_{t/T_n,n}^{-1} (\sigma_{t-/T_n}^M)^2 \Sigma_t dt,$$
since $\sigma_{t-/T_n}^M = \sigma_{t/T_n}^M$ Lebesgue almost everywhere, and where $C_{.,n}$ is the \cadlag piecewise constant process on $[0,1]$ such that $C_{t_i/T_n,n} = C_{i,n}$. Now, we have for $u\in [0,1]$:
\beas 
&&\esp \left|[\calu_1^n,\calu_1^n]_u - \omega_{11}^{-1}T_n^{-1}\int_0^{uT_n} (\sigma_s^X)^2ds\right| \\&\leq& \omega_{11}^{-1} T_n^{-1}  \esp \sum_{i=1}^{[un]} \int_{t_i}^{t_{i+1}} \left|C_{i,n}^{-1}(\sigma_{s-/T_n}^M)^2\omega_{11} -1\right|(\sigma_s^X)^2ds \\
&\leq& c^{-1}\omega_{11}^{-1}T_n^{-1} \esp \sum_{i=1}^{[un]} \int_{t_i}^{t_{i+1}} \left| (\sigma_{s-/T_n}^M)^2\omega_{11} -C_{i,n}\right|\mathbb{1}_{\{C_{i,n} < +\infty\}}(\sigma_s^X)^2ds \to 0,
\eeas 
because on the one hand, as $\#\cala_n' = o(n)$, we have 
\beas 
\esp \sum_{i \in \cala_n'} \int_{t_i}^{t_{i+1}} \left| (\sigma_{s-/T_n}^M)^2\omega_{11} -C_{i,n}\right|\mathbb{1}_{\{C_{i,n} < +\infty\}}(\sigma_s^X)^2ds = o(T_n) 
\eeas 
by application of (\ref{estimateXc}), and on the other hand, using Cauchy-Schwarz inequality along with Lemma \ref{lemmaRVLocal}, we have
\beas 
\esp \sum_{i \notin \cala_n'} \int_{t_i}^{t_{i+1}} \left| (\sigma_{s-/T_n}^M)^2\omega_{11} - C_{i,n}\right|\mathbb{1}_{\{C_{i,n} < +\infty\}}(\sigma_s^X)^2ds = o(T_n). 
\eeas 
Finally, by \textbf{[B]}, $\omega_{11}^{-1}T_n^{-1}\int_0^{uT_n} (\sigma_s^X)^2ds \to^\proba u$, and this proves (\ref{bracketXZ}) for $\left[ \calu_1^n , \calu_1^n \right]$. The other components are proved similarly. Now we prove (\ref{bracketEpsilon}). Introducing $\tilde{\rho}_n =e^{-\beta/n}$, we have for some constant $L >0$ and for $n$ large enough, and for any $u \in [0,1]$, $|\rho_n^{2un} - \tilde{\rho}_n^{2un}| \leq L/n$. Moreover, recall that $G_{u}^n = \rho_n^{-un}\widetilde{\epsilon}_u^n = T_n^{-1/2}\l(Z_0 +  \sum_{j=1}^{n} \rho_n^{n \frac{t_j}{T_n}  }\Delta Z_{j,uT_n}\r)$, and thus

\beas 
&&\esp\l|\left[ G^n,G^n \right]_u- T_n^{-1}\sum_{j=1}^n \int_{t_{j-1} \wedge uT_n}^{t_{j} \wedge uT_n} e^{2\beta \frac{t_j}{T_n}}(\sigma_{t/T_n}^M)^2 (\sigma_t^Z)^2dt \r|\\&\leq& T_n^{-1} \sum_{j=1}^n\int_{t_{j-1} \wedge uT_n}^{t_{j} \wedge uT_n} |\rho_n^{2n\frac{t_j}{T_n}} - \tilde{\rho}_n^{2n\frac{t_j}{T_n}} | (\sigma_{t/T_n}^M)^2 \esp(\sigma_t^Z)^2dt\\
&\leq& Kn^{-1}.
\eeas 
This yields
\beas 
&&\esp\l|\left[ G^n,G^n \right]_u- T_n^{-1} \int_0^{uT_n} e^{2\beta t}(\sigma_{t/T_n}^M)^2 (\sigma_t^Z)^2 dt  \r|\\&\leq& T_n^{-1} \sum_{j=1}^n\int_{t_{j-1} \wedge uT_n}^{t_{j} \wedge uT_n} |\tilde{\rho}_n^{2n\frac{t_j}{T_n}} - e^{-\beta t} | (\sigma_{t/T_n}^M)^2 \esp(\sigma_t^Z)^2dt +Kn^{-1} \\
&\leq& Kn^{-1} 
\eeas
so that 
\beas 
\left[ G^n,G^n \right]_u &=& T_n^{-1} \int_0^{uT_n} e^{2\beta t}(\sigma_{t/T_n}^M)^2 (\sigma_t^Z)^2 dt + o_\proba(1)\\
&=& T_n^{-1} \sum_{i = 1}^{[\sqrt{uT_n}]} \int_{(i-1)\sqrt{uT_n}}^{i \sqrt{uT_n}} e^{2\beta \frac{t}{T_n}}(\sigma_{t/T_n}^M)^2 (\sigma_t^Z)^2 dt + o_\proba(1)\\
\eeas 
 
Therefore, letting 
$$B_u^n = \esp \left| [G^n,G^n]_u- T_n^{-1}  \sum_{i = 1}^{[\sqrt{uT_n}]} e^{2\beta (i-1)\sqrt{u/T_n}}(\sigma_{(i-1)\sqrt{u/T_n}}^M)^2 \int_{(i-1)\sqrt{uT_n}}^{i \sqrt{uT_n}}  (\sigma_t^Z)^2 dt  \right|,$$
we have
\beas 
 B_u^n &\leq& T_n^{-1} \sum_{i = 1}^{[\sqrt{uT_n}]}  \int_{(i-1)\sqrt{uT_n}}^{i \sqrt{uT_n}}  \left|  e^{2\beta \frac{t}{T_n}}(\sigma_{t/T_n}^M)^2-e^{2\beta (i-1)\sqrt{u/T_n}}(\sigma_{(i-1)\sqrt{u/T_n}}^M)^2 \right| \esp(\sigma_t^Z)^2 dt  \\
&\leq & K \sum_{i = 1}^{[\sqrt{uT_n}]} \int_{(i-1)\sqrt{u/T_n}}^{i \sqrt{u/T_n}}  \left|  e^{2\beta \frac{t}{T_n}}(\sigma_{t}^M)^2-e^{2\beta (i-1)\sqrt{u/T_n}}(\sigma_{(i-1)\sqrt{u/T_n}}^M)^2 \right|dt \to 0,
\eeas 
where we have used that the above integrals are all $o(T_n^{-1/2})$ except for a number of indices which is negligible with respect to $[\sqrt{uT_n}]$, by the same argument as for Lemma \ref{lemmaSigmaM} along with the fact that $e^{2\beta .}(\sigma^M)^2$ is \cadlag. Moreover, by a Riemann sum argument we also have 
$$ \int_0^u e^{2\beta t}(\sigma_t^M)^2dt =  \sqrt{\frac{u}{T_n}} \sum_{i=1}^{[\sqrt{uT_n}]}e^{2\beta (i-1)\sqrt{u/T_n}} (\sigma_{(i-1)\sqrt{u/T_n}}^M)^2 + o(1),$$ 
so that 
\beas 
&& \esp\l|[G^n,G^n]_u - \omega_{22} \int_0^u e^{2\beta t}(\sigma_t^M)^2dt\r|\\
&\leq&  T_n^{-1/2}\sum_{i = 1}^{[\sqrt{uT_n}]} e^{2\beta (i-1)\sqrt{u/T_n}}(\sigma_{(i-1)\sqrt{u/T_n}}^M)^2 \esp  \left|T_n^{-1/2} \int_{(i-1)\sqrt{uT_n}}^{i \sqrt{uT_n}} (\sigma_t^Z)^2 dt - \sqrt{u} \omega_{22}  \right| \\
&\leq& K \max_{1 \leq i\leq [\sqrt{uT_n}] } \esp  \left|T_n^{-1/2} \int_{(i-1)\sqrt{uT_n}}^{i \sqrt{uT_n}} (\sigma_t^Z)^2 dt - \sqrt{u} \omega_{22}  \right| \to 0 
\eeas 
by the triangle inequality and the ergodicity assumption \textbf{[B]}, since $T_n \to +\infty$, which concludes the proof of (\ref{bracketEpsilon}). (\ref{bracketEpsilon1}) and (\ref{bracketEpsilon2}) are proved similarly. 
This gives the functional convergence in distribution for $(\calu_1^n,\calu_2^n,G^n)$. Finally, since for any $u \in [0,1]$, $\calu_{u,3}^n = \rho_n^{un} G_u^n$ and $\rho_n^{un} \to^{u.c.p} e^{-\beta u}$, we get by the continuous mapping theorem the convergence in distribution of $\calu^n$ toward the desired limit. Finally, under $\calh_1$, $\calu_{1}^n$ and $\calu_2^n$ have the same distribution as under $\widetilde{\calh}_1^{n,\beta}$ so that the convergence of the subcomponent $(\calu_{1}^n,\calu_{2}^n)$ remains true. 
\end{proof}

Finally, we prove that several Riemann sums are uniformly convergent.  
\begin{lemma*} \label{lemmaRiemann}
We have 
\bea \label{Riemann1}
\l(n^{-1}\sum_{i=1}^n \widetilde{X}_{\frac{t_i}{T_n} \wedge u}^{c,def,n} - \int_0^u \widetilde{X}_v^{c,def,n}dv\r)_{u \in [0,1]} \to^{u.c.p} 0,
\eea 
and
\bea \label{Riemann2}
\l(n^{-1}\sum_{i=1}^n \widetilde{Z}_{\frac{t_i}{T_n} \wedge u}^{def,n} - \int_0^u \widetilde{Z}_v^{def,n}dv\r)_{u \in [0,1]} \to^{u.c.p} 0.
\eea 
Moreover, under $\widetilde{\calh}_1^{n,\beta}$  
\bea\label{Riemann3}
\l(n^{-1}\sum_{i=1}^n C_{i,n}^{-1/2}\widetilde{\epsilon}_{\frac{t_i}{T_n} \wedge u}^{n} - \omega_{11}^{-1/2}\int_0^u (\sigma_{v}^M)^{-1}\widetilde{\epsilon}_v^{n}dv\r)_{u \in [0,1]} \to^{u.c.p} 0
\eea
and
\bea \label{Riemann4}
\l(n^{-1}\sum_{i=1}^n \widetilde{\epsilon}_{\frac{t_i}{T_n} \wedge u}^{def,n} - \int_0^u \widetilde{\epsilon}_v^{def,n}dv\r)_{u \in [0,1]} \to^{u.c.p} 0.
\eea 
\end{lemma*}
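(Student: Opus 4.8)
The plan is to reduce all four convergences to a single general principle: if $(V^n)_n$ is a sequence of \cadlag processes on $[0,1]$ that converges in distribution (in the Skorokhod topology) to a process with almost surely continuous paths, with $V^n_0\to^\proba 0$, and which admits a decomposition $V^n=V^n_0+M^n+A^n$ where $M^n$ is a local martingale with $\sup_n\esp\langle M^n\rangle_1<\infty$ and $A^n$ has finite variation with total variation on $[0,1]$ that is $O_\proba(1)$, then $\sup_{u\in[0,1]}\big|n^{-1}\sum_{i=1}^n V^n_{(t_i/T_n)\wedge u}-\int_0^u V^n_vdv\big|\to^\proba 0$. I would prove this by a summation/integration-by-parts identity: on the grid points $u=t_m/T_n$ the Riemann error equals $\int_0^u\psi_n(v)\,dV^n_v$ up to a term of order $n^{-1}\sup_u|V^n_u|$, where $\psi_n$ is deterministic with $\|\psi_n\|_\infty\leq K/n$; the martingale part $\int_0^u\psi_n\,dM^n$ is controlled in $\mathbb L^2$ by Doob's inequality through $\|\psi_n\|_\infty^2\esp\langle M^n\rangle_1=O(n^{-2})$, and the finite-variation part is bounded by $\|\psi_n\|_\infty$ times the total variation of $A^n$, hence $O_\proba(n^{-1})$. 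The passage from grid points to arbitrary $u$ then uses that, for a tight sequence with continuous limiting law, the uniform modulus of continuity satisfies $w_{V^n}(1/n)\to^\proba 0$, together with $\sup_u|V^n_u|=O_\proba(1)$.

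For (\ref{Riemann1}) and (\ref{Riemann2}) I would apply this principle to $V^n=\widetilde X^{c,def,n}$ and $V^n=\widetilde Z^{def,n}$. On the localization event \textbf{[H]} the deflation weights $C_{i,n}^{-1/2}$ are bounded by $c^{-1/2}$ and are $\calf_{t_{i-1}}$-measurable, so these processes are continuous square-integrable martingales; their predictable brackets converge to $u\mapsto u$ and are $\mathbb L^1$-bounded on $[0,1]$ (both facts established inside the proof of Lemma \ref{lemmaLimitIntX} and from \textbf{[B]}), and by the subvector statement of Lemma \ref{lemmaLimitIntX}, valid under both $\calh_1$ and $\widetilde\calh_1^{n,\beta}$, they converge in distribution to the continuous limits $\omega_{11}^{-1/2}B^1$ and $\omega_{11}^{-1/2}B^2$. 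Since $V^n_0=T_n^{-1/2}X_0\to 0$ (resp.\ $0$), the general principle applies directly.

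For (\ref{Riemann3}) I would first replace the random weights $C_{i,n}^{-1/2}$ by $\omega_{11}^{-1/2}(\sigma^M_{(t_i/T_n)-})^{-1}$ in the discrete sum; the induced error is at most $\big(\sup_u|\widetilde\epsilon^n_u|\big)\,n^{-1}\sum_{i=1}^n|C_{i,n}^{-1/2}-\omega_{11}^{-1/2}(\sigma^M_{(t_i/T_n)-})^{-1}|$, which is $o_\proba(1)$: split the index set into $\cala_n'$ (cardinality $o(n)$, each summand $\leq 2c^{-1/2}$) and its complement, where Cauchy--Schwarz and the uniform $\mathbb L^2$ bound of Lemma \ref{lemmaRVLocal} apply, and use $\sup_u|\widetilde\epsilon^n_u|=O_\proba(1)$ from Lemma \ref{lemmaLimitIntX} and the continuous mapping theorem. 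What remains is the Riemann-sum statement for the process $v\mapsto(\sigma^M_{v-})^{-1}\widetilde\epsilon^n_v$. Here $\widetilde\epsilon^n$ is a semimartingale whose drift coefficient equals $(n\log\rho_n)\widetilde\epsilon^n_u$ with $n\log\rho_n\to-\beta$ bounded, and whose martingale part has $\mathbb L^1$-bounded bracket, while $\widetilde\epsilon^n\to^d I(\beta)$, a continuous process. The only extra care is that the deterministic \caglad factor $(\sigma^M)^{-1}$ may be discontinuous; but its jumps are at most countably many fixed times, hence Lebesgue-null and met by only $o(n)$ of the grid intervals, so, exactly as in Lemmas \ref{lemmaSigmaM}--\ref{lemmaRVLocal}, one may excise a negligible set of indices and reduce to a uniformly continuous factor, after which the general principle applies.

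Finally, for (\ref{Riemann4}), I would use $\Delta\widetilde\epsilon^{def,n}_{t_i/T_n}=-\tfrac{\beta}{n}C_{i,n}^{-1/2}\widetilde\epsilon^n_{(t_{i-1}/T_n)}+\Delta\widetilde Z^{def,n}_{t_i/T_n}$ (a direct consequence of $\Delta\epsilon_i=(\rho_n-1)\epsilon_{i-1}+\Delta Z_i$), which exhibits $\widetilde\epsilon^{def,n}$ as a semimartingale with martingale part $\widetilde Z^{def,n}$ and finite-variation part of total variation bounded by $\beta c^{-1/2}\sup_u|\widetilde\epsilon^n_u|+o_\proba(1)=O_\proba(1)$; moreover, combined with (\ref{Riemann3}) and the convergence $\int_0^u(\sigma^M_v)^{-1}\widetilde\epsilon^n_vdv\to^d\int_0^u(\sigma^M_v)^{-1}I(\beta)_vdv$ (continuous mapping applied to Lemma \ref{lemmaLimitIntX}), this same identity shows $\widetilde\epsilon^{def,n}$ converges in distribution to the continuous process $\omega_{11}^{-1/2}B^2-\beta\omega_{11}^{-1/2}\int_0^{\cdot}(\sigma^M_v)^{-1}I(\beta)_vdv$. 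The general principle then yields (\ref{Riemann4}). The main obstacle throughout is upgrading the routine pointwise-in-$u$ Riemann convergence to a \emph{uniform} one while the processes converge only in distribution and have merely \cadlag paths; this is precisely what the summation-by-parts identity, Doob's inequality, and the estimate $w_{V^n}(1/n)\to^\proba 0$ are designed to handle. A secondary, purely bookkeeping difficulty is the presence of the fixed discontinuities of $\sigma^M$, dealt with as in the earlier lemmas.
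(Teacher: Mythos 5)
Your proposal is sound in substance, but it takes a genuinely different and heavier route than the paper for the core approximation step. The paper disposes of (\ref{Riemann1})--(\ref{Riemann2}) with a single $\mathbb{L}^1$ estimate: on the localization event \textbf{[H]} each deflated increment obeys $\esp \sup_{u,v \in [t_{i-1}\wedge u, t_i \wedge u]} |\widetilde{X}^{c,def,n}_{\frac{t_i}{T_n}\wedge u} - \widetilde{X}^{c,def,n}_v| \leq K T_n^{-1/2}\Delta_n^{1/2} = K n^{-1/2}$, so summing the $n$ per-interval discrepancies, each weighted by $n^{-1}$, bounds the supremum over $u$ of the Riemann error by $O(n^{-1/2})$ in $\mathbb{L}^1$; no tightness, no limiting process, no Doob inequality enters. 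Your scheme---summation by parts to write the grid-point error as $\int_0^u \psi_n \, dV^n$ with $\|\psi_n\|_\infty \leq K/n$, Doob plus a bracket bound for the martingale part, a total-variation bound for the drift, and $w_{V^n}(1/n) \to^\proba 0$ to interpolate between grid points---is valid for (\ref{Riemann1}), (\ref{Riemann2}) and (\ref{Riemann4}) and is more general (it would survive situations where single-interval oscillations are not individually small), but here it buys nothing and drags in the distributional convergence of Lemma \ref{lemmaLimitIntX} where a crude pathwise increment bound suffices. For (\ref{Riemann3}) your first step (replacing $C_{i,n}^{-1/2}$ by $\omega_{11}^{-1/2}(\sigma^M_{(t_i/T_n)-})^{-1}$ via Cauchy--Schwarz, Lemma \ref{lemmaRVLocal}, the splitting over $\cala_n'$ and $C_{i,n}\geq c$) is exactly the paper's; the paper then concludes with Lemma \ref{lemmaSigmaM} applied with $u_n = \Delta_n$ together with the same per-interval estimate $\esp\sup_{u,v}|\widetilde{\epsilon}^n_{\frac{t_i}{T_n}\wedge u} - \widetilde{\epsilon}^n_v| \leq K T_n^{-1/2}\Delta_n^{1/2}$, rather than invoking a general principle.

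Two points in your write-up need tightening, though neither is fatal. First, the boundary bookkeeping: with the sum read literally as $n^{-1}\sum_{i=1}^n V^n_{(t_i/T_n)\wedge u}$, the roughly $n(1-u)$ indices with $t_i > uT_n$ contribute approximately $(1-u)V^n_u$, which is $O_\proba(1)$ and not $O(n^{-1}\sup_u |V^n_u|)$ as you claim; your identity is correct only for the partial-sum object $n^{-1}\sum_{i \leq un} V^n_{t_i/T_n}$ (equivalently, weights equal to the truncated interval lengths), which is in fact the version the paper's own proof controls and the version used downstream, so you should state explicitly which object you are bounding. Second, after excising the $o(n)$ intervals meeting sizeable jumps of $\sigma^M$, the product $(\sigma^M)^{-1}\widetilde{\epsilon}^n$ is still not a semimartingale ($\sigma^M$ is only \caglad and may have unbounded variation), so ``the general principle applies'' is not literally available; what works, and what the paper does, is to use the boundedness of $(\sigma^M)^{-1}$ and its near-constancy on the retained intervals to reduce the error to per-interval oscillations of $\widetilde{\epsilon}^n$, which your own martingale-plus-drift estimates (or a direct Burkholder--Davis--Gundy bound) then control.
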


\begin{proof}
(\ref{Riemann1}) and (\ref{Riemann2}) are direct consequences of the fact that uniformly in $i \in \{1,\ldots,n\}$, we have $\esp \sup_{u \in [0,1], v \in [t_{i-1} \wedge u, t_i \wedge u]}|\widetilde{X}_{\frac{t_i}{T_n} \wedge u}^{c,def,n} -  \widetilde{X}_{v}^{c,def,n}| \leq K T_n^{-1/2}\Delta_n^{1/2} \to 0$ and a similar estimate for $\widetilde{Z}^{def,n}$. For (\ref{Riemann3}), note that 
\beas 
&&\esp|n^{-1}\sum_{i=1}^n C_{i,n}^{-1/2}\widetilde{\epsilon}_{\frac{t_i}{T_n} \wedge u}^{n} -\omega_{11}^{-1/2} n^{-1}\sum_{i=1}^n (\sigma_{(t_i/T_n)-}^M)^{-1}\widetilde{\epsilon}_{\frac{t_i}{T_n} \wedge u}^{n}| \\&\leq& n^{-1}\sum_{i=1}^n \sqrt{\esp|C_{i,n}^{-1/2} - \omega_{11}^{-1/2}(\sigma_{(t_i/T_n)-}^M)^{-1}|^2 \underbrace{\esp|\widetilde{\epsilon}_{\frac{t_i}{T_n} \wedge u}^{n}|^2}_{\leq K}} \to 0  
\eeas 
by Lemma \ref{lemmaRVLocal} (separating the cases where $i \in \cala_n'$ and $i \notin \cala_n'$), since outside $\cala_n'$
\beas 
 |C_{i,n}^{-1/2} - \omega_{11}^{-1/2}(\sigma_{(t_i/T_n)-}^M)^{-1}|^2 &\leq&  \frac{|C_{i,n} -\omega_{11} (\sigma_{(t_i/T_n)-}^M)^{2}|^2}{\omega_{11} (\sigma_{(t_i/T_n)-}^M)^{2}C_{i,n}|C_{i,n}^{-1/2} + \omega_{11}^{-1/2}(\sigma_{(t_i/T_n)-}^M)^{-1}|^{2}} \\
 &\leq& K|C_{i,n} -\omega_{11} (\sigma_{(t_i/T_n)-}^M)^{2}|^2
\eeas 
where we have used that $C_{i,n} \geq c >0$. Moreover, we also have 
\beas 
\esp \sup_{u \in [0,1]} |\omega_{11}^{-1/2} n^{-1}\sum_{i=1}^n (\sigma_{(t_i/T_n)-}^M)^{-1}\widetilde{\epsilon}_{\frac{t_i}{T_n} \wedge u}^{n} - \omega_{11}^{-1/2}\int_0^u (\sigma_{v}^M)^{-1}\widetilde{\epsilon}_v^{n}dv| \to 0 
\eeas 
separating again the cases $i \in \cala_n'$ and $i \notin \cala_n'$ and applying Lemma \ref{lemmaSigmaM} with $u_n = \Delta_n$ along with the fact that under $\widetilde{\calh}_1^{n,\beta}$ uniformly in $i \in \{1,\ldots,n\}$, we have $\esp \sup_{u \in [0,1], v \in [t_{i-1} \wedge u, t_i \wedge u]}|\widetilde{\epsilon}_{\frac{t_i}{T_n} \wedge u}^{ n} -  \widetilde{\epsilon}_{v}^{n}| \leq K T_n^{-1/2}\Delta_n^{1/2} \to 0$, and we are done. Finally (\ref{Riemann4}) is proved similarly.
\end{proof}

\subsection{Proofs of Proposition \ref{propOLSH0}, Theorem \ref{thmH0DF} and Theorem \ref{thmRobustDrift}}
We first derive the limit distribution of $\widetilde{\epsilon}^{def,n}$ under any local alternative $\widetilde{\calh}_1^{n,\beta}$.
\begin{lemma*}\label{lemmaEpsilonDef}
Under $\widetilde{\calh}_{1}^{n,\beta}$, Jointly with $\calu^n$, we have the convergence in distribution
\beas 
(\widetilde{\epsilon}_{u}^{def,n})_{u \in [0,1]} \to^d \omega_{11}^{-1/2}(\zeta(\beta)_u)_{u \in [0,1]}:= \omega_{11}^{-1/2}\left(B_u^2 - \beta\int_0^u (\sigma_v^M)^{-1} I(\beta)_vdv\right)_{u \in [0,1]}.
\eeas 

\end{lemma*}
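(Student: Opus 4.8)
The plan is to express $\widetilde{\epsilon}^{def,n}$ in terms of quantities whose joint limit is already known, namely $\calu^n = (\widetilde{X}^{c,def,n},\widetilde{Z}^{def,n},\widetilde{\epsilon}^n)$ from Lemma \ref{lemmaLimitIntX}, and then identify the limit by a continuous-mapping argument. First I would use the summation-by-parts (Abel) identity. Recall $\epsilon_i = \rho_n \epsilon_{i-1} + \Delta Z_i$, so $\Delta \epsilon_i = \Delta Z_i - (1-\rho_n)\epsilon_{i-1} = \Delta Z_i - \frac{\beta}{n}\epsilon_{i-1}$ under $\widetilde{\calh}_1^{n,\beta}$. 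Dividing by $\sqrt{C_{i,n}}$ and scaling by $T_n^{-1/2}$, for $u\in[0,1]$ we get the exact decomposition
\begin{eqnarray*}
\widetilde{\epsilon}_u^{def,n} &=& \sum_{i=1}^n C_{i,n}^{-1/2}\Delta \widetilde{Z}^{n}_{\frac{t_i}{T_n},u} \cdot \frac{\widetilde{Z}^{def,n}\text{-type term}}{\phantom{x}} \;-\; \frac{\beta}{n}\sum_{i=1}^n C_{i,n}^{-1/2} \widetilde{\epsilon}_{\frac{t_{i-1}}{T_n}\wedge u}^{n},
\end{eqnarray*}
where the first sum is precisely $\widetilde{Z}^{def,n}_u$ (with $\Delta Z$-increments weighted by $C_{i,n}^{-1/2}$), and the second is a Riemann-type sum in the lagged embedded residual $\widetilde{\epsilon}^n$. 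So $\widetilde{\epsilon}^{def,n}_u = \widetilde{Z}^{def,n}_u - \beta\, n^{-1}\sum_{i=1}^n C_{i,n}^{-1/2}\widetilde{\epsilon}_{\frac{t_{i-1}}{T_n}\wedge u}^n$, and the task reduces to controlling that Riemann sum.

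Next I would invoke Lemma \ref{lemmaRiemann}, specifically the convergence (\ref{Riemann3}), which gives $n^{-1}\sum_{i=1}^n C_{i,n}^{-1/2}\widetilde{\epsilon}_{\frac{t_i}{T_n}\wedge u}^n - \omega_{11}^{-1/2}\int_0^u(\sigma_v^M)^{-1}\widetilde{\epsilon}_v^n dv \to^{u.c.p} 0$; the shift from the index $t_i$ to the lagged index $t_{i-1}$ costs only an extra $O_\proba(n^{-1})$ term uniformly in $u$ (using $\esp\sup|\widetilde{\epsilon}^n_{t_i\wedge u} - \widetilde{\epsilon}^n_{t_{i-1}\wedge u}| \le KT_n^{-1/2}\Delta_n^{1/2}$ as in the proof of that lemma), so it is asymptotically negligible. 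Combining, we have the $u.c.p.$-equivalence
\begin{eqnarray*}
\widetilde{\epsilon}_u^{def,n} = \widetilde{Z}^{def,n}_u - \beta\,\omega_{11}^{-1/2}\int_0^u (\sigma_v^M)^{-1}\widetilde{\epsilon}_v^n\,dv + o_\proba(1),
\end{eqnarray*}
uniformly in $u$. Now by Lemma \ref{lemmaLimitIntX}, jointly $(\widetilde{Z}^{def,n},\widetilde{\epsilon}^n) \to^d (\omega_{11}^{-1/2}B^2, I(\beta))$ in $D_{\reels^2}[0,1]$ with $I(\beta)_u = \int_0^u e^{-\beta(u-s)}\sigma_s^M dB_s^2$, and the limit is continuous. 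The map $(f,g)\mapsto f - \beta\,\omega_{11}^{-1/2}\int_0^\cdot (\sigma_v^M)^{-1}g_v\,dv$ is continuous for the uniform topology (the integral functional is continuous since $(\sigma^M)^{-1}$ is bounded, $\sigma^M$ being $\caglad$ with values in $\reels_+-\{0\}$ on the compact $[0,1]$), so by the continuous mapping theorem as discussed in the notation section,
\begin{eqnarray*}
\widetilde{\epsilon}^{def,n} \to^d \omega_{11}^{-1/2}\Big(B^2 - \beta\int_0^\cdot (\sigma_v^M)^{-1} I(\beta)_v\,dv\Big) = \omega_{11}^{-1/2}\zeta(\beta),
\end{eqnarray*}
jointly with $\calu^n$, which is the claim.

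The main obstacle is the uniform handling of the weights $C_{i,n}^{-1/2}$ inside the Riemann sum: one cannot simply replace $C_{i,n}$ by its limit $(\sigma_{(t_i/T_n)-}^M)^2\omega_{11}$ pointwise because the $\mathbb{L}^2$-convergence of Lemma \ref{lemmaRVLocal} fails on the exceptional index set $\cala_n'$. The resolution, already carried out in the proof of (\ref{Riemann3}), is to split the sum into $i\in\cala_n'$ and $i\notin\cala_n'$: on $\cala_n'$ one uses $\#\cala_n' = o(n)$ together with the uniform bound $C_{i,n}^{-1/2}\le c^{-1/2}$ from \textbf{[H]} and $\esp|\widetilde{\epsilon}^n_{t_i\wedge u}|^2\le K$; off $\cala_n'$ one applies the $\mathbb{L}^2$-rate from Lemma \ref{lemmaRVLocal} after linearizing $x\mapsto x^{-1/2}$ around $(\sigma_{(t_i/T_n)-}^M)^2\omega_{11}$, which is Lipschitz on $[c,\infty)$. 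A secondary point to be careful about is that the joint convergence with $\calu^n$ must be preserved throughout; this is automatic because all the approximation errors above are $o_\proba(1)$ in the $u.c.p.$ sense and the final functional is a continuous image of the already-jointly-convergent vector $(\calu^n, \widetilde{\epsilon}^{def,n}\text{-building blocks})$.
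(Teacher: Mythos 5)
Your proposal is correct and follows essentially the same route as the paper: the same exact recursion $\widetilde{\epsilon}^{def,n}_u = \widetilde{Z}^{def,n}_u - \beta n^{-1}\sum_i C_{i,n}^{-1/2}\widetilde{\epsilon}^n_{\frac{t_{i-1}}{T_n}\wedge u}$, then (\ref{Riemann3}) from Lemma \ref{lemmaRiemann} to replace the weighted sum by $\omega_{11}^{-1/2}\int_0^u(\sigma_v^M)^{-1}\widetilde{\epsilon}_v^n dv$ up to a $u.c.p.$ error, and finally the continuous mapping theorem applied jointly with $\calu^n$ via Lemma \ref{lemmaLimitIntX}. Your extra remarks (the lag shift $t_i\to t_{i-1}$ and the treatment of the $C_{i,n}^{-1/2}$ weights via the $\cala_n'$ split) are points the paper delegates to Lemma \ref{lemmaRiemann} rather than new ingredients.
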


\begin{proof}
Simple algebraic manipulations give 
\beas 
\Delta \widetilde{\epsilon}_{\frac{t_i}{T_n},u}^{def,n}  = \Delta \widetilde{Z}_{\frac{t_i}{T_n},u}^{def,n} - \frac{\beta}{n\sqrt{C_{i,n}}} \widetilde{\epsilon}_{\frac{t_{i-1}}{T_n} \wedge u}^{n},
\eeas 
so that for $i\in\{1,\ldots,n\}$
\beas 
\widetilde{\epsilon}_{\frac{t_i}{T_n} \wedge u}^{def,n}  =   \widetilde{Z}_{\frac{t_i}{T_n} \wedge u}^{def,n} - \frac{\beta}{n} \sum_{j=0}^{i-1} C_{j,n}^{-1/2} \widetilde{\epsilon}_{\frac{t_{j} }{T_n} \wedge u}^{n} 
\eeas 
so that by (\ref{Riemann3}) from Lemma \ref{lemmaRiemann}, 
\bea \label{convEpsilonDefUCP} 
\widetilde{\epsilon}^{def,n}  -   \widetilde{Z}^{def,n} + \beta\omega_{11}^{-1/2} \int_0^{.} (\sigma_{v}^M)^{-1} \widetilde{\epsilon}_{v}^{n} dv \to^{u.c.p} 0.
\eea 
Finally, by the continuous mapping theorem we also have jointly with $\calu^n$
\beas 
\widetilde{Z}^{def,n} - \beta\omega_{11}^{-1/2} \int_0^{.} (\sigma_{v}^M)^{-1} \widetilde{\epsilon}_{v}^{n} dv \to^d \omega_{11}^{-1/2}\zeta(\beta)
\eeas 
which, combined with (\ref{convEpsilonDefUCP}) yields the claimed result. 
\end{proof}

We are now ready to prove Proposition \ref{propOLSH0}.
\begin{proof}[Proof of Proposition \ref{propOLSH0}.]
We first deal with $\widehat{\alpha}$. For two processes $(U_u)_{u \in [0,1]}$, $(V_u)_{u \in [0,1]}$, let us define 
$$C[U,V] = \frac{1}{n} \sum_{i=1}^n ( U_{t_i/T_n} - \mu(U))  ( V_{t_i/T_n} - \mu(V)),$$
where $\mu(U) = n^{-1} \sum_{i=1}^n U_{t_i/T_n}$ and $\mu(V) = n^{-1} \sum_{i=1}^n V_{t_i/T_n}$. Note that by (\ref{alphahatexplicitdef}), (\ref{defTX}) and (\ref{defTY}), we have the representation
\bea \label{representationAlphaTauXTauY}
\widehat{\alpha} = \frac{C[\widetilde{\calt}(X)^{def,n},\widetilde{\calt}(Y)^{def,n}]}{C[\widetilde{\calt}(X)^{def,n},\widetilde{\calt}(X)^{def,n}]}.
\eea 
By the uniform convergences of Lemma \ref{lemmaTruncationUCP} we easily obtain  
\bea \label{Calpha11}
C[\widetilde{\calt}(X)^{def,n},\widetilde{\calt}(Y)^{def,n}] - C[\widetilde{X}^{c,def,n}, \widetilde{Y}^{c,def,n}] \to^\proba 0
\eea 
and
\bea \label{Calpha12}
C[\widetilde{\calt}(X)^{def,n},\widetilde{\calt}(X)^{def,n}] - C[\widetilde{X}^{c,def,n}, \widetilde{X}^{c,def,n}] \to^\proba 0.
\eea
Using now the Riemann sum approximations (\ref{Riemann1}) and (\ref{Riemann4}) from Lemma \ref{lemmaRiemann} yields 
\bea \label{Calpha21}
C[\widetilde{X}^{c,def,n}, \widetilde{Y}^{c,def,n}] - \int_0^1 \left(\widetilde{X}_u^{c,def,n} - \overline{\widetilde{X}^{c,def,n}} \right)  \left(\widetilde{Y}_u^{c,def,n} - \overline{\widetilde{Y}^{c,def,n}} \right)du \to^\proba 0
\eea 
and
\bea \label{Calpha22}
C[\widetilde{X}^{c,def,n}, \widetilde{X}^{c,def,n}] - \int_0^1 \left(\widetilde{X}_u^{c,def,n} - \overline{\widetilde{X}^{c,def,n}} \right)^2du \to^\proba 0
\eea
where we recall that for a process $(V_u)_{u \in [0,1]}$, $\overline{V} = \int_0^1 V_udu$. Next, by lemmas \ref{lemmaLimitIntX} and \ref{lemmaEpsilonDef} we have the convergence in distribution 
\bea \label{convtildeXtildeY} 
\l(\widetilde{X}^{c,def,n},\widetilde{Y}^{c,def,n}\r) \to^d \omega_{11}^{-1/2}\l( B^1, \alpha_0 B^1 + \zeta(\beta)\r),
\eea
so that by application of the continuous mapping theorem along with the convergences (\ref{Calpha11})-(\ref{Calpha22}) we get 
\bea 
\left( \begin{matrix} C[\widetilde{\calt}(X)^{def,n},\widetilde{\calt}(Y)^{def,n}]\\ C[\widetilde{\calt}(X)^{def,n},\widetilde{\calt}(X)^{def,n}] \end{matrix}\right) \to^d \l( \begin{matrix} \int_0^1 F(\beta)_u^1 (\alpha_0F(\beta)_u^1 + F(\beta)_u^2)du \\ \int_0^1(F(\beta)_u^1)^2du \end{matrix}\r)
\eea 
where 
\bea 
F(\beta) = (B^1-\overline{B}^1, \zeta(\beta)-\overline{\zeta(\beta)}).
\eea
By (\ref{representationAlphaTauXTauY}) and another application of the continuous mapping theorem, we obtain
\bea \label{convAlpha}
\widehat{\alpha} \to^d \alpha_0 + \frac{\int_0^1 F(\beta)_u^1 F(\beta)_u^2du}{\int_0^1(F(\beta)_u^1)^2du}.
\eea 
To get the claimed form for the right hand side we now recall the following definitions,
\bea 
J(\beta) = \int_0^.e^{-\beta(u-s)}\sigma_s^M dW_s, 
\eea
\bea 
\xi(\beta) = W^2 -\beta\int_0^. (\sigma_s^M)^{-1} \lambda^TJ(\beta)_sds 
\eea
where $\lambda = (r_\infty/\sqrt{1-r_\infty^2},1)^T$ and $r_\infty = \omega_{12}/\sqrt{\omega_{11}\omega_{22}}$, and
\bea 
H(\beta) = (W^1 - \overline{W}^1, \xi(\beta) - \overline{\xi(\beta)}). 
\eea 
Then, direct calculation gives the linear relation $F(\beta) = LH(\beta)$, so that $\frac{\int_0^1 F(\beta)_u^1 F(\beta)_u^2du}{\int_0^1(F(\beta)_u^1)^2du} = \frac{\omega_{12}}{\omega_{11}} + \sqrt{\frac{\omega_{22}(1-r_{\infty}^2)}{\omega_{11}}} \frac{\int_0^1 H(\beta)_u^1 H(\beta)_u^2du}{\int_0^1 (H(\beta)_u^1)^2du} $. This yields, letting $$\kappa(\beta) = \l(\frac{\int_0^1 H(\beta)_u^1 H(\beta)_u^2du}{\int_0^1 (H(\beta)_u^1)^2du},-1\r)^T,$$
the convergence 
\bea
\widehat{\alpha} - \alpha_0 \to^d  L_{\alpha_0}^T \kappa(\beta)
\eea
with $L_{\alpha_0} = \sqrt{\frac{ \omega_{22}}{\omega_{11}}}\l( \sqrt{1-r_\infty^2}, r_\infty\r)^T$. We now turn our attention to $\widehat{c}$. Recall that we have 
\bea 
T_n^{-1/2}\widehat{c} = \mu\l(\widetilde{\calt}(Y)^{def,n}\r) -\widehat{\alpha}\mu\l(\widetilde{\calt}(X)^{def,n}\r)
\eea 
so that similar arguments as above show that, jointly with $\widehat{\alpha}$, we have the convergence 
\bea \label{convC}
T_n^{-1/2}(\widehat{c}-c_0) \to^d \omega_{11}^{-1/2}\overline{\zeta(\beta)} - \omega_{11}^{-1/2}\frac{\int_0^1 F(\beta)_u^1 F(\beta)_u^2du}{\int_0^1(F(\beta)_u^1)^2du} \overline{B}^1.
\eea 
Replacing once again $F(\beta)^1$ and $F(\beta)^2$ by their respective expressions in terms of $H(\beta)$ in the above expression gives
\bea 
 T^{-1/2}(\widehat{c} - c_0) \to^d L_{c_0}^T \kappa(\beta)
\eea
where $L_{c_0} = -\sqrt{\frac{\omega_{22}}{\omega_{11}}}\l(\overline{W}^1, \overline{\xi(\beta)}\r)^T$, and we are done.
\end{proof}

We can now prove Theorem \ref{thmH0DF}.
\begin{proof}[Proof of Theorem \ref{thmH0DF}.]
By lemmas \ref{lemmaTruncationUCP}, \ref{lemmaLimitIntX} and \ref{lemmaEpsilonDef} along with the continuous mapping theorem, recall that we have
\bea \label{convTauXTauY} 
\l(\widetilde{\calt}(X)^{def,n},\widetilde{\calt}(Y)^{def,n}\r) \to^d \omega_{11}^{-1/2}\l( B^1, \alpha_0 B^1 + \zeta(\beta)\r).
\eea 
Let us now define the scaled estimated residual process $r^n$ as the \cadlag process  such that for $u\in[0,1]$, 
\bea \label{estimRes}
r_{u}^n =  \widetilde{\calt} (Y)_{u}^{def,n} -  T_n^{-1/2}\widehat{c} - \widehat{\alpha}\widetilde{\calt} (X)_{u}^{def,n}.   
\eea 
Note that for $i \in \{1,\ldots,n\}$,  $r_{\frac{t_i}{T_n}}^n$ coincides with $T_n^{-1/2} \widehat{\epsilon}_{t_i}$. Then, by the continuous mapping theorem, (\ref{convTauXTauY}), (\ref{convAlpha}) and (\ref{convC}) we get that with respect to the Skorohod topology of $D_\reels[0,1]$ 
\bea \label{estimRes2}
r^n \to^d -\omega_{11}^{-1/2}\eta(\beta)^T F(\beta)
\eea 
where $\eta(\beta) = ((\int_0^1 (F(\beta)_u^1)^2du)^{-1}\int_0^1 F(\beta)_u^1 F(\beta)_u^2du,-1)^T$. To reformulate (\ref{estimRes2}) in terms of $H(\beta)$, recall that $F(\beta) = LH(\beta)$, so that  
\beas 
L^T\eta(\beta) = \l(\begin{matrix} l_{11}\frac{\int_0^1 F(\beta)_u^1 F(\beta)_u^2du}{\int_0^1(F(\beta)_u^1)^2du} - l_{21} \\ -l_{22}  \end{matrix}\r) = \l(\begin{matrix} l_{22}\frac{\int_0^1 H(\beta)_u^1 H(\beta)_u^2du}{\int_0^1(H(\beta)_u^1)^2du}   \\ -l_{22}  \end{matrix}\r) = l_{22}\kappa(\beta),
\eeas 
where for $i,j \in \{1,2\}$, $l_{ij}$ is the coefficient in position $(i,j)$ of $L$. We thus get
$$ r_n \to^d -\omega_{11}^{-1/2}\eta(\beta)^T F(\beta) = -\omega_{11}^{-1/2}\eta(\beta)^T L H(\beta) = -\omega_{11}^{-1/2}l_{22} \kappa(\beta)^T  H(\beta).$$
Therefore, letting $Q(\beta) = \kappa(\beta)^T H(\beta)$, we deduce from the above results, and the continuous mapping theorem that
$$n \widehat{\phi} = n \frac{\sum_{i=1}^n\widehat{r}_{t_{i-1}/T_n}^n \Delta \widehat{r}_{t_{i}/T_n}^n}{\sum_{i=1}^n(\widehat{r}_{t_{i}/T_n}^n)^2} \to^d \frac{\int_0^1 Q(\beta)_s dQ(\beta)_s}{\int_0^1 Q(\beta)_s^2ds}.$$
Now we derive the limit of $s_{\widehat{\phi}}$, where
\beas 
s_{\widehat{\phi}} = \sqrt{\frac{n^{-1} \sum_{i=1}^n (\Delta \widehat{\epsilon}_i - \widehat{\phi} \widehat{\epsilon}_{i-1})^2}{\sum_{i=1}^n \widehat{\epsilon}_{i-1}^2}}.
\eeas 
We have that 
\begin{eqnarray*}
n^{-1}\sum_{i=1}^n (\Delta \widehat{\epsilon}_i - \widehat{\phi} \widehat{\epsilon}_{i-1})^2 &=& n^{-1}\sum_{i=1}^n \Delta \widehat{\epsilon}_i^2 - 2 n^{-1}\widehat{\phi}\sum_{i=1}^n \Delta \widehat{\epsilon}_i \widehat{\epsilon}_{i-1} + n^{-1}\widehat{\phi}^2 \sum_{i=1}^n \widehat{\epsilon}_{i-1}^2 \\
&=& I+II+III.
\end{eqnarray*}
Moreover
\beas 
\Delta_n^{-1}I &=& T_n^{-1}\sum_{i=1}^n  (\Delta \calt(Y)_i^{def})^2 - \widehat{\alpha}T_n^{-1}\sum_{i=1}^n  \Delta \calt(Y)_i^{def}\Delta \calt(X)_i^{def} + \widehat{\alpha}^2 T_n^{-1}\sum_{i=1}^n (\Delta \calt(X)_i^{def})^2.
\eeas 
Next, since $C_{i,n}^{-1} < c^{-1} < +\infty$ by \textbf{[H]}, and using (\ref{eqJump3}) with $p=2$ and $q=1$, we have
\beas 
T_n^{-1} \esp \sum_{i=1}^n |(\Delta \calt(X)_i^{def})^2 - C_{i,n}^{-1}(\Delta X_i^{c})^2| &\leq& K (\Delta_n^{\overline{\omega}(2-r)} + \Delta_n^{p_0(1/2-\overline{\omega}) - 1/2}) \to 0
\eeas
by Assumption \textbf{[C]}. Similarly we have 
\beas 
T_n^{-1}\sum_{i=1}^n  (\Delta \calt(Y)_i^{def})^2 &=& T_n^{-1}\sum_{i=1}^n C_{i,n}^{-1} (\Delta Y_i^c)^2 + o_\proba(1)\\
\eeas 
and
\beas 
T_n^{-1}\sum_{i=1}^n  \Delta \calt(Y)_i^{def}\Delta \calt(X)_i^{def} = T_n^{-1}\sum_{i=1}^n  C_{i,n}^{-1}\Delta Y_i^c\Delta X_i^c +o_\proba(1). 
\eeas 
Combined with (\ref{bracketXZ}), the fact that $T_n^{-1}\sum_{i=1}^n C_{i,n}^{-1} \Delta \epsilon_i^2 \to^\proba \omega_{11}^{-1}\omega_{22}$, (\ref{convAlpha}), (\ref{convC}) and Slutsky's lemma, this yields the convergence $$ \Delta_n^{-1} I \to^d \omega_{11}^{-1}l_{22}^2\kappa(\beta)^T \kappa(\beta). $$
Moreover straightforward calculation shows that $ II = O_\proba(\Delta_n n^{-1}),$ and $III = O_\proba(\Delta_n n^{-1}),$ using $\widehat{\phi} = O_\proba(n^{-1})$,
so that jointly with $n\widehat{\phi}$ we have
$$ n s_{\widehat{\phi}} \to^d \sqrt{\frac{\kappa(\beta)^T\kappa(\beta)}{\int_0^1 Q(\beta)_s^2ds}},$$
and thus
$$ \Psi \to^d \frac{\int_0^1 Q(\beta)_s dQ(\beta)_s}{\sqrt{\kappa(\beta)^T \kappa(\beta)\int_0^1 Q(\beta)_s^2ds} }.$$
\end{proof}

Finally we prove Theorem \ref{thmRobustDrift}.

\begin{proof}[Proof of Theorem \ref{thmRobustDrift}]
Since for $\Delta_n \to 0$ the increments of the drift term are negligible with respect to the increments of the Brownian integral, we immediately deduce that the lemmas \ref{lemmaJumps}, \ref{lemmaRVLocal}, and \ref{lemmaLocalization} remain true. Next, we replace $C_{i,n}^{-1/2}\Delta X_{i,uT_n}\mathbb{1}_{A_{i,n,u}}$ and $C_{i,n}^{-1/2}\Delta Y_{i,uT_n}\mathbb{1}_{A_{i,n,u}}$ in definitions (\ref{defTX})-(\ref{defTY}) by $C_{i,n}^{-1/2}(\Delta X_{i,uT_n}\mathbb{1}_{A_{i,n,u}} - \Delta_n^{-1}\Delta_{i,n,u}\cald(X))$ and $C_{i,n}^{-1/2}(\Delta Y_{i,uT_n}\mathbb{1}_{A_{i,n,u}} -\Delta_n^{-1}\Delta_{i,n,u}\cald(Y))$ respectively, where $\Delta_{n,i,u} = (t_i \wedge uT_n - t_{i-1} \wedge uT_n)$. Similarly, defining for discrete observations $V_i$, $i\in \{1,\ldots,n\}$
$$ \cald(V) = n^{-1}(V_n -V_0),$$
we replace $\Delta X_{i,uT_n}^c$ and $\Delta Z_{i,uT_n}$ by $\Delta X_{i,uT_n}^c - b^X \Delta_{n,i,u} - \Delta_n^{-1}\Delta_{i,n,u}\cald(X^c - b^X \cdot)$ and  $ \Delta Z_{i,uT_n}  - b^Z \Delta_{n,i,u} - \Delta_n^{-1}\Delta_{i,n,u}\cald(Z - b^Z \cdot)$ in definitions (\ref{defXtilde}) and (\ref{defepsilontilde}). A straightforward application of Lemma \ref{lemmaJumps} as in the previous proofs shows that with these new definitions, Lemma \ref{lemmaTruncationUCP} and Lemma \ref{lemmaRiemann} remain also true. Moreover, by the continuous mapping theorem and Lemma \ref{lemmaLimitIntX} in the case without drift, we have jointly with $T_n^{-1/2}(X^c - b^X \cdot, Z - b^Z \cdot)$ the convergences $nT_n^{-1/2}\cald(X^c - b^X \cdot) \to^d \int_0^1 \sigma_s^M dB_s^1 $ and $nT_n^{-1/2}\cald(Z - b^Z \cdot) \to^d \int_0^1 \sigma_s^M dB_s^2 $, so that by another application of the continuous mapping theorem we deduce that Lemma \ref{lemmaLimitIntX} remains true if in the limits $B$ is replaced by $B - \int_0^\cdot (\sigma_s^M)^{-1} ds \int_0^1 \sigma_s^M dB_s$. Finally, following closely the proofs of Lemma \ref{lemmaEpsilonDef} and Theorem \ref{thmH0DF}, we easily prove that everything holds if in the limits $W-\overline{W}$ and $\xi - \overline{\xi}$ are replaced by $W- \breve{W}$ and $\xi - \breve{\xi}$.  
\end{proof}

\subsection{Proofs of Theorem \ref{thmH1} and Proposition \ref{propH1}}

We begin this section with a technical lemma. Let 
$$ g_n := \sum_{i=1}^n \l(\widetilde{X}_{\frac{t_i}{T_n}}^{c,def,n}-n^{-1}\sum_{j=1}^n\widetilde{X}_{\frac{t_j}{T_n}}^{c,def,n}\r)\l( \widetilde{\epsilon}_{\frac{t_i}{T_n}}^{def,n} - n^{-1}\sum_{j=1}^n\widetilde{\epsilon}_{\frac{t_j}{T_n}}^{def,n} \r).$$

\begin{lemma*} \label{lemGn}
Under $\calh_1$, jointly with $(\calu_1^n,\calu_2^n)$, we have the convergence in distribution $$ g_n \to^d \frac{\omega_{11}^{-1}}{1-\rho} \left(\omega_{12}+\int_0^1 (B_s^1 - \overline{B}^1) dB_s^2 \right).$$
\end{lemma*}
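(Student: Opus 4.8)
The plan is to reduce $g_n$, via summation by parts and the $\mathrm{AR}(1)$ structure of $\epsilon$, to a single discrete It\^{o}-type sum of $\widetilde{X}^{c,def,n}$ against $\widetilde{Z}^{def,n}$, and then to conclude from the joint weak convergence of Lemma \ref{lemmaLimitIntX} together with the bracket convergence (\ref{bracketXZ}). First I would set $\mu_X := n^{-1}\sum_{j=1}^n \widetilde{X}^{c,def,n}_{t_j/T_n}$ and perform an Abel summation in $g_n$: since $\sum_{k=1}^n(\widetilde{X}^{c,def,n}_{t_k/T_n}-\mu_X)=0$, the centering of $\widetilde{\epsilon}^{def,n}$ cancels and one gets the exact identity
$$ g_n = -\sum_{i=1}^n \Delta\widetilde{\epsilon}^{def,n}_{t_i/T_n}\,S_{i,n},\qquad S_{i,n}:=\sum_{k=1}^{i-1}\bigl(\widetilde{X}^{c,def,n}_{t_k/T_n}-\mu_X\bigr). $$

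Next, using $\Delta\widetilde{\epsilon}^{def,n}_{t_i/T_n}=T_n^{-1/2}C_{i,n}^{-1/2}\Delta\epsilon_i$ with $\Delta\epsilon_i=\Delta Z_i-(1-\rho)\epsilon_{i-1}$ and $\epsilon_{i-1}=\sum_{j<i}\rho^{\,i-1-j}\Delta Z_j$, I would rearrange by innovations $\Delta Z_m$ (Fubini) to obtain $g_n=T_n^{-1/2}\sum_{m=1}^n\Delta Z_m\,\Gamma_{m,n}$ with, after using $(1-\rho)\sum_{i>m}\rho^{\,i-1-m}=1-\rho^{\,n-m}$,
$$ \Gamma_{m,n}=-C_{m,n}^{-1/2}S_{m,n}\rho^{\,n-m}+(1-\rho)\sum_{i>m}\bigl(C_{i,n}^{-1/2}S_{i,n}-C_{m,n}^{-1/2}S_{m,n}\bigr)\rho^{\,i-1-m}. $$
The boundary term contributes $o_\proba(1)$ (only $m$ within $O((1-\rho)^{-1})$ of $n$ matter, where $S_{m,n}=-\sum_{k\ge m}(\widetilde{X}^{c,def,n}_{t_k/T_n}-\mu_X)=O_\proba(1)$). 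In the main sum I write $C_{i,n}^{-1/2}S_{i,n}-C_{m,n}^{-1/2}S_{m,n}=C_{i,n}^{-1/2}(S_{i,n}-S_{m,n})+S_{m,n}(C_{i,n}^{-1/2}-C_{m,n}^{-1/2})$; the geometric weights localise $i$ to $O((1-\rho)^{-1})$ steps past $m$, so I approximate $S_{i,n}-S_{m,n}$ by $(i-m)(\widetilde{X}^{c,def,n}_{t_m/T_n}-\mu_X)$ --- the error controlled by the increment estimate (\ref{estimateXc}) --- and use $(1-\rho)\sum_{\ell\ge1}\ell\rho^{\ell-1}=(1-\rho)^{-1}$. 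Combined with Proposition \ref{lemmaCiWeak}, Lemma \ref{lemmaRVLocal} and the slow-variation bound (\ref{estimateCkCj}), and using that for $i$ close to $m$ the weight $C_{i,n}^{-1/2}$ is $\calf_{t_{m-1}}$-measurable (it uses returns only up to $t_{i-l_n-1}$, $l_n\to+\infty$), so that the fluctuation term $(1-\rho)T_n^{-1/2}\sum_m\Delta Z_m S_{m,n}\sum_{i>m}(C_{i,n}^{-1/2}-C_{m,n}^{-1/2})\rho^{\,i-1-m}$ is a martingale transform with conditionally centered increments, these remainders are shown to be $o_\proba(1)$. What survives is
$$ g_n=\frac{1}{1-\rho}\sum_{m=1}^n\bigl(\widetilde{X}^{c,def,n}_{t_m/T_n}-\mu_X\bigr)\,\Delta\widetilde{Z}^{def,n}_{t_m/T_n}+o_\proba(1), $$
since $\Delta\widetilde{Z}^{def,n}_{t_m/T_n}=T_n^{-1/2}C_{m,n}^{-1/2}\Delta Z_m$.

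Finally I would pass to the limit. By Lemma \ref{lemmaLimitIntX}, $(\widetilde{X}^{c,def,n},\widetilde{Z}^{def,n})\to^d\omega_{11}^{-1/2}(B^1,B^2)$ jointly, and by (\ref{bracketXZ}) and the argument there the realised covariation $\sum_m\Delta\widetilde{X}^{c,def,n}_{t_m/T_n}\Delta\widetilde{Z}^{def,n}_{t_m/T_n}\to^\proba\omega_{11}^{-1}\omega_{12}$. Writing $\widetilde{X}^{c,def,n}_{t_m/T_n}\Delta\widetilde{Z}^{def,n}_{t_m/T_n}=\widetilde{X}^{c,def,n}_{t_{m-1}/T_n}\Delta\widetilde{Z}^{def,n}_{t_m/T_n}+\Delta\widetilde{X}^{c,def,n}_{t_m/T_n}\Delta\widetilde{Z}^{def,n}_{t_m/T_n}$, applying to the left-endpoint sum the stability of stochastic integration under weak convergence (the $\widetilde{Z}^{def,n}$ being martingales with converging brackets) and the Riemann approximation of Lemma \ref{lemmaRiemann} for $\mu_X\to^d\omega_{11}^{-1/2}\overline{B}^1$, one obtains, jointly with $(\calu_1^n,\calu_2^n)$, that $\sum_{m=1}^n(\widetilde{X}^{c,def,n}_{t_m/T_n}-\mu_X)\Delta\widetilde{Z}^{def,n}_{t_m/T_n}\to^d\omega_{11}^{-1}(\omega_{12}+\int_0^1(B_s^1-\overline{B}^1)\,dB_s^2)$, which yields the claim. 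The main obstacle is the remainder control in the third step: because $\sigma^M$ is only c\`{a}gl\`{a}d, the random weights $C_{i,n}^{-1/2}$ are only slowly and non-uniformly varying, so bounding their fluctuation contribution requires combining simultaneously the $\mathbb{L}^2$-estimates of Lemma \ref{lemmaRVLocal}, the pathwise bound (\ref{estimateCkCj}), Assumption \textbf{[C]}, and the near-predictability of $C_{i,n}^{-1/2}$ just described; that is where the bulk of the work lies.
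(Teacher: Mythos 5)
Your route is, at bottom, the paper's own proof with different bookkeeping. Where you Abel-sum $g_n$ and expand in the innovations $\Delta Z_m$, the paper first replaces $\widetilde{\epsilon}^{def,n}$ by the surrogate $q_{t_i/T_n}=\sum_{j\le i}\rho^{i-j}\Delta\widetilde{Z}^{def,n}_{t_j/T_n}$ (its Step 1), which is exactly your ``fluctuation of $C_{i,n}^{-1/2}$'' remainder and is handled with the very ingredients you name: the $\calf_{t_{m-1}}$-measurability supplied by the gap $l_n$, the slow-variation bound (\ref{estimateCkCj}) and its $L^4$ refinement, Lemma \ref{lemmaRVLocal}, and Assumption \textbf{[C]} with $\gamma\ge 1/2$, $\overline{\omega}>1/4$; it then computes the resulting sums exactly by Fubini and geometric series (its Step 2), splitting off a realized-covariation part converging to $\omega_{11}^{-1}\omega_{12}$, a martingale remainder $A_2$, and a left-endpoint sum treated by Kurtz--Protter stability of stochastic integrals -- precisely the final passage you describe via Lemma \ref{lemmaLimitIntX} and (\ref{bracketXZ}). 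So the strategy is essentially the same; your reorganization buys a more transparent identification of the surviving right-endpoint sum $\frac{1}{1-\rho}\sum_m(\widetilde{X}^{c,def,n}_{t_m/T_n}-\mu_X)\Delta\widetilde{Z}^{def,n}_{t_m/T_n}$, at the price of carrying the large factor $S_{m,n}=O_\proba(n)$ through all remainders, which is where (as you say) the bulk of the work sits.

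One step is under-justified as stated, and it matters quantitatively: the error in the approximation $S_{i,n}-S_{m,n}\approx(i-m)(\widetilde{X}^{c,def,n}_{t_m/T_n}-\mu_X)$ cannot be dispatched by (\ref{estimateXc}) alone. The neglected quantity is $(1-\rho)T_n^{-1/2}\sum_m \Delta Z_m\sum_{i>m}\rho^{i-1-m}C_{i,n}^{-1/2}\sum_{k=m+1}^{i-1}\bigl(\widetilde{X}^{c,def,n}_{t_k/T_n}-\widetilde{X}^{c,def,n}_{t_m/T_n}\bigr)$, and bounding it in absolute value via (\ref{estimateXc}) gives $n$ terms each of size $O_\proba(n^{-1})$, hence only $O_\proba(1)$ -- the same order as the It\^{o}-correction $\omega_{12}$ you are trying to isolate, so the constant in the limit would be at stake. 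What saves it is the martingale structure, not the moment bound: the increments $\widetilde{X}_{t_k}-\widetilde{X}_{t_m}$ for $k>m$ are conditionally centered given $\calf_{t_m}$, so after exchanging sums this term becomes a transform of the $X$-increments against AR(1)-weighted sums of past deflated $Z$-innovations, whose conditional variance sums to $O(n^{-1})$; this is exactly how the paper kills its term $A_2$ in Step 2, and the same device (together with the predictability of the $C$'s) is what its Step 1 uses for the deflator-fluctuation remainder. With that repair, and with the $S_{m,n}(C_{i,n}^{-1/2}-C_{m,n}^{-1/2})$ remainder worked out at the level of detail of the paper's $\delta^{(1)}/\delta^{(2)}$ decomposition rather than asserted, your argument goes through and coincides with the published one.
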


\begin{proof}
We rewrite 
\beas 
g_{n} &:=& \sum_{j=1}^n  \widetilde{X}_{\frac{t_j}{T_n}}^{c,def,n}   \widetilde{\epsilon}_{\frac{t_j}{T_n}}^{def,n}  -  n^{-1}\sum_{j=1}^n\widetilde{X}_{\frac{t_j}{T_n}}^{c,def,n}  \sum_{j=1}^n   \widetilde{\epsilon}_{\frac{t_j}{T_n}}^{def,n}  \\
&=& I+II.
\eeas
The limit for $I$ and $II$ is derived in two steps.\\
\textbf{Step 1.} Let us define for all $i \in \{1,\ldots,n\}$ 
\bea 
q_{\frac{t_i}{T_n}} = \sum_{j=1}^i \rho^{i-j} \Delta \widetilde{Z}_{\frac{t_j}{T_n}}^{def,n}
\eea  
where $\Delta \widetilde{Z}_{\frac{t_j}{T_n}}^{def,n} =  \widetilde{Z}_{\frac{t_j}{T_n}}^{def,n} -  \widetilde{Z}_{\frac{t_{j-1}}{T_n}}^{def,n}$. Note that for $i \leq 2k_n$, $q_{\frac{t_i}{T_n}} = 0$. Accordingly, define also 
\beas 
\widetilde{I} = \sum_{j=1}^n  \widetilde{X}_{\frac{t_j}{T_n}}^{c,def,n}   q_{\frac{t_j}{T_n}} \textnormal{ and } \widetilde{II} =  \sum_{j=1}^n   q_{\frac{t_j}{T_n}}.
\eeas We show that 
\bea \label{convI}
I- \widetilde{I} \to^\proba 0
\eea 
and
\bea \label{convII}
II + n^{-1}\sum_{j=1}^n\widetilde{X}_{\frac{t_j}{T_n}}^{c,def,n}\widetilde{II} \to^\proba 0.
\eea 
Note that elementary algebraic manipulations yield the representation
\bea
\widetilde{\epsilon}_{\frac{t_i}{T_n}}^{def,n} = \sum_{j=1}^i a_{j,n}^i \Delta \widetilde{Z}_{\frac{t_j}{T_n}}^{def,n}
\eea 
where $a_{j,n}^i = 1 + \sqrt{C_{j,n}}(\rho-1)\sum_{k=j+1}^i \frac{\rho^{k-1-j}}{\sqrt{C_{k,n}}}$ with the convention $+\infty \times 0 = 0$ in the last expression, so that if $i \leq 2k_n$ we have $\widetilde{\epsilon}_{\frac{t_i}{T_n}}^{def,n} - q_{\frac{t_i}{T_n}} = 0$ and for $i \geq 2k_n+1$
\beas 
\widetilde{\epsilon}_{\frac{t_i}{T_n}}^{def,n} - q_{\frac{t_i}{T_n}} &=& \sum_{j=2k_n+1}^i (\delta_{j,n}^{i,(1)} + \delta_{j,n}^{i,(2)})  \Delta \widetilde{Z}^{def,n}_{\frac{t_j}{T_n}}
\eeas 
with 
$$ \delta_{j,n}^{i,(1)} = \frac{1-\rho}{\sqrt{C_{j,n}}} \sum_{k=j+1}^{i \wedge (j+l_n)}  \rho^{k-1-j}(\sqrt{C_{k,n}} + \sqrt{C_{j,n}})(C_{k,n} - C_{j,n}),$$
and
$$ \delta_{j,n}^{i,(2)} = \frac{1-\rho}{\sqrt{C_{j,n}}} \sum_{k=i \wedge (j+l_n)+1}^{i}  \rho^{k-1-j}(\sqrt{C_{k,n}} + \sqrt{C_{j,n}})(C_{k,n} - C_{j,n}).$$
Now, using $\esp |\Delta \widetilde{Z}_{\frac{t_j}{T_n}}^{def,n}|^2 \leq c^{-1}Kn^{-1}$, we get 
\beas 
\esp \sum_{j=2k_n+1}^i |\delta_{j,n}^{i,(2)}||\Delta \widetilde{Z}^{def,n}_{\frac{t_j}{T_n}}| &\leq& K n^{-1/2} \sum_{j=2k_n+1}^i \sqrt{\esp|\delta_{j,n}^{i,(2)}|^2}\\
&\leq & Kn \rho^{l_n/2}
\eeas 
since by straightforward calculations $\esp|\delta_{j,n}^{i,(2)}|^2 \leq K \rho^{l_n}$. Now, note that 
since $C_{k,n}$ is $\calf_{j-1}$ measurable for any $k \in \{j,\ldots,j+l_n\}$, we get that $\sum_{j=2k_n+1}^i \delta_{j,n}^{i,(1)}  \Delta \widetilde{Z}^{def,n}_{\frac{t_j}{T_n}}$ is a sum of martingale increments.
Moreover, since we have $|x \vee c - y\vee c| \leq |x-y|$ for any $x,y \in \mathbb{R}$, then for $k \geq j \geq 2k_n+1$, recall that 
$$ |C_{k,n} - C_{j,n}| \leq T_n^{-\gamma}\l( \sum_{l=(k-k_n)\vee (j-l_n)}^{k-l_n-1} \Delta X_l^2 \mathbf{1}_{\{|\Delta X_l| \leq a\Delta_n^{\overline{\omega}}\}} + \sum_{l=(j-k_n)}^{(k-k_n)\wedge (j-l_n-1)} \Delta X_l^2 \mathbf{1}_{\{|\Delta X_l| \leq a\Delta_n^{\overline{\omega}}\}}\r),$$
and since $l_n = o(k_n)$, we immediately deduce for any $q \geq 1$
\bea \label{estimateCkCj2}
|C_{k,n} - C_{j,n}|^q \leq KT_n^{-q\gamma}\Delta_n^{2q\overline{\omega}} [(k-j)^q \wedge k_n^q]
\eea 
which yields for $\delta_{j,n}^{i,(1)}$
\beas \esp |\delta_{j,n}^{i,(1)}|^4 &\leq& K\esp\l(\sum_{k=j+1}^i \rho^{k-1-j} (\sqrt{C_{k,n}} + \sqrt{C_{j,n}})(C_{k,n} - C_{j,n}) \r)^4\\
&\leq& K\l(\frac{1-\rho}{1-\rho^{i-j}}\r)^3 \sum_{k=j+1}^i \rho^{k-1-j} \esp \l[(\sqrt{C_{k,n}} + \sqrt{C_{j,n}})^4(C_{k,n} - C_{j,n})^4\r]\\
&\leq& K T_n^{-4\gamma}\Delta_n^{8\overline{\omega}}\underbrace{\sum_{k=j+1}^i \rho^{k-1-j} (k-j)^4}_{\leq K}\\
&\leq& KT_n^{-4\gamma}\Delta_n^{8\overline{\omega}}
\eeas
where, in the above calculation we have used Jensen's inequality at the second step, along with (\ref{estimateCkCj2}), and we have used at the third step Cauchy-Schwarz inequality along with (\ref{estimateXc}) and the fact that $p_0 \geq 8$. We thus obtain 
\beas 
\esp\l|\sum_{j=2k_n+1}^i \delta_{j,n}^{i,(1)}  \Delta \widetilde{Z}^{def,n}_{\frac{t_j}{T_n}}\r|^2 &=& \sum_{j=2k_n+1}^i \esp \l[(\delta_{j,n}^{i,(1)})^2   (\Delta \widetilde{Z}^{def,n}_{\frac{t_j}{T_n}})^2 \r]  \\
&\leq&  \sum_{j=2k_n+1}^i \sqrt{\esp (\delta_{j,n}^{i,(1)})^4   \esp(\Delta \widetilde{Z}^{def,n}_{\frac{t_j}{T_n}})^4} \\ 
&\leq& K  \Delta_n^{4\overline{\omega}} T_n^{-2\gamma},
\eeas
where we have used that $\esp(\Delta \widetilde{Z}^{def,n}_{\frac{t_j}{T_n}})^4 \leq c^{-2}K n^{-2}$. Finally this yields uniformly in $i \in \{1,\ldots,n\}$
\bea \label{estEpsilonQ}
  \esp |\widetilde{\epsilon}_{\frac{t_i}{T_n}}^{def,n} - q_{\frac{t_i}{T_n}}|^2 \leq K  \Delta_n^{4\overline{\omega}} T_n^{-2\gamma},
\eea
since $n\rho^{l_n/2} = o(\Delta_n^{4\overline{\omega}}T_n^{-\gamma})$, and by similar calculation, we also deduce $\esp |\sum_{i=1}^n (\widetilde{\epsilon}_{\frac{t_i}{T_n}}^{def,n} - q_{\frac{t_i}{T_n}})|^2 = O(\Delta_n^{4\overline{\omega} - 1} T_n^{1-2\gamma}) = o(1)$ since $\gamma \geq 1/2$ and $\overline{\omega} > 1/4$ by \textbf{[C]}. As $n^{-1}\sum_{j=1}^n\widetilde{X}_{\frac{t_j}{T_n}}^{c,def,n}  = O_\proba(1)$, this proves (\ref{convII}). To show (\ref{convI}), it suffices to note that 
\beas 
I - \widetilde{I} = \sum_{j=1}^n \widetilde{X}_{\frac{t_{j-1}}{T_n}}^{c,def,n}(\widetilde{\epsilon}_{\frac{t_j}{T_n}}^{def,n} - q_{\frac{t_j}{T_n}}) +  \sum_{j=1}^n \Delta \widetilde{X}_{\frac{t_{j}}{T_n}}^{c,def,n}(\widetilde{\epsilon}_{\frac{t_j}{T_n}}^{def,n} - q_{\frac{t_j}{T_n}}).
\eeas 
The first term can be treated following exactly the same path as for $II - \widetilde{II}$, multiplying $\delta_{j,n}^{i,(1)}$ and $\delta_{j,n}^{i,(2)}$ by $\widetilde{X}_{\frac{t_{j-1}}{T_n}}^{c,def,n}$ which is $\mathbb{L}_{2p_0}$ bounded and does not affect the estimates. As for the second term, using (\ref{estEpsilonQ}) we have 
\beas 
\esp \l|\sum_{j=1}^n \Delta \widetilde{X}_{\frac{t_{j}}{T_n}}^{c,def,n}(\widetilde{\epsilon}_{\frac{t_j}{T_n}}^{def,n} - q_{\frac{t_j}{T_n}})\r|&\leq&  \sum_{j=1}^n \sqrt{\esp |\Delta \widetilde{X}_{\frac{t_{j}}{T_n}}^{c,def,n}|^2 \esp|(\widetilde{\epsilon}_{\frac{t_j}{T_n}}^{def,n} - q_{\frac{t_j}{T_n}})|^2} \\
&\leq& \Delta_n^{2\overline{\omega} -1/2}T_n^{1/2-\gamma} \to 0
\eeas 
since $\gamma \geq 1/2$ and $\overline{\omega} >1/4$. \\
\textbf{Step 2.} We prove that jointly with $(\calu_1^n,\calu_2^n)$, 
\bea \label{limitIandII} \l(\begin{matrix}\widetilde{I} \\ \widetilde{II}\end{matrix}\r) \to^d \frac{\omega_{11}^{-1}}{1-\rho} \l( \begin{matrix} \omega_{12}+\int_0^1 B_s^1  dB_s^2  \\  \omega_{11}^{-1/2}B_1^2 \end{matrix}\r). 
\eea 
First, note that by definition of $q$, $\widetilde{I}$ and $\widetilde{II}$ can be rewritten as follows:
\beas 
\widetilde{I} &=& \sum_{i=1}^n \widetilde{X}_{\frac{t_{i}}{T_n}}^{c,def,n}q_{\frac{t_{i}}{T_n}}\\
&=& \sum_{j=1}^n \sum_{i=j}^n \sum_{k=j}^i \rho^{i-j}\Delta \widetilde{X}_{\frac{t_{k}}{T_n}}^{c,def,n} \Delta \widetilde{Z}^{def,n}_{\frac{t_j}{T_n}} + \sum_{j=1}^n\sum_{i=j}^n \rho^{i-j} \widetilde{X}^{c,def,n}_{\frac{t_{j-1}}{T_n}} \Delta \widetilde{Z}^{def,n}_{\frac{t_j}{T_n}}  \\
&=& \frac{1}{1-\rho} \l(\underbrace{\sum_{j=1}^n \sum_{k=j}^n \rho^{k-j}(1-\rho^{n-k}) \Delta \widetilde{X}_{\frac{t_{k}}{T_n}}^{c,def,n} \Delta \widetilde{Z}^{def,n}_{\frac{t_j}{T_n}}}_{A} + \underbrace{\sum_{j=1}^n (1-\rho^{n+1-j}) \widetilde{X}^{c,def,n}_{\frac{t_{j-1}}{T_n}} \Delta \widetilde{Z}^{def,n}_{\frac{t_j}{T_n}}}_{B}  \r),
\eeas 
and
\beas 
\widetilde{II} = \frac{1}{1-\rho}\sum_{j=1}^n (1-\rho^{n+1-j}) \Delta \widetilde{Z}^{def,n}_{\frac{t_j}{T_n}}.
\eeas 
Moreover, $A$ can be further decomposed as
\beas 
A &=& \sum_{j=1}^n (1-\rho^{n-j}) \Delta \widetilde{X}_{\frac{t_{j}}{T_n}}^{c,def,n} \Delta \widetilde{Z}^{def,n}_{\frac{t_j}{T_n}} + \sum_{j=1}^n \sum_{k>j}^n \rho^{k-j}(1-\rho^{n-k}) \Delta \widetilde{X}_{\frac{t_{k}}{T_n}}^{c,def,n} \Delta \widetilde{Z}^{def,n}_{\frac{t_j}{T_n}} \\
&=& A_1+A_2.
\eeas
Note that by Jensen's inequality
\beas 
\esp \l(\sum_{j=1}^n \rho^{n-j} \Delta \widetilde{X}_{\frac{t_{j}}{T_n}}^{c,def,n} \Delta \widetilde{Z}^{def,n}_{\frac{t_j}{T_n}}\r)^2  &\leq & \frac{1-\rho^n}{1-\rho}\sum_{j=1}^n \rho^{n-j} \esp(\Delta \widetilde{X}_{\frac{t_{j}}{T_n}}^{c,def,n} \Delta \widetilde{Z}^{def,n}_{\frac{t_j}{T_n}})^2  \\
&\leq & \l(\frac{1-\rho^n}{1-\rho}\r)^2 n^{-2} \to 0, 
\eeas 
and we have also $\sum_{j=1}^n  \Delta \widetilde{X}_{\frac{t_{j}}{T_n}}^{c,def,n} \Delta \widetilde{Z}^{def,n}_{\frac{t_j}{T_n}} \to^\proba \omega_{11}^{-1}\omega_{12},$ so that this proves that $A_1 \to^\proba \omega_{11}^{-1}\omega_{12}$.  Now, remark that $A_2$ can be represented as the sum of martingale increments $A_2 = \sum_{k=1}^n m_{k,n} \Delta \widetilde{X}_{\frac{t_{k}}{T_n}}^{c,def,n}$ with 
$$ m_{n,k} = \l(\sum_{j=1}^{k-1} \rho^{k-j} \Delta \widetilde{Z}^{def,n}_{\frac{t_j}{T_n}}\r)(1-\rho^{n-k}), $$
and thus
\beas  
\esp A_2^2 &\leq & \sum_{k=1}^n \sqrt{\esp [m_{n,k}^4 ] \esp(\Delta \widetilde{X}_{\frac{t_{k}}{T_n}}^{c,def,n})^4}.
\eeas
Again, using Jensen's inequality and $(1-\rho^{n-k}) \leq 1$, we have
\beas 
\esp[m_{n,k}^4] &\leq& \l(\frac{\rho-\rho^k}{1-\rho}\r)^3 \sum_{j=1}^{k-1} \sum_{j=1}^{k-j} \rho^{k-j} \esp (\Delta \widetilde{Z}_{\frac{t_{k}}{T_n}}^{def,n})^4\\
&\leq& K n^{-2},
\eeas 
so that $\esp A_2^2 \leq K n^{-1} \to 0$, and thus overall $A \to^\proba \omega_{11}^{-1}\omega_{12}$. Finally, since we have the immediate approximations
\beas
B = \sum_{j=1}^n \widetilde{X}^{c,def,n}_{\frac{t_{j-1}}{T_n}} \Delta \widetilde{Z}^{def,n}_{\frac{t_j}{T_n}} +o_\proba(1)
= \int_0^1 \widetilde{X}^{c,def,n}_{u} d\widetilde{Z}^{def,n}_{u} +o_\proba(1)
\eeas 
and
$$ \widetilde{II} = \frac{\widetilde{Z}^{def,n}_{1}}{1-\rho}  + o_\proba(1)$$
All we need is to show a joint central limit theorem for the extended process $ (\widetilde{X}^{c,def,n},\widetilde{Z}^{def,n},V^n)$ where $V^n$ is defined as
\bea 
V_{u}^n := \int_0^u \widetilde{X}^{c,def,n}_{s} d \widetilde{Z}^{def,n}_{s},
\eea 
which is a consequence of Lemma \ref{lemmaLimitIntX} along with Theorem 2.2 in \cite{kurtz1991weak}, with $\delta = \infty$ and Condition C2.2(i) being satisfied for any localizing sequence. We have thus with respect to the Skorohod topology of $D_{\mathbb{R}^3}[0,1]$ the convergence  $(\widetilde{X}^{c,def,n},\widetilde{Z}^{def,n},V^n) \to^d (\omega_{11}^{-1/2}B^1,\omega_{11}^{-1/2}B^2,\omega_{11}^{-1}\int_0^. B_s^1dB_s^2)$, which implies by the continuous mapping theorem along with Slutsky's Lemma the convergence (\ref{limitIandII}). Finally, combined with the fact that $n^{-1}\sum_{j=1}^n \widetilde{X}^{c,def,n}_{\frac{t_{j}}{T_n}} = \int_0^1 \widetilde{X}_u^{c,def,n}du + o_\proba(1)$, the continuous mapping theorem, and Step 1 of this proof, the convergence of $g_n$ readily follows.
\end{proof}

Next, we prove a technical lemma for the jump part.
\begin{lemma*}\label{lemmaJumpsH1}
Under $\calh_1$ and $\textnormal{\textbf{[D]}}$, we have 
\beas 
n(\widetilde{\calt}(Y)^{def,n} - T_n^{-1/2}c_0 - \alpha_0 \widetilde{\calt}(X)^{def,n} - \widetilde{\epsilon}^{def,n} ) \to^{u.c.p} 0. 
\eeas 

\end{lemma*}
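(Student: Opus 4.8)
\noindent The plan is to turn the target difference into an explicit finite sum by invoking the cointegration identity (\ref{cointModified}). Writing $K:=J^Y-\alpha_0 J^X$ for the level-shift process attached to the cointegration relation (\ref{cointJumpContinuous}), and using, exactly as in the proof of Lemma \ref{lemmaTruncationUCP}, that $Y^c_t-\alpha_0 X^c_t=c_0+\epsilon_t$ after the continuous-time embedding of $\epsilon$ from the Notation section, one checks from the definitions (\ref{defTX}), (\ref{defTY}), (\ref{defXtilde}), (\ref{defepsilontilde}) that for every $u\in[0,1]$
\beas
\widetilde{\calt}(Y)_u^{def,n}-T_n^{-1/2}c_0-\alpha_0\widetilde{\calt}(X)_u^{def,n}-\widetilde{\epsilon}_u^{def,n} &=& T_n^{-1/2}\sum_{i=1}^n C_{i,n}^{-1/2}\Delta K_{i,uT_n}\mathbb{1}_{A_{i,n,u}} \\
&& {}-T_n^{-1/2}\sum_{i=1}^n C_{i,n}^{-1/2}\Delta\epsilon_{i,uT_n}\mathbb{1}_{A_{i,n,u}^c}.
\eeas
The first sum collects the level shifts that survive the truncation, the second the $\epsilon$-increments that the truncation erroneously discards. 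I would then multiply by $n$ and show that each sum tends to $0$ uniformly in probability, working throughout on the event $\textbf{[H]}$ so that $C_{i,n}^{-1/2}\le c^{-1/2}$.

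For the level-shift sum, the supremum over $u$ is dominated by $c^{-1/2}\sum_{0<s\leq T_n}|\Delta J_s^Y-\alpha_0\Delta J_s^X|$, which by the second half of Assumption $\textbf{[D]}$ is $o_\proba(T_n^{1/2}n^{-1})$; multiplying by $nT_n^{-1/2}$ gives $o_\proba(1)$, uniformly in $u$. The second sum is the delicate one. Here I would exploit the first half of Assumption $\textbf{[D]}$: since $\sup_{i,u}|\Delta J_{i,uT_n}^U|=O_\proba(\Delta_n^{1/2})$ and $\Delta_n^{1/2}=o(\Delta_n^{\overline{\omega}})$ (because $\overline{\omega}<1/2$), on an event of probability arbitrarily close to $1$ the jump increments are, for $n$ large, too small to trigger the truncation, so that $A_{i,n,u}^c\subseteq\{|\Delta X^c_{i,uT_n}|>\tfrac{a}{2}\Delta_n^{\overline{\omega}}\}\cup\{|\Delta Y^c_{i,uT_n}|>\tfrac{a}{2}\Delta_n^{\overline{\omega}}\}$; writing $\Delta Y^c=\alpha_0\Delta X^c+\Delta\epsilon$, the indicator is further bounded by indicators of the events that $|\Delta X^c_{i,uT_n}|$ or $|\Delta\epsilon_{i,uT_n}|$ exceeds a fixed multiple of $\Delta_n^{\overline{\omega}}$. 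Pushing the supremum over $u$ inside the sum (all summands nonnegative) and applying H\"{o}lder's inequality with the $2p_0$-th moment together with (\ref{estimateXc}) for $X^c$ and the analogous bound $\esp\sup_u|\Delta\epsilon_{i,uT_n}|^{2p_0}\le K\Delta_n^{p_0}$ valid under $\calh_1$, one obtains on that event $\esp\sup_u\sum_{i=1}^n|\Delta\epsilon_{i,uT_n}|\mathbb{1}_{A_{i,n,u}^c}\le Kn\Delta_n^{p_0-(2p_0-1)\overline{\omega}}$. Since $p_0-(2p_0-1)\overline{\omega}=2p_0(\tfrac12-\overline{\omega})+\overline{\omega}>3$ by $\textbf{[C]}$, multiplying by $nT_n^{-1/2}$ and substituting $\Delta_n=T_n/n$ produces a quantity of the form $(T_n^{1+\eta}n^{-1})^{\kappa}$ with $\kappa>0$ and $\eta\le 4/(p_0(\tfrac12-\overline{\omega}))\le e_2$, which vanishes under the second condition of Assumption $\textbf{[C]}$; as the localizing event can be taken of probability as close to $1$ as we wish, the second sum times $n$ also converges u.c.p.\ to $0$, completing the proof.

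The step I expect to be the main obstacle is the exponent bookkeeping in the second sum: one must verify that the tail gain coming from the high moments of the continuous increments is large enough to absorb simultaneously the extra factor $n$ relative to Lemma \ref{lemmaTruncationUCP} (the loss of one power of $\Delta_n^{1/2}$) and the growth of $T_n$, and that Assumption $\textbf{[C]}$ is exactly calibrated so that the resulting rate is $o(1)$. A secondary, purely technical point is maintaining uniformity in $u$: every increment is taken stopped at $uT_n$, so the running suprema over $u$ have to be moved inside the sums before any moment estimate is applied, which is legitimate because the summands are nonnegative and (\ref{estimateXc}) is stated for running suprema; the localization $\textbf{[H]}$ and the fact that the embedded residual $\epsilon$ inherits the same Burkholder-Davis-Gundy bounds as $X^c$ and $Z$ are used without further comment, as in the earlier lemmas.
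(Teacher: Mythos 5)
Your proof is correct and follows essentially the same route as the paper: the identical decomposition into the surviving level-shift sum, killed by the second half of Assumption \textbf{[D]}, and the mis-truncated residual sum, handled on the localized event \textbf{[H]} with $C_{i,n}^{-1/2}\le c^{-1/2}$. The only difference is technical: where the paper bounds the jump-triggered truncation indicators by an arbitrary power $q \ge p_0$ of $\sup_{j,u}|\Delta J^U_{j,uT_n}|\Delta_n^{-\overline{\omega}}$ and then applies Cauchy-Schwarz to the continuous-exceedance part (yielding $O_\proba(n^2\Delta_n^{1/2+p_0(1/2-\overline{\omega})})$), you localize on the event that all jump increments fall below half the threshold and use a full H\"{o}lder inequality, which gives the slightly sharper exponent $\Delta_n^{2p_0(1/2-\overline{\omega})+\overline{\omega}}$ while concluding, as the paper does, from the rate condition in Assumption \textbf{[C]}.
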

\begin{proof}
By definition, we have for $u \in [0,1]$
\beas 
n(\widetilde{\calt}(Y)_u^{def,n} - T_n^{-1/2}c_0 - \alpha_0 \widetilde{\calt}(X)_u^{def,n} - \widetilde{\epsilon}_u^{def,n} ) &=& nT_n^{-1/2}\sum_{j=1}^n C_{j,n}^{-1/2}(\Delta J_{j,uT_n}^Y-\alpha_0 \Delta J_{j,uT_n}^X)\mathbb{1}_{A_{j,n,u}} \\
&+& nT_n^{-1/2}\sum_{j=1}^n C_{j,n}^{-1/2} \Delta \epsilon_{j,uT_n} \mathbb{1}_{A_{j,n,u}^c}\\
&=& I_u + II_u.
\eeas 
Using $C_{j,n}^{-1/2} \leq c^{-1/2}$, we have that 
\beas 
\sup_{u \in [0,1]} I_u &\leq& nT_n^{-1/2} \sum_{0<s\leq T_n}|\Delta J_{s}^Y - \alpha_0 \Delta J_s^X| =    o_\proba(1)
\eeas 
by \textbf{[D]}. As for $II$, recall that $\mathbb{1}_{A_{j,n,u}^c} \leq \mathbb{1}_{\{|\Delta X_{j,uT_n}^c| > \frac{a}{2} \Delta_n^{\overline{\omega}}\}} + \mathbb{1}_{\{|\Delta Y_{j,uT_n}^c| > \frac{a}{2} \Delta_n^{\overline{\omega}}\}} + \mathbb{1}_{\{|\Delta J_{j,uT_n}^X| > \frac{a}{2} \Delta_n^{\overline{\omega}}\}} + \mathbb{1}_{\{|\Delta J_{j,uT_n}^Y| > \frac{a}{2} \Delta_n^{\overline{\omega}}\}}$. Now, 
on the one hand for $U \in \{X,Y\}$ and any $q >0$,
\beas 
\sup_{j \in \{1,\ldots,n\}, u\in[0,1]} \mathbb{1}_{\{|\Delta J_{j,uT_n}^U| > \frac{a}{2} \Delta_n^{\overline{\omega}}\}} &\leq& K \sup_{j \in \{1,\ldots,n\}, u\in [0,1]}  |\Delta J_{j,uT_n}^U|^q \Delta_n^{-q\overline{\omega}} = O_\proba(\Delta_n^{q(1/2 -\overline{\omega})})
\eeas 
by \textbf{[D]}, and on the other hand, still for $U \in  \{X,Y\}$, 
\beas
\esp \sup_{u \in [0,1]}\mathbb{1}_{\{|\Delta U_{j,uT_n}^c| > \frac{a}{2} \Delta_n^{\overline{\omega}}\}}  \leq K\Delta_n^{-2p_0\overline{\omega}}\esp \sup_{u \in [0,1]}|\Delta U_{j,uT_n}^c|^{2p_0} \leq  K \Delta_n^{p_0(1-2\overline{\omega})}.
\eeas
Therefore, $\sup_{u \in [0,1]}|II_u| \leq II_A + II_B$ with
\beas
II_A &=& n\sum_{j=1}^nC_{j,n}^{-1/2}\sup_{u \in[0,1]}|\Delta \epsilon_{j,uT_n}|(\mathbb{1}_{\{|\Delta J_{j,uT_n}^X| > \frac{a}{2} \Delta_n^{\overline{\omega}}\}} + \mathbb{1}_{\{|\Delta J_{j,uT_n}^Y| > \frac{a}{2} \Delta_n^{\overline{\omega}}\}}) \\ 
&\leq& nK \underbrace{\sum_{j=1}^n \sup_{u \in[0,1]}|\Delta \epsilon_{j,uT_n}|}_{=O_\proba(n\Delta _n^{1/2})} \sup_{j \in \{1,\ldots,n\}, u\in[0,1]} (\mathbb{1}_{\{|\Delta J_{j,uT_n}^X| > \frac{a}{2} \Delta_n^{\overline{\omega}}\}}+\mathbb{1}_{\{|\Delta J_{j,uT_n}^Y| > \frac{a}{2} \Delta_n^{\overline{\omega}}\}})\\
&=& O_\proba(n^2 \Delta_n^{1/2 + q(1/2 -\overline{\omega})}),
\eeas 
and 
\beas 
II_B &=& n\sum_{j=1}^n \sup_{u \in [0,1]}|\Delta \epsilon_{j,uT_n}| (\mathbb{1}_{\{|\Delta X_{j,uT_n}^c| > \frac{a}{2} \Delta_n^{\overline{\omega}}\}} + \mathbb{1}_{\{|\Delta Y_{j,uT_n}^c| > \frac{a}{2} \Delta_n^{\overline{\omega}}\}}) 
\eeas 
so that by Cauchy-Schwarz inequality we get  
\beas 
\esp II_B &\leq& n^2 \Delta_n^{1/2 + p_0(1/2-\overline{\omega})}.
\eeas
Taking $q\geq p_0$, this yields 
\beas 
II = O_\proba(n^2 \Delta_n^{1/2 + p_0(1/2- \overline{\omega})}) = o_\proba(1)
\eeas 
by assumption \textbf{[C]}.
\end{proof}

We are now ready to prove Theorem \ref{thmH1}.
\begin{proof}[Proof of Theorem \ref{thmH1} and Remark \ref{rmkThmH1}.]
We first prove the consistency of the OLS estimator under \textbf{[A]-[C]}. 
By definition, we have the representation
\bea \label{repC}
T_n^{-1/2}(\widehat{c}-c_0) = \frac{1}{n}\sum_{i=1}^n (\widetilde{\calt}(Y)_{\frac{t_i}{T_n}}^{def,n} - T_n^{-1/2}c_0-\alpha_0 \widetilde{\calt}(X)_{\frac{t_i}{T_n}}^{def,n}) - \frac{\widehat{\alpha}-\alpha_0}{n} \sum_{i=1}^n\widetilde{\calt}(X)_{\frac{t_i}{T_n}}^{def,n}
\eea 
and
\bea \label{repAlpha}
\widehat{\alpha} - \alpha_0 = \frac{f_n}{n^{-1}\sum_{i=1}^n(\widetilde{\calt}(X)_{\frac{t_i}{T_n}}^{def,n} -n^{-1}\sum_{i=1}^n\widetilde{\calt}(X)_{\frac{t_i}{T_n}}^{def,n} )^2  }
\eea 
with
\beas
f_n &=& n^{-1}\sum_{i=1}^n (\widetilde{\calt}(Y)_{\frac{t_i}{T_n}}^{def,n} - T_n^{-1/2}c_0-\alpha_0 \widetilde{\calt}(X)_{\frac{t_i}{T_n}}^{def,n})\widetilde{\calt}(X)_{\frac{t_i}{T_n}}^{def,n}\\&&-n^{-1}\sum_{i=1}^n (\widetilde{\calt}(Y)_{\frac{t_i}{T_n}}^{def,n} - T_n^{-1/2}c_0-\alpha_0 \widetilde{\calt}(X)_{\frac{t_i}{T_n}}^{def,n})\sum_{i=1}^n\widetilde{\calt}(X)_{\frac{t_i}{T_n}}^{def,n}. 
\eeas
By Lemma \ref{lemmaTruncationUCP} (combined with Cauchy-Schwarz inequality for the first term) yields
\beas 
f_n &=&  n^{-1}\sum_{i=1}^n \widetilde{\epsilon}_{\frac{t_i}{T_n}}^{def,n}\widetilde{\calt}(X)_{\frac{t_i}{T_n}}^{def,n}-n^{-1}\sum_{i=1}^n \widetilde{\epsilon}_{\frac{t_i}{T_n}}^{def,n} \sum_{i=1}^n\widetilde{\calt}(X)_{\frac{t_i}{T_n}}^{def,n} + o_\proba(1)\\
&=& n^{-1}g_n +o_\proba(1) \\
&=& o_\proba(1)
\eeas 
by Lemma \ref{lemmaEpsilonDef}. Since the denominator in (\ref{repAlpha}) is stochastically bounded by the continuous mapping theorem and Lemma  \ref{lemmaTruncationUCP}, \ref{lemmaLimitIntX}, \ref{lemmaRiemann}, we get $\widehat{\alpha} \to^ \proba \alpha_0$.  Repeating the same argument in (\ref{repC}) and using the consistency of $\widehat{\alpha}$, we immediately deduce $T_n^{-1/2}(\widehat{c} - c_0) \to^\proba 0$. Now we prove the central limit theorem under the additional condition \textbf{[D]}. Note that now, Lemma \ref{lemmaJumpsH1} (combined with Cauchy-Schwarz inequality for the first term) yields
$$ n f_n =  \sum_{i=1}^n \widetilde{\epsilon}_{\frac{t_i}{T_n}}^{def,n}\widetilde{\calt}(X)_{\frac{t_i}{T_n}}^{def,n}-\sum_{i=1}^n \widetilde{\epsilon}_{\frac{t_i}{T_n}}^{def,n} \sum_{i=1}^n\widetilde{\calt}(X)_{\frac{t_i}{T_n}}^{def,n} + o_\proba(1),$$
which in turn, easily yields 
$$ nf_n = g_n + o_\proba(1)$$
by Lemma \ref{lemmaTruncationUCP} along with the fact that under $\calh_1$, $\esp|\widetilde{\epsilon}_{\frac{t_i}{T_n}}^{def,n}|^{2} \leq K n^{-1}$. Similarly, by Lemma \ref{lemmaTruncationUCP}, \ref{lemmaLimitIntX}, \ref{lemmaRiemann} and the continuous mapping theorem, we have jointly with $(g_n, \calu_1^n, \calu_2^n)$
$$ n^{-1}\sum_{i=1}^n(\widetilde{\calt}(X)_{\frac{t_i}{T_n}}^{def,n} -n^{-1}\sum_{i=1}^n\widetilde{\calt}(X)_{\frac{t_i}{T_n}}^{def,n} )^2 \to^d \omega_{11}^{-1} \int_0^1 (B_u^1 -\overline{B}^1)^2du,$$
and by Lemma \ref{lemGn}, the convergence of distribution of $n(\widehat{\alpha} - \alpha_0)$ toward the claimed distribution readily follows. By similar arguments as for $\widehat{\alpha}$, we also have 
\beas 
nT_n^{-1/2}(\widehat{c} - c_0) = \sum_{i=1}^n \widetilde{\epsilon}_{\frac{t_i}{T_n}}^{def,n} - n(\widehat{\alpha} - \alpha_0)n^{-1}\sum_{i=1}^n \widetilde{X}_{\frac{t_i}{T_n}}^{c,def,n} +o_\proba(1).
\eeas 
By (\ref{estEpsilonQ}) and the convergence of the second component in (\ref{limitIandII}) we have jointly with $(g_n,\calu_1^n,\calu_2^n)$ that $ \sum_{j=1}^n \widetilde{\epsilon}_{\frac{t_j}{T_n}}^{def,n} \to^d (1-\rho)^{-1}\omega_{11}^{-1/2} B_1^2$, and so combined with lemmas \ref{lemmaLimitIntX}, \ref{lemmaRiemann}, \ref{lemGn}, along with the continuous mapping theorem, we deduce that jointly with $n(\widehat{\alpha} - \alpha_0)$, $nT_n^{-1/2}(\widehat{c} - c_0)$ converges toward the claimed distribution. Finally, reformulating the limit as a function of $W$ (using $B= LW$) and conditioning on the first component of $W$ yields the mixed normal representation derived in Remark \ref{rmkThmH1}.  

\end{proof}

\begin{proof}[Proof of Proposition \ref{propH1}]
Defining as in (\ref{estimRes}) for any $u \in [0,1]$ the scaled estimated residual 

\bea \label{representationR}  
r_{u}^n = \widetilde{\calt} (Y)_{u}^{def,n} -  T_n^{-1/2}\widehat{c} - \widehat{\alpha}\widetilde{\calt} (X)_{u}^{def,n},
\eea  
we easily get by the first part of Theorem \ref{thmH1}, and Lemma \ref{lemmaTruncationUCP} that
\bea  \label{convREpsilon}
\sup_{u \in[0,1]} |r_u^n - \widetilde{\epsilon}_u^{def,n}| = o_\proba(1).
\eea
We first derive an estimate for the numerator and the denominator of
$$\widehat{\phi} = \frac{\sum_{i=1}^n \Delta r_{\frac{t_i}{T_n}}^{n} r_{\frac{t_{i-1}}{T_n}}^{n}}{\sum_{i=1}^n \l( r_{\frac{t_i}{T_n}}^{n}\r)^2}.$$ 
Note that we have the identity
\beas 
2\sum_{i=1}^n \Delta r_{\frac{t_i}{T_n}}^{n} r_{\frac{t_{i-1}}{T_n}}^{n} = (r_{1}^n)^2 - (r_0^n)^2 - \sum_{i=1}^n  \l(\Delta r_{\frac{t_i}{T_n}}^{n}\r)^2
\eeas 
which, combined with (\ref{representationR}), (\ref{eqJump3}), (\ref{estEpsilonQ}) and \textbf{[C]} gives
\bea \label{identityQ} 
2\sum_{i=1}^n \Delta r_{\frac{t_i}{T_n}}^{n} r_{\frac{t_{i-1}}{T_n}}^{n} = (q_{1}^n)^2 - (q_0^n)^2 - \sum_{i=1}^n  \l(\Delta q_{\frac{t_i}{T_n}}^{n}\r)^2 + o_\proba(1),
\eea 
where we recall that 
\beas 
q_{\frac{t_i}{T_n}}^n = \sum_{j=1}^i \rho^{i-j} \Delta \widetilde{Z}_{\frac{t_j}{T_n}}^{def,n},
\eeas  
and $\Delta q_{\frac{t_i}{T_n}}^n = q_{\frac{t_i}{T_n}}^n-q_{\frac{t_{i-1}}{T_n}}^n$. From the above representation it is straightforward to check that $q_1^n \to^\proba 0$, $q_0^n = 0$,
\beas 
\sum_{i=1}^n   q_{\frac{t_i}{T_n}}^{n} q_{\frac{t_{i-1}}{T_n}}^{n} \to^\proba \rho(1-\rho^2)^{-1}\omega_{11}^{-1}\omega_{22},
\eeas
and
\bea \label{quadResidual2}  
\sum_{i=1}^{n} \l(q_{\frac{t_{i}}{T_n}}^{n}\r)^2 \to^\proba (1-\rho^2)^{-1}\omega_{11}^{-1}\omega_{22},
\eea
so that, using $(\Delta q_{\frac{t_i}{T_n}}^n)^2 = (  q_{\frac{t_i}{T_n}}^n)^2 + (  q_{\frac{t_{i-1}}{T_n}}^n)^2 -2q_{\frac{t_{i}}{T_n}}^nq_{\frac{t_{i-1}}{T_n}}^n$, we immediately get that  (\ref{identityQ}) yields
\bea \label{estimNumerator} 
\sum_{i=1}^n \Delta r_{\frac{t_i}{T_n}}^{n} r_{\frac{t_{i-1}}{T_n}}^{n} \to^\proba (\rho-1)(1-\rho^2)^{-1}\omega_{11}^{-1}\omega_{22} < 0.
\eea
Now, in general, unfortunately, (\ref{convREpsilon}) is not sufficient to get $ \sum_{i=1}^{n} \l(r_{\frac{t_{i}}{T_n}}^{n}\r)^2 = \sum_{i=1}^{n} \l(q_{\frac{t_{i}}{T_n}}^{n}\r)^2 +o_\proba(1)$. However, we do have by (\ref{estEpsilonQ}) and (\ref{convREpsilon}) the weaker estimate
\bea \label{estimateR2} 
\sum_{i=1}^{n} \l(r_{\frac{t_{i}}{T_n}}^{n}\r)^2 = \sum_{i=1}^{n} \l(q_{\frac{t_{i}}{T_n}}^{n}\r)^2 +o_\proba(n)  = o_\proba(n).
\eea 
Moreover, note that by definition of $\widehat{\phi}$
\bea \label{estimCrossTerm}
\sum_{i=1}^n \l(\Delta r_{\frac{t_{i}}{T_n}}^{n} - \widehat{\phi} r_{\frac{t_{i-1}}{T_n}}^{n}\r)^2 \leq  2\sum_{i=1}^n \l(\Delta r_{\frac{t_{i}}{T_n}}^{n}\r)^2 + 2\l(\sum_{i=1}^n \Delta r_{\frac{t_i}{T_n}}^{n} r_{\frac{t_{i-1}}{T_n}}^{n}\r)^2 = O_\proba(1)
\eea 
by the above calculations. Therefore, by (\ref{estimNumerator}), (\ref{estimateR2}), and (\ref{estimCrossTerm}),  
\beas  
\Psi^{-1} = n^{-1/2} \frac{\sqrt{\sum_{i=1}^n \l(\Delta r_{\frac{t_{i}}{T_n}}^{n} - \widehat{\phi} r_{\frac{t_{i-1}}{T_n}}^{n}\r)^2 \sum_{i=1}^n \l( r_{\frac{t_i}{T_n}}^{n}\r)^2}}{\sum_{i=1}^n \Delta r_{\frac{t_i}{T_n}}^{n} r_{\frac{t_{i-1}}{T_n}}^{n}} \to^\proba 0,
\eeas  
and moreover with probability tending to $1$ we have $\sum_{i=1}^n \Delta r_{\frac{t_i}{T_n}}^{n} r_{\frac{t_{i-1}}{T_n}}^{n} < 0$ by (\ref{estimNumerator}) so that $\Psi \to^\proba -\infty$ which proves the first part of the proposition. Now, under \textbf{[D]}, we easily get by Theorem \ref{thmH1} and Lemma \ref{lemmaJumpsH1} that
\beas  
\sup_{u \in[0,1]} |r_u^n - \widetilde{\epsilon}_u^{def,n}| = O_\proba(n^{-1}),
\eeas  
and combined with (\ref{estEpsilonQ}) and \textbf{[C]} this easily yields
\bea \label{quadResidual1} 
\nonumber \sum_{i=1}^{n-1}  \l(r_{\frac{t_{i}}{T_n}}^{n}\r)^2 &=& \sum_{i=1}^{n-1} \l(q_{\frac{t_{i}}{T_n}}^{n}\r)^2 + o_\proba(1),\\
&\to^\proba& (1-\rho^2)^{-1}\omega_{11}^{-1}\omega_{22},
\eea 
so that 
\bea \label{convPhiH1} 
\widehat{\phi} \to^\proba \rho-1.
\eea 
Moreover, following a similar path as before, we also deduce that 
\beas
\sum_{i=1}^n (\Delta r_{\frac{t_{i}}{T_n}}^{n} - \widehat{\phi} r_{\frac{t_{i-1}}{T_n}}^{n})^2 = \sum_{i=1}^n (\Delta \widetilde{Z}_{\frac{t_i}{T_n}}^{def,n})^2 + o_\proba(1) \to^\proba \omega_{11}^{-1}\omega_{22},
\eeas 
and thus 
\beas 
n^{1/2}s_{\widehat{\phi}} \to^\proba \sqrt{1-\rho^2}
\eeas 
and so
\beas 
\Psi\sim -n^{1/2} \sqrt{\frac{1-\rho}{1+\rho}}.
\eeas 
\end{proof}

Finally we prove the studentized version of the central limit theorem.

\begin{proof}[Proof of Proposition \ref{propStud}]
 
Using the notation introduced in (\ref{estimRes}), and by similar calculations as for (\ref{estimNumerator}), we have
\beas 
\sum_{i=2}^n \Delta \widehat{\epsilon}_i \widehat{\epsilon}_{i-2} = \sum_{i=2}^n \Delta r_{\frac{t_i}{T_n}}^{n} r_{\frac{t_{i-2}}{T_n}}^{n} \to^\proba \rho(\rho-1)(1-\rho^2)^{-1}\omega_{11}^{-1}\omega_{22},
\eeas 
which, along with (\ref{estimNumerator}) proves the consistency of $\widehat{\rho}$. Under \textbf{[D]}, the consistency of $\widehat{v}_\epsilon$ is a direct consequence of (\ref{quadResidual1}) and (\ref{quadResidual2}) from the proof of Proposition \ref{propH1}. Finally, the consistency of $\widehat{r}_{\infty}$ is easily obtained following the same line of reasoning as for the proof of Proposition \ref{propH1}. Now, by Lemma \ref{lemmaTruncationUCP}, \ref{lemmaLimitIntX}, \ref{lemmaRiemann} and the continuous mapping theorem, we immediately deduce that, jointly with $(n(\widehat{\alpha} - \alpha_0), nT_n^{-1/2}(\widehat{c} - c_0))$, we have 
\beas 
 (\overline{\calt(X)}^{def},I[\calt(X)^{def}],J[\calt(X)^{def}],K[\calt(X)^{def}]) \to^d (\overline{W}^1, I[W^1], J[W^1],K[W^1]), 
\eeas 
which, combined with the consistency of $\widehat{v}_\epsilon$, $\widehat{\rho}$, and $\widehat{r}_{\infty}$ along with Slutsky's Lemma and the continuous mapping theorem yields the claimed result.
\end{proof}

\bibliography{biblio/biblio}
 
\bibliographystyle{apalike} 

\end{document}